\documentclass[journal]{IEEEtran}
\usepackage[T1]{fontenc}

\ifCLASSINFOpdf
\else
   \usepackage[dvips]{graphicx}
\fi
\usepackage{url}
\usepackage{graphicx}
\usepackage{stfloats}
\usepackage{placeins}

\usepackage{enumitem}
\usepackage{xcolor} 

\usepackage{tikz}
\usetikzlibrary{
  plotmarks,
  arrows.meta,
  positioning,
  calc,
  patterns
}
\usepackage{bbm}
\usepackage{pgfplots}
\pgfplotsset{
  compat=1.18,
  releasecount xaxis/.style={
    xlabel={Number of releases $L$},
    xlabel near ticks,
  },
  unlab crb yaxis/.style={
    ylabel={$\mathrm{CRB}_{\mathrm{unlab}}(L)$},
    ymode=log,
    ylabel near ticks,
  },
  unlab crb yaxis fixed Lid/.style={
    ylabel={$\mathrm{CRB}_{\mathrm{unlab}}(L_{\mathrm{id}})$},
    ymode=log,
    ylabel near ticks,
  },
  labeled mse yaxis/.style={
    ylabel={$\mathrm{MSE}(x_j)$},
    ylabel near ticks,
  },
}
\usepgfplotslibrary{
  groupplots,
  dateplot,
  statistics,
  fillbetween,
  patchplots
}

\usepackage{pgfplotstable} 
\usepackage{siunitx}
\usepackage{booktabs}
\usepackage{bm}
\usepackage{cite}
\usepackage{amsmath,amssymb,amsfonts}
\usepackage{textcomp}

\usepackage{xspace}

\def\bG{{\bm{G}}}
\def\bg{{\bm{g}}}
\def\bJ{{\bm{J}}}

\def\bB{{\bm{B}}}

\def\bu{{\bm{u}}}

\usepackage{tikz}
\usepackage{algorithm}
\usepackage[noend]{algpseudocode}
\usepackage[normalem]{ulem}

\usepackage{amsmath,amsfonts,bm}
\usepackage{amssymb}
\DeclareMathOperator{\range}{range}

\usepackage{amsthm}
\theoremstyle{plain}
\newtheorem{remark}{Remark}
\newtheorem{lemma}{Lemma}

\newtheorem{proposition}{Proposition}
\newtheorem{corollary}{Corollary}

\newtheorem{theorem}{Theorem}

\def\1{\bm{1}}

\DeclareMathAlphabet{\mathsfit}{\encodingdefault}{\sfdefault}{m}{sl}
\SetMathAlphabet{\mathsfit}{bold}{\encodingdefault}{\sfdefault}{bx}{n}

\def\bq{{\bm{q}}}

\def\bx{{\bm{x}}}
\def\bw{{\bm{w}}}

\def\be{{\bm{e}}}

\def\by{{\bm{y}}}
\def\btheta{{\bm{\theta}}}
\def\bP{{\bm{P}}}
\def\bM{{\bm{M}}}

\newcommand{\R}{\mathbb{R}}

\newcommand{\Var}{\mathrm{Var}}

\newcommand{\tr}{\mathrm{tr}}

\DeclareMathOperator{\rank}{rank}

\newcommand{\bI}{\bm I}

\newcommand{\bSigma}{\bm{\Sigma}}

\newcommand{\bC}{\bm C}

\newcommand{\bA}{\bm A}

\newcommand{\bp}{\bm p}

\newcommand{\bW}{\bm W}

\usepackage[colorlinks=true,
            linkcolor=blue,
            citecolor=blue,
            urlcolor=blue]{hyperref}
\hypersetup{
  colorlinks=true,
  linkcolor=blue,
  citecolor=blue,
  urlcolor=blue
}
\newcommand{\appref}[2]{Appendix~\hyperref[#2]{\ref*{#1}}}

\begin{document}

\title{Reconstruction Limits for Repeated Differentially Private Aggregates:\\
A Cram\'er--Rao Perspective on Query Geometry}
\author{Chenyue Zhang, \IEEEmembership{Member, IEEE}, Andrew Campbell, \IEEEmembership{Student Member, IEEE}, Anna Scaglione, \IEEEmembership{Fellow, IEEE}, and Sean Peisert, \IEEEmembership{Senior Member, IEEE}%

\thanks{This work was supported by the U.S. Department of Energy, Office of Cybersecurity, Energy Security, and Emergency Response (CESER), through the Privacy-Preserving, Collective Cyberattack Defense of DERs Project under contract \mbox{DE-AC02-05CH11231}.}
\thanks{Chenyue Zhang, Andrew Campbell, and Anna Scaglione are with the Department of Electrical and Computer Engineering, Cornell University, Cornell Tech, New York, NY 10044, USA (e-mail: cz563@cornell.edu; ac2458@cornell.edu; as337@cornell.edu).}
\thanks{Sean Peisert is with Lawrence Berkeley National Laboratory, Berkeley, CA, USA (e-mail: sppeisert@lbl.gov).}%

}



\maketitle

\begin{abstract}
Repeated differentially private (DP) releases are often evaluated by transcript
length or cumulative privacy accounting. We show that these quantities do not
by themselves determine local reconstruction risk. For Gaussian-calibrated
repeated statistical queries, the key object is the nuisance-profiled Fisher
geometry of the release sequence: repetition helps only when new releases
create identifiable directions after nuisance variables are removed. Thus,
release geometry determines what can be locally identified, while the privacy
accountant determines how precisely those directions can be estimated.

We develop this principle in two settings. For labeled-target reconstruction
with fixed-background \textsf{IN/OUT} averages, repeated copies collapse to a
single target-versus-background contrast. The best linear unbiased estimator
attains the Cram\'er--Rao bound, and additional copies provide only averaging
gain; under Basic Composition this gain is dominated by the
$\Theta(L^2\log L)$ noise penalty, whereas zCDP/RDP-style Gaussian accounting
makes the risk order-flat. For static permutation-invariant releases, labels
remain unidentified, but feature diversity can make the sorted participating
multiset locally identifiable. For polynomial moments and smooth thresholds,
the useful number of releases is governed by the balance between newly exposed
eigendirections and accountant-induced noise growth. These results provide a
local, mechanism-specific benchmark for value leakage in repeated private
sensing and analytics.
\end{abstract}
\begin{IEEEkeywords}
Differential privacy, reconstruction attacks, aggregate queries,
privacy accounting, Fisher information, Cram\'er--Rao bounds
\end{IEEEkeywords}

\section{Introduction}
\label{sec:introduction}
Repeated differentially private (DP) aggregate releases appear in systems that report statistics over the same or overlapping populations over time, including federated-learning diagnostics, analytics dashboards, and continual-observation reporting~\cite{McMahanRamageTalwarZhang2018,Dwork2010Continual}. Under a fixed total privacy budget, an added release is not automatically a new reconstruction equation: it may only spend budget on another noisy copy of the same profiled contrast. The release count alone is therefore an unreliable proxy for reconstruction risk; what matters is which data directions the release sequence makes identifiable.

Existing analyses of repeated differentially private releases typically focus
either on composition accounting or on transcript accuracy. Our contribution
is orthogonal: for a fixed release family, we characterize which data
directions become locally identifiable after nuisance profiling, independently
of how privacy accounting scales the noise. This separates the geometric
effect of adding releases from the accountant-dependent scaling of
reconstruction precision. The resulting principle is that repeated releases
improve reconstruction only when they create new profiled identifiable
directions.

This distinction matters for numerical reconstruction. DP controls how an output distribution changes under neighboring datasets~\cite{dwork2006calibrating,dwork2014algorithmic,dong2022gaussian}, and composition theorems describe how privacy budgets accumulate across releases~\cite{balle2018improving,mironov2017renyi,KairouzOhViswanath2017}. Once a transcript has been observed, however, a separate estimation question remains: how accurately can an adversary infer a sensitive record value~\cite{WassermanZhou2010}? This issue arises in settings where the record values themselves encode sensitive behavior or state: mobility traces can identify individuals, electricity load profiles can reveal household routines or occupancy, and physiological or wearable signals can carry biometric and health information~\cite{deMontjoye2013,Asghar2017SmartMeterPrivacy,Chen2013Occupancy,Odinaka2012ECG}. Existing attacks and bounds similarly show that limiting membership leakage does not necessarily prevent accurate value reconstruction~\cite{TramerBoneh2020,Guo2022Bounding,zhu2019deep,geiping2020inverting,hayes2023bounding}.

\paragraph*{Scope of the analysis}
The Fisher-information and Cram\'er--Rao quantities studied in this paper are
local, mechanism-specific reconstruction benchmarks. They characterize the
identifiable directions induced by a fixed Gaussian-calibrated release family
after nuisance profiling. They are not global differential-privacy guarantees,
minimax risks, Bayesian risks, or guarantees for arbitrary adversaries.
Differential privacy controls distinguishability between neighboring datasets,
whereas our analysis studies local reconstruction accuracy under a specified
transcript model and operating point. Its role is to separate the design
question, ``what directions does the transcript expose?'', from the accounting
question, ``how much precision is assigned to those directions?''.

The central message is no new geometry, no new target direction. If both the membership set and the per-record feature map remain fixed, repeated permutation-invariant releases resample the same profiled observation direction. Averaging can reduce noise along that direction, but it cannot enlarge the target-distinguishing span. New geometry enters only through \emph{membership variation}, where different subsets create asymmetry for a labeled target, or through \emph{feature diversity}, where fixed participants are queried through changing feature maps. In the latter case, permutation invariance still prevents label recovery, so away from collisions the locally recoverable object is the sorted participating multiset.

This paper makes three contributions. First, it gives a nuisance-profiled Fisher framework that separates release geometry from accounting. Second, it solves the labeled-target fixed-background \textsf{IN/OUT} case: all copies reduce to one contrast, with risk $\kappa_L/L$. Third, it analyzes static permutation-invariant releases, where feature diversity can restore rank only for the sorted multiset and only while the weakest identifiable directions outpace $\kappa_L$. For design, the relevant quantity is not transcript length, but the number and strength of target-identifiable directions that survive nuisance profiling after privacy calibration.

The remainder of the paper is organized as follows. Section~\ref{sec:framework} introduces the observation model and incremental Fisher geometry. Section~\ref{sec:lever1} studies labeled-target reconstruction under subset asymmetry. Section~\ref{sec:lever2} studies unlabeled multiset reconstruction under feature diversity. Section~\ref{sec:experiments} reports numerical experiments, and Section~\ref{sec:conclusion} concludes.

\subsection{Related Work}
\label{subsec:related}

Classical reconstruction results show that sufficiently accurate aggregate answers can reveal substantial information about the underlying database~\cite{DinurNissim2003,dwork2007price,dwork2014algorithmic}, with related risks also observed in empirical census studies~\cite{Abowd2023census,Dick2023confidence}. Those works typically treat answer accuracy as fixed or externally specified. Here the release noise is coupled to the number of releases by a fixed DP budget and a specified privacy accountant.

Privacy accounting provides a second point of departure. Composition theorems for approximate DP, R\'enyi DP, zero-concentrated DP, and Gaussian DP translate a total privacy budget into per-release Gaussian noise in different ways~\cite{dwork2014algorithmic,KairouzOhViswanath2017,mironov2017renyi,BunSteinke2016,dong2022gaussian}. We do not build the theory around one accountant. Each accountant enters through the scalar penalty $\kappa_L$; the reconstruction question is how that penalty interacts with the local rank and conditioning of the transcript.

Recent work has emphasized that comparable DP guarantees need not imply comparable numerical reconstruction risk~\cite{Guo2022Bounding,hayes2023bounding}. Much of that literature studies trained-model release, gradient leakage, or learning pipelines. Our object is a repeated aggregate transcript, so the geometry of the query sequence becomes central. The analysis is also related to private workload design, the matrix mechanism, Fisher-information privacy design, and optimal experimental design~\cite{Li2015matrix,Xiao2021fitness,Farokhi2018,Pukelsheim2006,ChalonerVerdinelli1995}. Those lines typically choose a workload, noise covariance, or experiment to optimize an error criterion. Here the release design is fixed; we characterize the reconstruction consequences once the accountant fixes the calibration.

The fixed-membership case further connects to unlabeled sensing and permutation recovery~\cite{Unnikrishnan2018,Haghighatshoar2017,ZhangSlawskiLi2020}. Classical unlabeled sensing asks how measurement diversity resolves the permutation ambiguity. The DP setting adds an accounting constraint: even when feature diversity restores rank in the quotient space, each additional measurement also changes $\kappa_L$, so local identifiability and the optimal release count can separate.

\section{Estimation-Theoretic Framework}
\label{sec:framework}
\subsection{Observation Model}
\label{sec:obs_model}

We first formalize the local experiment induced by a repeated DP aggregate
transcript. An adversary observes all releases and estimates a parameter of
interest; all other degrees of freedom are treated as nuisance parameters.
To cover both scalar and signal-valued records, write
$\bx_k\in\R^p$, $k=1,\ldots,N$, and collect the dataset as
\[
\bx_{1:N}:=(\bx_1,\ldots,\bx_N)\in\R^{Np}.
\]
The scalar case is recovered by $p=1$.\footnote{For linear
coordinate-wise aggregates, the passage from $p=1$ to general $p$
amounts to a Kronecker lifting; \appref{app:framework}{app:block_jacobians}
records the corresponding block-Jacobian and lift formulas.}

Stacking the $L$ released answers as $\by\in\R^L$, we model the
observation as
\begin{equation}
\label{eq:obs_model}
\by
=
\bq(\bx_{1:N})+\bw,
\qquad
\bw\sim\mathcal N(\bm 0,\bSigma),
\quad
\bSigma\succ \bm 0,
\end{equation}
where $\bq:\R^{Np}\to\R^L$ is the stacked deterministic query map
and $\bSigma$ is the covariance induced by the privacy mechanism and its
associated accounting rule.\footnote{Throughout the Fisher-information
analysis, we restrict attention to release families for which $\bq$ is
continuously differentiable in a neighborhood of the operating point, so
that the Jacobian with respect to the underlying continuous parameters is
well defined.}

The adversary is assumed to know the query map $\bq$ and, when
applicable, the membership sets $\{C_\ell\}_{\ell=1}^L$. For the
explicit DP examples studied below, we use additive Gaussian
calibration and make the covariance $\bSigma$ explicit. Let
$\Delta_{2,\ell}$ denote the $\ell_2$-sensitivity of the query
$q_\ell$ under the relevant adjacency relation. For release $\ell$, we
adopt the classical sufficient Gaussian calibration
\begin{equation}
\label{eq:gaussian_calibration}
w_\ell\sim\mathcal N(0,\sigma_\ell^2),
\qquad
\sigma_\ell^2
=
2\ln(1.25/\delta_\ell)\,\Delta_{2,\ell}^2/\varepsilon_\ell^2.
\end{equation}
The sensitivity-normalized per-release precision satisfies
$\mathrm{SNR}_\ell:=\Delta_{2,\ell}^2/\sigma_\ell^2
=\varepsilon_\ell^2/(2\ln(1.25/\delta_\ell))$, so the privacy budget
controls SNR through $\varepsilon_\ell^2$.
We use this classical sufficient calibration because it gives an explicit
scalar noise factor. For larger per-release budgets, the same geometric
analysis applies with the analytic Gaussian calibration; only the
accountant-dependent scalar $\kappa_L$ changes.
Under \emph{uniform Basic Composition}, the per-release budgets satisfy
$\varepsilon_\ell=\varepsilon_{\mathrm{tot}}/L$ and
$\delta_\ell=\delta_{\mathrm{tot}}/L$. Substituting these values
into~\eqref{eq:gaussian_calibration} gives
\begin{equation}
\label{eq:kappaL_sigma_def}
\begin{aligned}
\sigma_\ell^2&=\kappa_L\,\Delta_{2,\ell}^2,\\
\kappa_L
&:=
2L^2\ln(1.25L/\delta_{\mathrm{tot}})/\varepsilon_{\mathrm{tot}}^2
=
\Theta(L^2\log L).
\end{aligned}
\end{equation}
Assuming independent Gaussian perturbations across releases, the stacked
observation noise in~\eqref{eq:obs_model} has diagonal
covariance
\begin{equation}
\label{eq:Sigma_explicit}
\bSigma
=
\kappa_L\,\mathrm{diag}\!\bigl(\Delta_{2,1}^2,\ldots,\Delta_{2,L}^2\bigr).
\end{equation}

Equation~\eqref{eq:kappaL_sigma_def} is one plug-in, not the theory itself.
Throughout, the Fisher identities are geometry statements: additional rows
matter only if they add identifiable directions after nuisance projection.
The accountant is inserted by choosing the scalar growth law for $\kappa_L$.
Uniform Basic Composition gives $\kappa_L=\Theta(L^2\log L)$; uniform
Gaussian zCDP/RDP-style scalar accounting gives $\kappa_L=\Theta(L)$.

The Fisher-geometry results below are stated for arbitrary
positive-definite $\bSigma$; Sections~\ref{sec:lever1} and~\ref{sec:lever2}
apply the setup to concrete repeated-release designs under the
Gaussian calibration above.\footnote{Independent Laplace location noise
admits the same Fisher geometry; see \appref{app:framework}{app:laplace_noise}.}

\subsection*{Adjacency and Sensitivity Conventions}
\label{sec:adjacency}

The release families in Sections~\ref{sec:lever1}--\ref{sec:lever2} use different adjacency conventions. Table~\ref{tab:adjacency} records these conventions, so the asymptotic statements can be read on a common sensitivity scale.

\begin{table}[t]
\centering
\begingroup
\scriptsize
\setlength{\tabcolsep}{2pt}
\renewcommand{\arraystretch}{1.18}
\begin{tabular}{@{}p{0.18\linewidth}p{0.16\linewidth}p{0.20\linewidth}p{0.13\linewidth}p{0.12\linewidth}@{}}
\toprule
Family & Domain & Adjacency & $\Delta_{2,\ell}$ & Regularity \\
\midrule
\textsf{IN/OUT} (§\ref{sec:lever1}) & $x\in\mathbb{R}^N$, $\|x\|_\infty\le\Delta/2$ & record-level radius $\Delta$ & $\Delta/|\mathcal{C}_\ell|$ & linear \\
Polynomial moments (§\ref{subsec:lever2_poly}) & $|z_k|\le R$ & fixed-cardinality swap & $\le 2R^\ell/m$ & $C^\infty$ \\
Threshold queries (§\ref{subsec:lever2_threshold}) & $|z_k|\le R$ & replacement/swap & $\le 2/m$ & $C^1$ logistic/bump \\
\bottomrule
\end{tabular}
\endgroup
\caption{Adjacency conventions used in the release families. The asymptotic rates depend on these choices only through scalar constants; switching to add/remove adjacency multiplies the relevant $\Delta_{2,\ell}$ by a uniform factor.}
\label{tab:adjacency}
\end{table}

The general framework below specializes into two recurring cases used
throughout the paper. Section~\ref{sec:lever1} studies labeled-target
reconstruction under membership variation, where repeated releases may
distinguish one target record from a nuisance background through changing
subset geometry. Section~\ref{sec:lever2} studies unlabeled multiset
reconstruction under static membership, where permutation invariance prevents
label recovery and feature diversity determines which sorted-coordinate
directions become identifiable. In both settings, the central question is
whether additional releases create new identifiable directions after nuisance
projection.

\subsection{Fisher Geometry and the Nuisance-Aware Benchmark}
\label{subsec:fim_crb}

The analysis is local and parameter-of-interest based. We do not fix
a scalar target at the outset. Instead, we use a smooth local reparameterization
\[
\bx_{1:N}\longmapsto (\theta,\eta),
\]
where $\theta=T(\bx_{1:N})\in\R^r$ is the parameter of interest and
$\eta\in\R^{Np-r}$ collects the remaining nuisance coordinates. The same
formulation covers both settings studied later: labeled-target
reconstruction in Section~\ref{sec:lever1} and unlabeled multiset
reconstruction in Section~\ref{sec:lever2}. In the latter case, the
parameter of interest is represented locally by sorted coordinates away
from collisions.

Let
\begin{equation}
\label{eq:J_def}
\bJ
:=
\frac{\partial \bq(\bx_{1:N})}{\partial(\theta,\eta)^\top}
=
\begin{bmatrix}
\bJ_\theta & \bJ_\eta
\end{bmatrix}
\in\R^{L\times Np}
\end{equation}
denote the Jacobian in these local coordinates, with
$\bJ_\theta\in\R^{L\times r}$ and
$\bJ_\eta\in\R^{L\times (Np-r)}$. Under the Gaussian
model~\eqref{eq:obs_model}, the Fisher information matrix (FIM) for
$(\theta,\eta)$ is
\begin{equation}
\label{eq:FIM}
\bI(\theta,\eta)
=
\bJ^\top\bSigma^{-1}\bJ
=
\begin{pmatrix}
\bI_{\theta\theta} & \bI_{\theta\eta}\\
\bI_{\eta\theta} & \bI_{\eta\eta}
\end{pmatrix}.
\end{equation}
The effective Fisher information for $\theta$ after profiling out the
nuisance coordinates is the Schur complement
\begin{equation}
\label{eq:Ieff_def}
\bI_{\mathrm{eff}}^{(\theta)}
:=
\bI_{\theta\theta}
-
\bI_{\theta\eta}\,\bI_{\eta\eta}^{\dagger}\,\bI_{\eta\theta},
\end{equation}
where $(\cdot)^\dagger$ denotes the Moore--Penrose pseudoinverse. This is the local information quantity that organizes the rest of the
paper.

In whitened coordinates the same object has a simple geometric form.
Define the whitened Jacobian
\begin{equation}
\label{eq:whitened_J}
\widetilde{\bJ}
:=
\bSigma^{-1/2}\bJ
=
\begin{bmatrix}
\widetilde{\bJ}_\theta & \widetilde{\bJ}_\eta
\end{bmatrix},
\end{equation}
so that $\bI(\theta,\eta)=\widetilde{\bJ}^\top\widetilde{\bJ}$ is a
standard Gram matrix.

The nuisance-profiled Fisher information also admits the projector form
\begin{equation}
\label{eq:Ieff_projector}
\bI_{\mathrm{eff}}^{(\theta)}
=
\widetilde{\bJ}_\theta^\top
(\bI-\bP_\eta)
\widetilde{\bJ}_\theta,
\qquad
\bP_\eta:=\widetilde{\bJ}_\eta\widetilde{\bJ}_\eta^\dagger,
\end{equation}
where $\bP_\eta$ is the orthogonal projector onto
$\range(\widetilde{\bJ}_\eta)$. This follows by substituting
$\bI=\widetilde{\bJ}^\top\widetilde{\bJ}$ into~\eqref{eq:Ieff_def} and
using the Moore--Penrose identity
$\bA(\bA^\top\bA)^\dagger\bA^\top=\bA\bA^\dagger$; details are provided
in \appref{app:framework}{app:projector_variational}.

Equations~\eqref{eq:Ieff_def} and~\eqref{eq:Ieff_projector} give two
equivalent views of the same local information quantity. The
Schur-complement form is convenient for closed-form CRB calculations in
Sections~\ref{sec:lever1} and~\ref{sec:lever2}; the projector form makes
the identifiability condition explicit. Positive effective information
requires the footprint of $\theta$ in the whitened observation space to
have a nonzero component outside the nuisance span.

Whenever $\bI_{\mathrm{eff}}^{(\theta)}\succ \bm 0$, any locally
unbiased estimator $\hat\theta$ satisfies the vector Cram\'er--Rao bound
\begin{equation}
\label{eq:crb_vector}
\mathrm{Cov}(\hat\theta)\succeq
\bigl(\bI_{\mathrm{eff}}^{(\theta)}\bigr)^{-1}.
\end{equation}
For vector-valued parameters of interest we use the trace benchmark
\begin{equation}
\label{eq:crb_trace}
\mathrm{CRB}_\theta
:=
\operatorname{tr}\!\left(
\bigl(\bI_{\mathrm{eff}}^{(\theta)}\bigr)^{-1}
\right),
\end{equation}
in the scalar case $\theta=x_j$ we simply write
\begin{equation}
\label{eq:crb_scalar}
\mathrm{CRB}_j
=
\bigl(\bI_{\mathrm{eff}}^{(\theta)}\bigr)^{-1}.
\end{equation}
If $\bI_{\mathrm{eff}}^{(\theta)}$ is singular, then some directions of
$\theta$ are structurally non-identifiable from the release sequence
and the corresponding local benchmark diverges.

The labeled-target setting of Section~\ref{sec:lever1} uses this
framework with $r=1$ in the scalar case and with $r=p$ under the
coordinate-wise Kronecker lift for vector-valued records. For the unlabeled
multiset analysis of Section~\ref{sec:lever2}, we work with scalar records
so that the participating multiset admits a clean local sorted-coordinate
chart with $r=m$; \appref{app:framework}{app:block_jacobians} records the
vector-valued lift.

\subsubsection*{From the local CRB to operational risk}
Throughout the paper, the CRB serves as a local, nuisance-profiled benchmark for locally unbiased estimators. It is not the finite-noise risk of a constrained MLE/GLS estimator, nor is it a global Bayesian or minimax risk; nor is it the global DP guarantee itself. Section~\ref{sec:experiments} compares the local benchmark with finite-noise estimators in cases where the comparison is interpretable.

\subsection{Incremental Release Geometry}
\label{subsec:spine}

We next ask what one independent release can add to the profiled Fisher
information when the precision weights of the existing rows are held
fixed. The resulting identity exposes the identifiable-direction
criterion. Under a fixed total privacy budget, however, comparing
different values of $L$ also rescales all rows through $\kappa_L$. The
fixed-budget risks in Sections~\ref{sec:lever1}--\ref{sec:lever2}
combine row-level geometric gain with the composition-induced loss of
per-release precision.

Suppose the first $L$ releases induce the block Fisher matrix
\[
\bI^{(L)}
=
\begin{pmatrix}
\bI_{\theta\theta}^{(L)} & \bI_{\theta\eta}^{(L)}\\
\bI_{\eta\theta}^{(L)} & \bI_{\eta\eta}^{(L)}
\end{pmatrix},
\]
with
\[
\bI_{\mathrm{eff}}^{(\theta),L}
=
\bI_{\theta\theta}^{(L)}
-
\bI_{\theta\eta}^{(L)}
\bigl(\bI_{\eta\eta}^{(L)}\bigr)^\dagger
\bI_{\eta\theta}^{(L)}.
\]
Adding one more release appends a new whitened row
\[
\tilde{\bg}_{L+1}
=
\begin{bmatrix}
a_{L+1}^\top & b_{L+1}^\top
\end{bmatrix},
\]
where $a_{L+1}\in\R^r$ and $b_{L+1}\in\R^{Np-r}$ are the components
along the $\theta$- and $\eta$-blocks, respectively. When
$\bI_{\eta\eta}^{(L)}\succ\bm 0$, define the profiled increment
\begin{equation}
\label{eq:contrast_def}
d_{L+1}
:=
a_{L+1}
-
\bI_{\theta\eta}^{(L)}
\bigl(\bI_{\eta\eta}^{(L)}\bigr)^{-1}
b_{L+1}.
\end{equation}

The following result gives the single-row update obtained by appending
one additional scalar-output release to the stacked
model~\eqref{eq:obs_model}. Multi-output releases append multiple rows and
yield the corresponding higher-rank block updates.

\begin{theorem}
\label{thm:spine}
Under the Gaussian observation model~\eqref{eq:obs_model}, adding one
independent release gives the unconditional monotonicity
\[
\bI_{\mathrm{eff}}^{(\theta),L+1}
\succeq
\bI_{\mathrm{eff}}^{(\theta),L}.
\]
If $\bI_{\eta\eta}^{(L)}\succ\bm 0$, the exact rank-one update is
\begin{equation}
\label{eq:spine_update}
\bI_{\mathrm{eff}}^{(\theta),L+1}
=
\bI_{\mathrm{eff}}^{(\theta),L}
+
\frac{d_{L+1}d_{L+1}^\top}
{1+b_{L+1}^\top
\bigl(\bI_{\eta\eta}^{(L)}\bigr)^{-1}
b_{L+1}},
\end{equation}
and hence
\begin{equation}
\label{eq:spine_iff}
\bI_{\mathrm{eff}}^{(\theta),L+1}
=
\bI_{\mathrm{eff}}^{(\theta),L}
\quad\Longleftrightarrow\quad
d_{L+1}=\bm 0.
\end{equation}
\end{theorem}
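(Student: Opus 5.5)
The plan has three steps. First, write the block Fisher update and handle the nondegenerate case by block elimination. Second, derive the equivalence~\eqref{eq:spine_iff} from the rank-one form. Third, obtain unconditional monotonicity from the variational characterization of the Schur complement, which avoids assuming invertibility.

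\emph{The update.} Because the new release is independent, its Fisher contribution is additive. We therefore have
\[
\bI^{(L+1)}=\bI^{(L)}+\tilde{\bg}_{L+1}^\top\tilde{\bg}_{L+1}.
\]
Blockwise this reads
\[
\bI_{\theta\theta}^{(L+1)}=\bI_{\theta\theta}^{(L)}+aa^\top,\qquad \bI_{\theta\eta}^{(L+1)}=\bI_{\theta\eta}^{(L)}+ab^\top,\qquad \bI_{\eta\eta}^{(L+1)}=\bI_{\eta\eta}^{(L)}+bb^\top,
\]
where we drop the subscript $L+1$ on $a$ and $b$.

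\emph{Rank-one formula.} Assume $\bI_{\eta\eta}^{(L)}\succ\bm 0$ and write $B:=\bI_{\eta\eta}^{(L)}$, $C:=\bI_{\theta\eta}^{(L)}$ and $s:=b^\top B^{-1}b$. Since $B+bb^\top$ is invertible, the Sherman--Morrison formula gives
\[
(B+bb^\top)^{-1}=B^{-1}-\frac{B^{-1}bb^\top B^{-1}}{1+s}.
\]
Substitute this into
\[
\bI_{\mathrm{eff}}^{(\theta),L+1}=\bI_{\theta\theta}^{(L)}+aa^\top-(C+ab^\top)(B+bb^\top)^{-1}(C+ab^\top)^\top
\]
and expand. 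With $u:=CB^{-1}b$ one has $d=a-u$. The terms that are independent of $a$ and $b$ reproduce $\bI_{\mathrm{eff}}^{(\theta),L}$. The remaining terms are $aa^\top$ minus the expansion of the second factor, which should collapse to $\frac{(a-u)(a-u)^\top}{1+s}$.

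The main obstacle is carrying out this algebra cleanly. The coefficients to verify are:
\begin{itemize}
\item on $aa^\top$: $1-s+\frac{s^2}{1+s}=\frac{1}{1+s}$;
\item on $au^\top+ua^\top$: $-1+\frac{s}{1+s}=-\frac{1}{1+s}$;
\item on $uu^\top$: $\frac{1}{1+s}$.
\end{itemize}
A safer alternative is a whitened orthogonal-decomposition argument. Split the new row's $\theta$-footprint into its part predicted from the nuisance row and the residual $d$. Then invoke the standard fact that appending a scalar Gaussian observation updates the information on $(\theta,\eta)$ by a rank-one term, and compute the marginal information for $\theta$ by block Gaussian elimination.

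\emph{Equivalence~\eqref{eq:spine_iff}.} This follows immediately from the rank-one form. The denominator $1+s$ is at least $1$, so $dd^\top/(1+s)=\bm 0$ if and only if $d=\bm 0$.

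\emph{Unconditional monotonicity.} This must hold even when $\bI_{\eta\eta}^{(L)}$ is singular. Use the variational identity for a PSD block matrix,
\[
v^\top\bI_{\mathrm{eff}}^{(\theta)}v=\min_{w}\,(v,w)^\top\bI\,(v,w),
\]
which is valid with the pseudoinverse Schur complement. The needed condition $\range(\bI_{\eta\theta})\subseteq\range(\bI_{\eta\eta})$ holds automatically for Gram matrices, and this is where the projector form~\eqref{eq:Ieff_projector} and the appendix identity enter. For every $(v,w)$,
\[
(v,w)^\top\bI^{(L+1)}(v,w)\ge(v,w)^\top\bI^{(L)}(v,w),
\]
and taking the minimum over $w$ on both sides gives $\bI_{\mathrm{eff}}^{(\theta),L+1}\succeq\bI_{\mathrm{eff}}^{(\theta),L}$.

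Equivalently, in projector form: appending a row enlarges the whitened space. The quantity $\|(\bI-\bP_\eta)\widetilde{\bJ}_\theta v\|^2$ is the least-squares residual $\min_w\|\widetilde{\bJ}_\theta v+\widetilde{\bJ}_\eta w\|^2$. Each candidate $w$ only gains a nonnegative extra squared term from the new row, so the minimum cannot decrease.
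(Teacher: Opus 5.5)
Your proposal is correct and follows essentially the same route as the paper's proof. Both use the additive block update of the Fisher matrix, Sherman--Morrison with $u=CB^{-1}b$ and $s=b^\top B^{-1}b$ (your coefficient checks, e.g.\ $1-s+s^2/(1+s)=1/(1+s)$, are exactly the paper's common-denominator step), the equivalence from the strictly positive denominator, and monotonicity from the variational characterization of the pseudoinverse Schur complement. The only cosmetic difference is that you minimize the quadratic form direction by direction over $w$, whereas the paper compares Loewner minimizers $V_L^\star$ and $V_{L+1}^\star$ over matrices; the two are equivalent.
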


Thus additional releases matter only through the component of the new
whitened release gradient that escapes the existing nuisance-profiled span.
Repeating the same profiled direction can improve conditioning along that
direction, but cannot enlarge the identifiable subspace.

\begin{proof}[Proof sketch]
Use the variational Schur-complement form of the profiled information.
Appending a whitened row adds a positive semidefinite quadratic term;
when the nuisance block is invertible, Sherman--Morrison gives
\eqref{eq:spine_update}. The full derivation is provided in
\appref{app:framework}{app:spine_rankone}.
\end{proof}

To see where such new directions can come from, write each release in
the form
\begin{equation}
\label{eq:deepsets_general}
q_\ell(\bx_{1:N})
=
\rho_\ell\!\left(\sum_{k\in C_\ell}\phi_\ell(\bx_k)\right),
\qquad
s_\ell:=\sum_{k\in C_\ell}\phi_\ell(\bx_k),
\end{equation}
with membership set $C_\ell$, per-record feature map
$\phi_\ell:\R^p\to\R^{d_\ell}$, and post-aggregation map
$\rho_\ell:\R^{d_\ell}\to\R$. The chain rule gives
\begin{equation}
\label{eq:deepsets_grad}
\frac{\partial q_\ell(\bx_{1:N})}{\partial \bx_k^\top}
=
\mathbb I(k\in C_\ell)\,
J_{\rho_\ell}(s_\ell)\,
J_{\phi_\ell}(\bx_k).
\end{equation}
Equation~\eqref{eq:deepsets_general} places the release family within the
standard class of permutation-invariant statistical queries represented by
Deep Sets--type decompositions~\cite{Zaheer2017DeepSets}. In this language,
the paper studies how repeated noisy statistical queries generate identifiable
directions after nuisance profiling under privacy-constrained precision
scaling. The chain-rule factorization in~\eqref{eq:deepsets_grad} shows that
each release contributes two pieces of geometry: an inclusion pattern and a
per-record gain profile. The technical results below analyze linear subset
averages, polynomial moments, and smooth threshold queries. Discrete histograms
and exact categorical counts fall outside the smooth Fisher setting unless
they are smoothed or analyzed with a different information bound.

Under~\eqref{eq:deepsets_general}--\eqref{eq:deepsets_grad}, changes in
the whitened release geometry can arise through two sources:
\begin{enumerate}[label=(\roman*),leftmargin=2em]
\item \emph{membership variation}: changes in the inclusion masks
$\{\mathbb I(k\in C_\ell)\}$ across releases;
\item \emph{feature diversity}: changes in $\phi_\ell$ and/or in the
local gain induced by $\rho_\ell$ across releases.
\end{enumerate}
Without either source of variation, repeated releases leave the identifiable
span unchanged; in the fully static case, this is seen directly from the
Jacobian factorization below.

In the fully static case, suppose $C_\ell\equiv C$, $\phi_\ell\equiv\phi$,
and $\rho_\ell\equiv\rho$ for all $\ell=1,\ldots,L$. Let
\[
s:=\sum_{k\in C}\phi(\bx_k),
\]
and let $\widetilde{\bJ}_k\in\R^{L\times p}$ denote the block of the
whitened Jacobian corresponding to record $\bx_k$. Then for every
participating record $k\in C$,
\begin{equation}
\label{eq:J_column_static}
\begin{aligned}
\widetilde{\bJ}_k
&=
\mathbb I(k\in C)\,
\widetilde{\bG}_{1:L}\,
J_\phi(\bx_k),\\
\widetilde{\bG}_{1:L}
&:=
\bSigma^{-1/2}\,
\mathrm{col}\!\bigl(J_\rho(s),\ldots,J_\rho(s)\bigr).
\end{aligned}
\end{equation}
All participating Jacobian blocks share the left factor
$\widetilde{\bG}_{1:L}$, which has rank at most one (exactly one when $J_\rho(s)\neq\bm 0$).
Repeated releases are then confined to a single observation-space
direction. They may sharpen estimation along that direction; the
identifiable subspace itself is unchanged.

\section{Labeled-Target Reconstruction under Subset Asymmetry}
\label{sec:lever1}

We first study reconstruction of a designated labeled record. Membership
variation can separate this target from a fixed nuisance background; the
question is whether repeated subset averages create new target-identifying
directions or only remeasure an existing profiled contrast. For scalar records
$(p=1)$, linear subset aggregates, and $\eta=x_{-j}$, we write
\[
\bI_{\mathrm{eff}}^{(j)}
:=
\bI_{\mathrm{eff}}^{(\theta)}
\]
evaluated at $\theta=x_j$.

\subsection{Linear Membership Design}
\label{subsec:lever1_linear}

For linear subset aggregates, stacking $L$ releases yields the
linear--Gaussian model
\begin{equation}
\label{eq:lever1_linear_model}
\by=\bA\bx+\bw,
\qquad
\bw\sim\mathcal N(\bm 0,\bSigma),
\quad
\bSigma\succ 0,
\end{equation}
where $\bx=(x_1,\ldots,x_N)^\top\in\R^N$, $\by\in\R^L$, and
$\bA\in\R^{L\times N}$ is the membership-induced design matrix.
We restrict attention to scalar records; the vector-valued extension
follows by the Kronecker lifting summarized in
\appref{app:framework}{app:block_jacobians}.
For subset averages,
\begin{equation}
\label{eq:lever1_subavg_H}
A_{\ell k}=\frac{1}{|C_\ell|}\,\mathbb I(k\in C_\ell).
\end{equation}
In the scalar subset-average case, the general Jacobian in Section~\ref{sec:framework} reduces to $\bJ=\bA$.

Let $\widetilde{\bA}:=\bSigma^{-1/2}\bA$ and partition
$\widetilde{\bA}=[\widetilde{\bA}_j\ \widetilde{\bA}_{-j}]$.
Then the identifiability condition becomes
\begin{equation}
\label{eq:lever1_ident_cond}
\widetilde{\bA}_j\notin\range(\widetilde{\bA}_{-j})
\quad\Longleftrightarrow\quad
\rank(\widetilde{\bA})>\rank(\widetilde{\bA}_{-j}).
\end{equation}
Geometrically, reconstruction of $x_j$ requires the target membership signature to leave a component outside the whitened nuisance span.

All subsequent quantities depend on $(\bA,\bSigma)$ only through the Gram matrix
\begin{equation}
\label{eq:lever1_gram}
\bG:=\bA^\top\bSigma^{-1}\bA=\widetilde{\bA}^\top\widetilde{\bA}.
\end{equation}
In this linear--Gaussian setting, the BLUE attains the nuisance-profiled
Cram\'er--Rao bound exactly; \appref{app:framework}{app:linear_blue_identity}
records the standard calculation. Consequently, the quantities denoted $\mathrm{MSE}^*$ in this
section are attainable oracle risks, not merely information-theoretic lower
bounds:
\begin{equation}
\label{eq:mse_optimal_s4}
\mathrm{MSE}^*(x_j)
=
\be_j^\top(\bA^\top\bSigma^{-1}\bA)^\dagger \be_j.
\end{equation}

Once initial identifiability is established, the remaining issue is whether
repeated membership-only releases create more than one profiled
target-background direction. We analyze this at the level of the restricted
Gram matrix.

\subsection{Nuisance-Symmetric One-Contrast Geometry}
\label{subsec:lever1_exchangeable}

The limitation of repeated membership-only releases can be stated before
choosing a particular release family. Let $C=\{j\}\cup B$ be a membership
block of size $m$, and reorder the coordinates so that the target index
comes first.

\appref{app:lever1}{app:one_contrast_gram} gives sufficient release-level
conditions for the symmetry class below. Here we work directly at the Gram level.

\begin{lemma}[One-contrast geometry under subset asymmetry]
\label{thm:lever1_onecontrast}
Suppose the restricted Gram matrix $\bG_C$ is invariant under all
permutations of the nuisance coordinates in $B$. Equivalently, it has
the block form
\begin{equation}
\label{eq:lever1_blocksym_gram}
\bG_C
=
\begin{bmatrix}
a & c\,\mathbf 1^\top\\
c\,\mathbf 1 & (d-e)\bI_{m-1}+e\,\mathbf 1\mathbf 1^\top
\end{bmatrix}
\end{equation}
for scalars $a,c,d,e$, corresponding respectively to target
self-information, target-background coupling, nuisance diagonal
information, and nuisance off-diagonal coupling. Let
\[
\bB:=(d-e)\bI_{m-1}+e\,\mathbf 1\mathbf 1^\top.
\]
If $\bB\succ 0$, then the profiled Fisher information for the target
coordinate is
\begin{equation}
\label{eq:lever1_onecontrast_ieff}
\bI_{\mathrm{eff}}^{(j)}
=
a-c^2\,\mathbf 1^\top \bB^{-1}\mathbf 1
=
a-c^2\frac{m-1}{d+(m-2)e}.
\end{equation}
\end{lemma}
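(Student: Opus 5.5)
The plan is to apply the Schur-complement definition \eqref{eq:Ieff_def} directly to the restricted Gram matrix, with $\theta=x_j$ and $\eta$ the background coordinates in $B$. First I would justify restricting to $\bG_C$. Records outside $C$ are taken to decouple from the block, in the sense that the off-block entries of $\bG$ linking $C$ to its complement vanish. This is the reading of the statement's phrase ``restricted Gram matrix'', and it is what the release-level conditions in \appref{app:lever1}{app:one_contrast_gram} supply. Under this decoupling, the nuisance block of the full Fisher matrix is block diagonal, and the Schur complement with respect to it reduces to the Schur complement within $\bG_C$.

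The second step is the equivalence between invariance and the block form \eqref{eq:lever1_blocksym_gram}. Let $\Pi$ be any permutation matrix acting on $B$. Invariance means $\Pi^\top \bG_C \Pi=\bG_C$. Applied to the target row, this forces all entries of the row to be equal, which gives $c\,\mathbf 1^\top$. Applied to the nuisance block, it forces a common diagonal value $d$ and a common off-diagonal value $e$, because transpositions act transitively on ordered pairs of distinct indices. The converse is immediate, since any matrix of that form is fixed by every such $\Pi$.

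With $\bB\succ 0$, the pseudoinverse in \eqref{eq:Ieff_def} is an ordinary inverse, so
\[
\bI_{\mathrm{eff}}^{(j)}=a-(c\mathbf 1^\top)\bB^{-1}(c\mathbf 1)=a-c^2\,\mathbf 1^\top\bB^{-1}\mathbf 1 .
\]
The main computational step is evaluating $\mathbf 1^\top\bB^{-1}\mathbf 1$. Rather than invoking Sherman--Morrison in full, I would note that $\mathbf 1$ is an eigenvector of $\bB$, since $\bB\mathbf 1=(d-e)\mathbf 1+e(m-1)\mathbf 1=(d+(m-2)e)\mathbf 1$. Because $\bB\succ 0$, this eigenvalue is positive, and therefore $\bB^{-1}\mathbf 1=\mathbf 1/(d+(m-2)e)$. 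It follows that $\mathbf 1^\top\bB^{-1}\mathbf 1=(m-1)/(d+(m-2)e)$, which yields \eqref{eq:lever1_onecontrast_ieff}.

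The main obstacle is not the algebra but the reduction from the full Gram matrix to $\bG_C$. If records outside $C$ couple to $C$, the profiled information is the Schur complement of the full nuisance block, not of $\bB$ alone. In that case I would state the lemma as conditional on the decoupling, which is the implicit content of ``restricted''. Alternatively, I would replace $\bG_C$ by the Schur complement of $\bG$ after first profiling out the outside coordinates, and assume that this profiled matrix is the one with the symmetric block form.
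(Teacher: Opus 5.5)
Your proposal is correct and matches the paper's proof: both take the Schur complement of the nuisance block $\bB$ in $\bG_C$ and then evaluate $\mathbf 1^\top\bB^{-1}\mathbf 1$. The only difference is in that last step: the paper writes out the Sherman--Morrison inverse of $\bB$, while you use the eigenvector identity $\bB\mathbf 1=(d+(m-2)e)\mathbf 1$, which is slightly cleaner because it never divides by $d-e$ (this matters only when $m=2$, where $e$ does not enter $\bG_C$). Your decoupling discussion is harmless but unnecessary, since the lemma treats $\bG_C$ itself as the Fisher matrix of $(x_j,x_B)$; also, the appendix's release-level conditions supply the permutation symmetry, not decoupling from records outside $C$.
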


\begin{proof}[Proof sketch]
The target-to-nuisance coupling is proportional to $\mathbf 1$, so
nuisance profiling depends only on the symmetric background average.
\appref{app:lever1}{app:one_contrast_gram} gives the block-inversion
calculation.
\end{proof}

\begin{remark}[Scope and release-level symmetry of the one-contrast claim]
\label{rem:lever1_onecontrast_scope}
Lemma~\ref{thm:lever1_onecontrast} is a Gram-level symmetry statement. A
sufficient release-level condition is exchangeability of the whitened
restricted Jacobian over the background set: for every permutation $\pi$ of
$B$, there is an orthogonal row transformation $U_\pi$ preserving the whitened
transcript such that $U_\pi\widetilde{\bJ}_j=\widetilde{\bJ}_j$ and
$U_\pi\widetilde{\bJ}_k=\widetilde{\bJ}_{\pi(k)}$ for $k\in B$, with a nonsingular
nuisance Gram block. Then all target--background inner products, background
diagonal entries, and background off-diagonal entries are constant, which
implies~\eqref{eq:lever1_blocksym_gram}. Heterogeneous record settings may
violate this exchangeability; the one-contrast conclusion is tied to this
idealized symmetry and should not be inferred for arbitrary membership
variation.
\end{remark}

The target couples to the nuisance coordinates through the one-dimensional
average direction $\mathrm{span}\{\mathbf 1\}$. Nuisance directions orthogonal
to $\mathbf 1$ do not enter the Schur complement. Repeated releases in this
class remeasure one profiled target-versus-background contrast; the family of
target-distinguishing directions does not grow with the transcript.

For illustration, in the fully exchangeable case $a=d=\alpha$ and
$c=e=\tau$, so that
\[
\bG_C=(\alpha-\tau)\bI_m+\tau\,\mathbf 1\mathbf 1^\top,
\qquad
\alpha>\tau\ge 0,
\]
the profiled information reduces to
\begin{equation}
\label{eq:ieff_exchangeable}
\bI_{\mathrm{eff}}^{(j)}
=
\frac{(\alpha-\tau)\bigl(\alpha+\tau(m-1)\bigr)}
{\alpha+\tau(m-2)}.
\end{equation}
This exchangeable formula is illustrative. The \textsf{IN/OUT} family below
shows the same one-contrast effect only after the background coordinates are
reduced to their aggregate sum. If the individual background coordinates are
kept as separate nuisance variables, the full $N$-dimensional nuisance block
is singular because all releases depend on the background only through
$\sum_{k\in B}x_k$. Lemma~\ref{thm:lever1_onecontrast} gives the Gram-level
geometry; the exact \textsf{IN/OUT} risk is computed independently in the
reduced identifiable coordinates
$(\theta_1,\theta_2)=(x_j,\sum_{k\in B}x_k)$.

\subsection{Repeated \textsf{IN/OUT} as a Canonical Instance}
\label{subsec:lever1_inout}

Fix a target index $j$ and a background set
$B\subseteq [N]\setminus\{j\}$ with $|B|=n$. Define the two
subset-average queries
\begin{equation}
\label{eq:q_in_fixed_background}
q_{\mathrm{in}}(\bx)=(n+1)^{-1}\Bigl(x_j+\sum_{k\in B}x_k\Bigr),
\end{equation}
\begin{equation}
\label{eq:q_out_fixed_background}
q_{\mathrm{out}}(\bx)=n^{-1}\sum_{k\in B}x_k.
\end{equation}
Across $L$ releases, suppose $L_{\mathrm{in}}$ copies of $q_{\mathrm{in}}$
and $L_{\mathrm{out}}$ copies of $q_{\mathrm{out}}$ are released, where
\[
L=L_{\mathrm{in}}+L_{\mathrm{out}},
\qquad
L_{\mathrm{in}},L_{\mathrm{out}}\ge 1,
\]
and assume independent Gaussian perturbations within each block with variances
$\sigma_{\mathrm{in}}^2$ and $\sigma_{\mathrm{out}}^2$.

Let $\bar y_{\mathrm{in}}$ and $\bar y_{\mathrm{out}}$ denote the empirical means
of the \textsf{IN} and \textsf{OUT} blocks:
\begin{equation}
\label{eq:block_means_fixed_background}
\bar y_{\mathrm{in}}
:=
L_{\mathrm{in}}^{-1}\sum_{\ell\in\mathcal I_{\mathrm{in}}} y_\ell,
\qquad
\bar y_{\mathrm{out}}
:=
L_{\mathrm{out}}^{-1}\sum_{\ell\in\mathcal I_{\mathrm{out}}} y_\ell.
\end{equation}
Since $q_{\mathrm{in}}$ and $q_{\mathrm{out}}$ depend on
$(x_j,\{x_k\}_{k\in B})$ only through the pair
$(x_j,\sum_{k\in B}x_k)$, the full $N$-dimensional model depends on
$\bx$ only through
\begin{equation}
\label{eq:theta_fixed_background}
\theta_1:=x_j,
\qquad
\theta_2:=\sum_{k\in B}x_k.
\end{equation}
Here $\theta_1$ is the target itself. Because
$L_{\mathrm{in}},L_{\mathrm{out}}\ge1$, the two mean rows are linearly
independent, so $x_j$ is identifiable in the reduced experiment. Without an
\textsf{OUT} block, $x_j$ is confounded with the background sum; without an
\textsf{IN} block, the target is absent. The resulting two-parameter
Gaussian experiment is
\begin{equation}
\label{eq:reduced_model_fixed_background}
\bar y_{\mathrm{in}}
=
(n+1)^{-1}(\theta_1+\theta_2)+\bar w_{\mathrm{in}},
\qquad
\bar y_{\mathrm{out}}
=
n^{-1}\theta_2+\bar w_{\mathrm{out}},
\end{equation}
with independent noises
\begin{equation}
\label{eq:reduced_noise_variances}
\bar w_{\mathrm{in}}\sim\mathcal N\!\bigl(0,\sigma_{\mathrm{in}}^2/L_{\mathrm{in}}\bigr),
\qquad
\bar w_{\mathrm{out}}\sim\mathcal N\!\bigl(0,\sigma_{\mathrm{out}}^2/L_{\mathrm{out}}\bigr).
\end{equation}

Here $\theta_1=x_j$, and the reduced design matrix has rank two iff both
\textsf{IN} and \textsf{OUT} blocks are present; with only \textsf{IN}
releases, $x_j$ is confounded with the background sum $\theta_2$.

Thus the repeated \textsf{IN/OUT} family reduces to a two-parameter
Gaussian experiment in $(\theta_1,\theta_2)$. The block means
$\bar y_{\mathrm{in}}$ and $\bar y_{\mathrm{out}}$ are sufficient for
estimating $\theta_1$ in the presence of the nuisance parameter
$\theta_2$; the sufficiency calculation is provided in
\appref{app:lever1}{app:lever1_fixed_background_limit}.

\begin{proposition}
\label{prop:fixed_background_crb}
In the reduced model~\eqref{eq:reduced_model_fixed_background}--\eqref{eq:reduced_noise_variances},
the minimum achievable MSE for unbiased estimation of $\theta_1=x_j$ is
\begin{equation}
\label{eq:crb_fixed_background}
\mathrm{MSE}^*(x_j)
=
(n+1)^2\sigma_{\mathrm{in}}^2/L_{\mathrm{in}}
+
n^2\sigma_{\mathrm{out}}^2/L_{\mathrm{out}},
\end{equation}
and is attained by the BLUE
\begin{equation}
\label{eq:blue_fixed_background}
\hat x_j=(n+1)\bar y_{\mathrm{in}}-n\bar y_{\mathrm{out}}.
\end{equation}
For the continuous relaxation
\[
L_{\mathrm{in}},L_{\mathrm{out}}>0,
\qquad
L_{\mathrm{in}}+L_{\mathrm{out}}=L,
\]
the optimal split satisfies
\begin{equation}
\label{eq:opt_split_s4}
L_{\mathrm{in}}/L_{\mathrm{out}}
=
\bigl((n+1)/n\bigr)\,\sigma_{\mathrm{in}}/\sigma_{\mathrm{out}},
\end{equation}
with minimized value
\begin{equation}
\label{eq:mse_inout_opt_s4}
\min_{\substack{L_{\mathrm{in}},L_{\mathrm{out}}>0\\
L_{\mathrm{in}}+L_{\mathrm{out}}=L}}\mathrm{MSE}^*(x_j)
=
L^{-1}\Bigl((n+1)\sigma_{\mathrm{in}}+n\sigma_{\mathrm{out}}\Bigr)^2.
\end{equation}
If release counts are required to be integers, the optimal integer split
is obtained by the nearest feasible rounding around this continuous
ratio.
\end{proposition}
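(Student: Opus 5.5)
The reduced model \eqref{eq:reduced_model_fixed_background}--\eqref{eq:reduced_noise_variances} is a two-parameter linear--Gaussian experiment $\bar\by=\bM\btheta+\bar\bw$. Here $\bar\by=(\bar y_{\mathrm{in}},\bar y_{\mathrm{out}})^\top$ and
\[
\bM=\begin{bmatrix}(n+1)^{-1} & (n+1)^{-1}\\ 0 & n^{-1}\end{bmatrix},
\qquad
\bar\bSigma=\mathrm{diag}\bigl(\sigma_{\mathrm{in}}^2/L_{\mathrm{in}},\,\sigma_{\mathrm{out}}^2/L_{\mathrm{out}}\bigr).
\]
Since $L_{\mathrm{in}},L_{\mathrm{out}}\ge1$, $\bM$ is square and invertible. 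The full Fisher matrix $\bM^\top\bar\bSigma^{-1}\bM$ is therefore nonsingular, and the profiled benchmark equals the $(1,1)$ entry of its inverse. The plan is to compute this quantity directly and then invoke the BLUE/CRB identity of \appref{app:framework}{app:linear_blue_identity} for attainment. Because the block means are sufficient (as noted before the proposition), the reduced model loses no information relative to the full transcript.

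For the first claim I use invertibility:
\[
(\bM^\top\bar\bSigma^{-1}\bM)^{-1}=\bM^{-1}\bar\bSigma\,\bM^{-\top}.
\]
Hence the benchmark is $\be_1^\top\bM^{-1}\bar\bSigma\bM^{-\top}\be_1=\bu^\top\bar\bSigma\bu$, where $\bu^\top:=\be_1^\top\bM^{-1}$. Solving $\bM\btheta=\bar\by$ gives $\theta_2=n\bar y_{\mathrm{out}}$ and $\theta_1=(n+1)\bar y_{\mathrm{in}}-n\bar y_{\mathrm{out}}$, so $\bu=(n+1,\,-n)^\top$. Substituting yields \eqref{eq:crb_fixed_background} at once.

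The estimator \eqref{eq:blue_fixed_background} is exactly $\bu^\top\bar\by$. It is unbiased because $\bu^\top\bM=\be_1^\top$. Its variance, by independence of the block noises, equals the same expression. It therefore attains the Cram\'er--Rao bound \eqref{eq:crb_vector}, so no unbiased estimator does better and the bound is the minimum MSE. Because $\bM$ is square and invertible, this estimator is also the unique linear unbiased one, hence the BLUE.

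For the allocation, I minimize
\[
f(L_{\mathrm{in}})=\frac{\alpha^2}{L_{\mathrm{in}}}+\frac{\beta^2}{L-L_{\mathrm{in}}},
\qquad
\alpha:=(n+1)\sigma_{\mathrm{in}},\quad \beta:=n\sigma_{\mathrm{out}},
\]
over $(0,L)$. The function is strictly convex, and the stationarity condition $\alpha^2/L_{\mathrm{in}}^2=\beta^2/L_{\mathrm{out}}^2$ gives $L_{\mathrm{in}}/L_{\mathrm{out}}=\alpha/\beta$, which is \eqref{eq:opt_split_s4}. Substituting $L_{\mathrm{in}}=L\alpha/(\alpha+\beta)$ and $L_{\mathrm{out}}=L\beta/(\alpha+\beta)$ gives the value $(\alpha+\beta)^2/L$, which is \eqref{eq:mse_inout_opt_s4}. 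Equivalently, Cauchy--Schwarz gives
\[
(\alpha+\beta)^2\le\Bigl(\frac{\alpha^2}{L_{\mathrm{in}}}+\frac{\beta^2}{L_{\mathrm{out}}}\Bigr)(L_{\mathrm{in}}+L_{\mathrm{out}}),
\]
with equality exactly at that ratio.

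The integer statement follows from convexity of $f$ on the integers $1,\dots,L-1$. Its minimizer is one of the two integers bracketing the continuous optimum, which one selects by comparing $f$ at those two values. This rounding step is the only point that needs care, since "nearest rounding" should be read as checking both neighbours rather than literal rounding. Otherwise I expect no real obstacle: the key observations are that $\bM$ is invertible and that the benchmark reduces to the variance of a fixed linear contrast.
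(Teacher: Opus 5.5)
Your proof is correct and follows essentially the same route as the paper. The paper takes the Schur complement $a-a^2/(a+b)=ab/(a+b)$ of the reduced Fisher matrix, with $a=L_{\mathrm{in}}/((n+1)^2\sigma_{\mathrm{in}}^2)$ and $b=L_{\mathrm{out}}/(n^2\sigma_{\mathrm{out}}^2)$, whereas you read the same quantity off $\bM^{-1}\bar\bSigma\bM^{-\top}$; both then check that the unique linear unbiased combination $(n+1)\bar y_{\mathrm{in}}-n\bar y_{\mathrm{out}}$ of the block means attains the bound, and use Cauchy--Schwarz for the split.

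Your reading of the integer claim is the right one, and it is what the paper's strict-convexity justification actually supports: compare $f$ at the floor and ceiling of the continuous optimum, clamped to $\{1,\dots,L-1\}$. Literal nearest rounding can pick the wrong neighbour here, because your $f$ is not symmetric about its minimizer.
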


\begin{proof}[Proof sketch]
The block means are sufficient for the two Gaussian blocks. In the
resulting two-parameter linear model, the Schur complement gives
\eqref{eq:crb_fixed_background}, and
\eqref{eq:blue_fixed_background} attains it. The continuous split
follows from Cauchy--Schwarz. Details are in
\appref{app:lever1}{app:fixed_background_crb_blue}.
\end{proof}

Equation~\eqref{eq:crb_fixed_background} matches the one-contrast picture
motivated by Lemma~\ref{thm:lever1_onecontrast}. Its algebra is computed in
the reduced experiment, since the full-background nuisance block is singular.
The \textsf{OUT} query improves estimation of $\theta_2$; it adds no second
target-distinguishing direction.

\subsection{Fixed-Budget Scaling under Uniform Splitting}
\label{subsec:lever1_uniform}

Under a fixed total privacy budget, recall from
\eqref{eq:kappaL_sigma_def} that uniform splitting gives
\[
\varepsilon_\ell=\varepsilon_{\mathrm{tot}}/L,
\qquad
\delta_\ell=\delta_{\mathrm{tot}}/L,
\qquad
\sigma_\ell^2=\kappa_L\,\Delta_{2,\ell}^2.
\]
For the repeated \textsf{IN/OUT} family, the two subset averages have
sensitivities
\[
\Delta_{2,\mathrm{in}}=\Delta/(n+1),
\qquad
\Delta_{2,\mathrm{out}}=\Delta/n,
\]
and hence
\begin{equation}
\label{eq:uniform_gaussian_calibration_fixed_background}
\sigma_{\mathrm{in}}^2=\kappa_L\,\Delta^2/(n+1)^2,
\qquad
\sigma_{\mathrm{out}}^2=\kappa_L\,\Delta^2/n^2.
\end{equation}
This calibration is for record-level adjacency of each released query, not
only for perturbations of the target coordinate. The \textsf{OUT} block
consumes privacy budget and enters the target risk through nuisance
estimation.

Substituting~\eqref{eq:uniform_gaussian_calibration_fixed_background}
into~\eqref{eq:crb_fixed_background} yields
\begin{equation}
\label{eq:dp_to_risk_fixed_total_closed_s4}
\mathrm{MSE}^*(x_j)
=
2\ln(1.25L/\delta_{\mathrm{tot}})
\bigl(L/\varepsilon_{\mathrm{tot}}\bigr)^2
\Delta^2
\bigl(L_{\mathrm{in}}^{-1}+L_{\mathrm{out}}^{-1}\bigr),
\end{equation}
where the dependence on $n$ vanishes.\footnote{The factors $(n+1)$ and
$n$ in the BLUE coefficients cancel the sensitivities $\Delta/(n+1)$
and $\Delta/n$ of the corresponding releases, so the fixed-budget target
risk depends on release counts and privacy allocation, not on block
size.}

\begin{proposition}
\label{prop:lever1_fixed_background_limit}
Consider the repeated \textsf{IN/OUT} family under uniform Basic Composition
with total budget $(\varepsilon_{\mathrm{tot}},\delta_{\mathrm{tot}})$. Then any
unbiased estimator of $x_j$ satisfies
\begin{equation}
\label{eq:fixed_background_lower_bound}
\mathrm{MSE}(\hat x_j)
\ge
8L\ln(1.25L/\delta_{\mathrm{tot}})\Delta^2/\varepsilon_{\mathrm{tot}}^2.
\end{equation}
If $L$ is even, equality is attained by the balanced split
\[
L_{\mathrm{in}}=L_{\mathrm{out}}=L/2
\]
together with~\eqref{eq:blue_fixed_background} (for odd $L$, the nearest
balanced integer split changes the bound by the factor
$L^2/(L^2-1)$).
\end{proposition}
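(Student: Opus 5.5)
The plan is to combine Proposition~\ref{prop:fixed_background_crb} with the fixed-budget calibration~\eqref{eq:uniform_gaussian_calibration_fixed_background}, and then minimize over the split.

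\textbf{Step 1: risk for a fixed split.} For any admissible split $L_{\mathrm{in}}+L_{\mathrm{out}}=L$ with $L_{\mathrm{in}},L_{\mathrm{out}}\ge1$, Proposition~\ref{prop:fixed_background_crb} says that every unbiased estimator has MSE at least $\mathrm{MSE}^*(x_j)$. Under uniform Basic Composition this quantity equals the closed form~\eqref{eq:dp_to_risk_fixed_total_closed_s4}:
\[
\mathrm{MSE}^*(x_j)=2\ln(1.25L/\delta_{\mathrm{tot}})\,(L/\varepsilon_{\mathrm{tot}})^2\,\Delta^2\,\bigl(L_{\mathrm{in}}^{-1}+L_{\mathrm{out}}^{-1}\bigr).
\]
The lower bound is legitimate in the full $N$-dimensional model, not only in the reduced one. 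The reason is that the full model depends on $\bx$ only through $(\theta_1,\theta_2)$, and the block means are sufficient (as noted before Proposition~\ref{prop:fixed_background_crb}). Hence the profiled information for $x_j$ in the full model coincides with that of the reduced experiment, and the bound transfers to any unbiased estimator built from the full transcript.

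\textbf{Step 2: minimize over the split.} I would bound the split-dependent factor below. By AM--HM (equivalently Cauchy--Schwarz, or the optimal ratio~\eqref{eq:opt_split_s4}, which is $1$ here because $(n+1)\sigma_{\mathrm{in}}=n\sigma_{\mathrm{out}}$),
\[
L_{\mathrm{in}}^{-1}+L_{\mathrm{out}}^{-1}\ge \frac{4}{L_{\mathrm{in}}+L_{\mathrm{out}}}=\frac{4}{L},
\]
with equality iff $L_{\mathrm{in}}=L_{\mathrm{out}}=L/2$. Substituting gives
\[
2\ln(1.25L/\delta_{\mathrm{tot}})\,\frac{L^2}{\varepsilon_{\mathrm{tot}}^2}\,\Delta^2\cdot\frac{4}{L}=\frac{8L\ln(1.25L/\delta_{\mathrm{tot}})\,\Delta^2}{\varepsilon_{\mathrm{tot}}^2},
\]
which is~\eqref{eq:fixed_background_lower_bound}. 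The same value also follows from~\eqref{eq:mse_inout_opt_s4} with $(n+1)\sigma_{\mathrm{in}}=n\sigma_{\mathrm{out}}=\sqrt{\kappa_L}\,\Delta$.

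\textbf{Step 3: attainment and parity.} For even $L$, take the balanced split $L_{\mathrm{in}}=L_{\mathrm{out}}=L/2$ and the BLUE~\eqref{eq:blue_fixed_background}; Proposition~\ref{prop:fixed_background_crb} says it attains the bound. For odd $L$, take $L_{\mathrm{in}},L_{\mathrm{out}}\in\{(L\pm1)/2\}$. Then
\[
L_{\mathrm{in}}^{-1}+L_{\mathrm{out}}^{-1}=\frac{2}{L-1}+\frac{2}{L+1}=\frac{4L}{L^2-1}=\frac{4}{L}\cdot\frac{L^2}{L^2-1},
\]
which gives the stated factor $L^2/(L^2-1)$. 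This is the best integer split, since $1/u+1/(L-u)$ is convex and symmetric about $u=L/2$.

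\textbf{Expected obstacle.} The only delicate point is the scope of ``any unbiased estimator''. The bound must be quantified over all designs in the family, so it needs the minimum over splits and not just the bound for one fixed split; Step 2 supplies this. It must also hold in the full model, where the nuisance block is singular, and I would handle that through the sufficiency/reparameterization argument in Step 1. The algebra itself is routine.
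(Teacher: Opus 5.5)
Your proposal is correct and matches the paper's proof: substitute the Basic-Composition calibration into the risk formula of Proposition~\ref{prop:fixed_background_crb}, use $1/L_{\mathrm{in}}+1/L_{\mathrm{out}}\ge 4/L$ with equality at the balanced split, and cite the BLUE for attainability. Your odd-$L$ computation and your sufficiency argument for moving the bound from the reduced model to the full model are consistent with the paper's appendix reduction.
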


\begin{proof}[Proof sketch]
Substitute~\eqref{eq:uniform_gaussian_calibration_fixed_background}
into~\eqref{eq:crb_fixed_background} and use
$1/L_{\mathrm{in}}+1/L_{\mathrm{out}}\ge 4/L$, with equality at the
balanced split. Attainability follows from
Proposition~\ref{prop:fixed_background_crb}.
\end{proof}

The risk in~\eqref{eq:fixed_background_lower_bound} grows as
$\Theta(L\log L)$. The $1/L$ averaging gain is dominated by the
$L^2\log L$ increase in per-release noise variance under Basic Composition.
For this fixed-background membership-only family, copying releases increases
the optimal unbiased reconstruction risk under Basic Composition.

\paragraph{No-new-geometry implication}
For the balanced fixed-background design, the solved risk can be written as
\begin{equation}
\label{eq:one_contrast_kappa_over_L}
\mathrm{MSE}^*(x_j)=4\Delta^2\kappa_L/L.
\end{equation}
The geometry supplies only the $1/L$ averaging factor; the accountant supplies
$\kappa_L$. Hence the sharp invariant is not that Basic Composition is the
only admissible accounting rule, but that copies are not a new contrast.
Table~\ref{tab:accountant_scaling_main} gives the two standard plug-ins.
For Gaussian zCDP, $\rho_\ell=\Delta^2/(2\sigma_\ell^2)$, so uniform splitting
of a fixed total $\rho_{\rm tot}$ gives $\kappa_L=\Theta(L)$; fixed-order RDP
has the same linear scaling up to constants.

\begin{table}[t]
\caption{Two accountant plug-ins for the same one-contrast geometry.}
\label{tab:accountant_scaling_main}
\centering
\scriptsize
\begin{tabular}{@{}p{0.28\columnwidth}@{\hspace{2pt}}p{0.22\columnwidth}@{\hspace{2pt}}p{0.21\columnwidth}@{\hspace{2pt}}p{0.19\columnwidth}@{}}
\hline
Accountant & $\kappa_L$ & risk $\kappa_L/L$ & effect of copies\\
\hline
Basic Composition & $\Theta(L^2\log L)$ & $\Theta(L\log L)$ & worse\\
linear zCDP/RDP & $\Theta(L)$ & $\Theta(1)$ & flat\\
\hline
\end{tabular}
\end{table}

Non-uniform allocation changes constants and yields a cube-root allocation
law. The one-contrast limitation remains.

\subsection{Non-Uniform Allocation}
\label{subsec:lever1_nonuniform_alloc}

Use the same repeated \textsf{IN/OUT} family, now with block-specific
per-release budgets $(\varepsilon_{\mathrm{in}},\delta_{\mathrm{in}})$ and
$(\varepsilon_{\mathrm{out}},\delta_{\mathrm{out}})$.
Under Basic Composition,
\begin{equation}
\label{eq:inout_basic_comp}
L_{\mathrm{in}}\varepsilon_{\mathrm{in}}
+
L_{\mathrm{out}}\varepsilon_{\mathrm{out}}
=
\varepsilon_{\mathrm{tot}},
\qquad
L_{\mathrm{in}}\delta_{\mathrm{in}}
+
L_{\mathrm{out}}\delta_{\mathrm{out}}
\le
\delta_{\mathrm{tot}}.
\end{equation}

\begin{proposition}
\label{prop:lever1_nonuniform}
For this repeated \textsf{IN/OUT} family, write
$c_{\mathrm{in}}:=\ln(1.25/\delta_{\mathrm{in}})$ and
$c_{\mathrm{out}}:=\ln(1.25/\delta_{\mathrm{out}})$. The minimum
achievable MSE satisfies
\begin{equation}
\label{eq:mse_inout_eps_delta}
\mathrm{MSE}^*(x_j)
=
2\Delta^2\left(
c_{\mathrm{in}}L_{\mathrm{in}}^{-1}\varepsilon_{\mathrm{in}}^{-2}
+
c_{\mathrm{out}}L_{\mathrm{out}}^{-1}\varepsilon_{\mathrm{out}}^{-2}
\right),
\end{equation}
which is independent of $n$.
Let $u:=L_{\mathrm{in}}\varepsilon_{\mathrm{in}}$ and
$v:=L_{\mathrm{out}}\varepsilon_{\mathrm{out}}$.
For fixed $(L_{\mathrm{in}},L_{\mathrm{out}})$ and fixed
$(\delta_{\mathrm{in}},\delta_{\mathrm{out}})$, the optimal
$\varepsilon$-allocation satisfies
\begin{equation}
\label{eq:opt_u_v_ratio}
u/v
=
\bigl(c_{\mathrm{in}}L_{\mathrm{in}}/
      (c_{\mathrm{out}}L_{\mathrm{out}})\bigr)^{1/3},
\end{equation}
Under symmetric calibration and joint comparison over release counts
with fixed total failure budget $\delta_{\mathrm{tot}}$, the global
minimum is attained at $L_{\mathrm{in}}=L_{\mathrm{out}}=1$.
\end{proposition}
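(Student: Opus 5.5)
The plan is to derive everything from the closed-form risk of Proposition~\ref{prop:fixed_background_crb}, inserting the block-specific Gaussian calibration~\eqref{eq:gaussian_calibration}. With the record-level sensitivities $\Delta_{2,\mathrm{in}}=\Delta/(n+1)$ and $\Delta_{2,\mathrm{out}}=\Delta/n$, the calibration gives
\[
\sigma_{\mathrm{in}}^2=\frac{2c_{\mathrm{in}}\Delta^2}{(n+1)^2\varepsilon_{\mathrm{in}}^2},
\qquad
\sigma_{\mathrm{out}}^2=\frac{2c_{\mathrm{out}}\Delta^2}{n^2\varepsilon_{\mathrm{out}}^2}.
\]
Substituting into~\eqref{eq:crb_fixed_background}, the factors $(n+1)^2$ and $n^2$ cancel exactly, as in the footnote after~\eqref{eq:dp_to_risk_fixed_total_closed_s4}. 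This yields~\eqref{eq:mse_inout_eps_delta}, and independence of $n$ is immediate. Attainability comes for free from the BLUE~\eqref{eq:blue_fixed_background}.

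For the $\varepsilon$-allocation, I rewrite each term via $u,v$. Since $L_{\mathrm{in}}^{-1}\varepsilon_{\mathrm{in}}^{-2}=L_{\mathrm{in}}/u^2$, the risk is
\[
2\Delta^2\bigl(\alpha u^{-2}+\beta v^{-2}\bigr),
\qquad
\alpha:=c_{\mathrm{in}}L_{\mathrm{in}},\quad \beta:=c_{\mathrm{out}}L_{\mathrm{out}}.
\]
It is to be minimized over $u,v>0$ subject to the Basic Composition constraint $u+v=\varepsilon_{\mathrm{tot}}$ from~\eqref{eq:inout_basic_comp}. The objective is strictly convex on this segment and blows up at both endpoints, so the unique stationary point is the minimizer. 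The Lagrange condition $\alpha/u^3=\beta/v^3$ gives~\eqref{eq:opt_u_v_ratio}. Plugging back in (or applying H\"older directly) gives the minimized value
\[
2\Delta^2\bigl(\alpha^{1/3}+\beta^{1/3}\bigr)^3/\varepsilon_{\mathrm{tot}}^2,
\]
which I will record for the last step.

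For the global comparison, I interpret ``symmetric calibration'' as $\delta_{\mathrm{in}}=\delta_{\mathrm{out}}=\delta$. First, note the $\delta$-constraint may be taken tight, since $c$ decreases in $\delta$; this gives $\delta=\delta_{\mathrm{tot}}/L$. Hence $c_{\mathrm{in}}=c_{\mathrm{out}}=c_L:=\ln(1.25L/\delta_{\mathrm{tot}})$, and the optimized risk becomes
\[
2\Delta^2 c_L\bigl(L_{\mathrm{in}}^{1/3}+L_{\mathrm{out}}^{1/3}\bigr)^3/\varepsilon_{\mathrm{tot}}^2.
\]
Both factors are nondecreasing in each of $L_{\mathrm{in}}$ and $L_{\mathrm{out}}$: $c_L$ because $L=L_{\mathrm{in}}+L_{\mathrm{out}}$ grows, and the cube because each cube root grows. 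The constraint $L_{\mathrm{in}},L_{\mathrm{out}}\ge1$ therefore places the minimum at $L_{\mathrm{in}}=L_{\mathrm{out}}=1$, with value $16\Delta^2\ln(2.5/\delta_{\mathrm{tot}})/\varepsilon_{\mathrm{tot}}^2$.

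The main obstacle is the justification in this last step that the $\delta$-constraint binds and is split symmetrically. I plan to handle it by restricting to the symmetric class as stated, and noting that any slack in $\delta$ only increases $c$ and hence the risk. The remaining steps are routine algebra.
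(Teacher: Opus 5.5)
Your proposal is correct and follows the paper's proof essentially step for step. You substitute the block-specific Gaussian calibration into the closed-form CRB so that the $(n+1)^2$ and $n^2$ factors cancel, rewrite the risk in $(u,v)$ and use the first-order condition $c_{\mathrm{in}}L_{\mathrm{in}}/u^3=c_{\mathrm{out}}L_{\mathrm{out}}/v^3$ to obtain the cube-root rule and the optimized value $2\Delta^2(\cdot)^3/\varepsilon_{\mathrm{tot}}^2$, and then set $\delta_{\mathrm{in}}=\delta_{\mathrm{out}}=\delta_{\mathrm{tot}}/L$ and use that both $c$ and $(L_{\mathrm{in}}^{1/3}+L_{\mathrm{out}}^{1/3})^3$ are nondecreasing. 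Two of your remarks slightly tighten points the paper only asserts: strict convexity on the segment, and that the $\delta$-constraint should bind within the symmetric class because $c$ decreases in $\delta$ (the paper simply picks that feasible allocation).
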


\begin{proof}[Proof sketch]
With $u=L_{\mathrm{in}}\varepsilon_{\mathrm{in}}$ and
$v=L_{\mathrm{out}}\varepsilon_{\mathrm{out}}$, the KKT condition gives
$c_{\mathrm{in}}L_{\mathrm{in}}/u^3=c_{\mathrm{out}}L_{\mathrm{out}}/v^3$.
The fixed-$\delta_{\mathrm{tot}}$ release-count comparison is monotone
as described below. Full details are in
\appref{app:lever1}{app:lever1_nonuniform}.
\end{proof}

The independence of~\eqref{eq:mse_inout_eps_delta} from $n$ is the same
sensitivity-cancellation effect noted above. Under symmetric calibration,
the optimized risk takes the form

\begin{equation}
\label{eq:mse_inout_opt_eps_symmetric}
\min_{\varepsilon}\mathrm{MSE}^*(x_j)
=
2c\,\Delta^2/\varepsilon_{\mathrm{tot}}^{\,2}
\big(L_{\mathrm{in}}^{1/3}+L_{\mathrm{out}}^{1/3}\big)^3,
\end{equation}
where $c:=\ln(1.25/\delta)$.
For the joint comparison over release counts under a fixed total failure
budget $\delta_{\mathrm{tot}}$, use a feasible symmetric allocation
\[
\delta_{\mathrm{in}}=\delta_{\mathrm{out}}
=\delta_{\mathrm{tot}}/(L_{\mathrm{in}}+L_{\mathrm{out}}).
\]
Then $c$ becomes
\[
c(L_{\mathrm{in}},L_{\mathrm{out}})
=
\ln\!\bigl(1.25(L_{\mathrm{in}}+L_{\mathrm{out}})/\delta_{\mathrm{tot}}\bigr),
\]
which is nondecreasing in each release count. More explicitly, for every
integer $L_{\mathrm{in}},L_{\mathrm{out}}\ge 1$,
\[
c(L_{\mathrm{in}},L_{\mathrm{out}})\ge c(1,1),
\qquad
\big(L_{\mathrm{in}}^{1/3}+L_{\mathrm{out}}^{1/3}\big)^3\ge 8.
\]
Combining these two inequalities in
\eqref{eq:mse_inout_opt_eps_symmetric} gives
\[
\min_{\varepsilon}\mathrm{MSE}^*(x_j;L_{\mathrm{in}},L_{\mathrm{out}})
\ge
\min_{\varepsilon}\mathrm{MSE}^*(x_j;1,1),
\]
which proves that the global minimum is attained at
$L_{\mathrm{in}}=L_{\mathrm{out}}=1$. The cube-root law mainly confirms the
one-contrast conclusion; it is not a separate design principle.

\section{Unlabeled Multiset Reconstruction under Static Membership and Feature Diversity}
\label{sec:lever2}

We now fix the membership set. Subset asymmetry is unavailable, so a labeled
target cannot be identified from membership changes. The release sequence may
still vary through its per-record feature maps. Because the releases are
permutation invariant over the participating records, labels are unidentified;
away from collisions, the local target is the sorted participating multiset.
Feature diversity helps only if the weakest identifiable directions grow
faster than the accounting penalty.

\paragraph{Sorted quotient parameter and local benchmark}
Fix the membership set $C$ across all $L$ releases and suppose that each
release function $q_\ell$ is permutation-invariant over~$C$, i.e.,
$q_\ell(\bx_C)=q_\ell(\pi\bx_C)$ for every $\pi\in\mathfrak{S}_m$ and
$\ell=1,\dots,L$. Assume moreover that the participating values
$\{x_k\}_{k\in C}$ are pairwise distinct, and let
\[
z:=\bx_C^\uparrow=(z_1,\dots,z_m)
\in
\mathcal W:=\{u\in\R^m:u_1<\cdots<u_m\}
\]
denote the sorted representative. Whenever $\bI_C(L)$ denotes the Fisher
information matrix in this sorted chart, define the local unlabeled
benchmark by
\begin{equation}
\label{eq:crb_unlab_def}
\mathrm{CRB}_{\rm unlab}(L)
:=
\tr\!\bigl(\bI_C(L)^{-1}\bigr).
\end{equation}

\begin{proposition}
\label{prop:lever2_sorted_chart_main}
Under the assumptions above, the following hold on any neighborhood that
does not intersect a collision hyperplane:
\begin{enumerate}[label=(\roman*)]
\item the joint observation law is permutation-invariant,
\begin{equation}
\label{eq:perm_nonident}
p(\by\mid\bx_C)=p(\by\mid\pi\bx_C)
\qquad\text{for all }\pi\in\mathfrak{S}_m,
\end{equation}
so the labeled vector $\bx_C$ is not globally identifiable from
$(y_1,\dots,y_L)$;
\item the quotient by permutations admits a local chart given by the
sorted representative $z$;
\item the unlabeled matching loss
\[
\min_{\pi\in\mathfrak{S}_m}\|\hat{\bx}_C-\pi(\bx_C)\|_2^2
\]
coincides locally with the squared error in sorted coordinates
\[
\|\hat z-z\|_2^2.
\]
\end{enumerate}
\end{proposition}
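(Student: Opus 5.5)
The plan is to treat the three parts in order, since each rests only on elementary properties of the symmetric group action on $\R^m$ together with the observation model~\eqref{eq:obs_model}.

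For (i), the joint law is determined by the mean vector $\bq(\bx_{1:N})$ and the covariance $\bSigma$. Under the Gaussian calibration~\eqref{eq:gaussian_calibration}, $\bSigma$ depends only on the sensitivities $\Delta_{2,\ell}$ and the accountant. It does not depend on the data values. Each $q_\ell$ is permutation-invariant over $C$ by hypothesis, so $\bq(\pi\bx_C)=\bq(\bx_C)$, and hence $\mathcal N(\bq(\bx_C),\bSigma)=\mathcal N(\bq(\pi\bx_C),\bSigma)$, which gives~\eqref{eq:perm_nonident}. Distinctness of the values means that for every non-identity $\pi$ we have $\pi\bx_C\neq\bx_C$. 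So two different labeled vectors induce the same law, and global identifiability fails.

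For (ii), I will use the standard fact that $\mathfrak S_m$ acts freely on the complement of the collision hyperplanes $\{u_i=u_k\}$, and that $\mathcal W$ is a fundamental domain. Concretely, the sorting map $\bx_C\mapsto\bx_C^\uparrow$ is locally smooth away from collisions. On a small ball $U$ around $\bx_C$ that avoids all hyperplanes, the ordering permutation $\pi_0$ with $\pi_0\bx_C\in\mathcal W$ is constant, by continuity of the coordinates and strictness of the inequalities. On $U$, sorting is therefore the linear isometry $\pi_0$. The images $\pi U$ for distinct $\pi$ are pairwise disjoint, because each lies in a different Weyl chamber. The quotient map restricted to $U$ is thus a homeomorphism onto its image, and $z=\pi_0\bx_C$ serves as a chart. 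Because the chart is a linear permutation, the Fisher information transports isometrically. This justifies defining $\bI_C(L)$ in these coordinates.

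For (iii), which I expect to be the main obstacle, the difficulty is making "locally" precise: the claim can only hold when $\hat\bx_C$ is close to the orbit. Let $\gamma:=\min_i (z_{i+1}-z_i)>0$ be the minimal gap. I will show that if $\|\hat z-z\|_\infty<\gamma/2$, where $\hat z:=\hat\bx_C^\uparrow$, then the minimizing permutation is the sorting alignment. To do this, I will invoke the rearrangement inequality. For any $\pi$,
\[
\|\hat\bx_C-\pi\bx_C\|_2^2=\|\hat\bx_C\|^2+\|\bx_C\|^2-2\langle\hat\bx_C,\pi\bx_C\rangle .
\]
The inner product is maximized when both vectors are similarly ordered. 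This gives $\min_\pi\|\hat\bx_C-\pi\bx_C\|_2^2=\|\hat z-z\|_2^2$ with $\hat z$ the sorted estimate, and in fact it holds globally once $\hat z$ is defined by sorting. The local content is only that, near $z$, the estimate $\hat z$ equals the sorted-chart coordinate of $\hat\bx_C$ with the same alignment $\pi_0$. By the gap argument, $\hat\bx_C$ then lies in the same chamber as $\pi_0^{-1}\mathcal W$, so the identification is consistent with the chart of (ii). Ties in $\hat\bx_C$ are harmless: they do not change the value of the rearrangement bound.
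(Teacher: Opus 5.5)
Your proposal is correct. For (i) and (ii) it follows the paper's route. The paper's appendix proof rests on one observation: away from collision hyperplanes the sorting permutation is locally constant, so sorting is a fixed permutation matrix $P^\top$ on the neighborhood. That gives the chart and the transformation $\bI_{\rm sym}(L)=P^\top\bI_C(L)P$, and hence the trace invariance that makes $\mathrm{CRB}_{\rm unlab}$ well posed. Your free-action/Weyl-chamber argument is the same idea made explicit. Your mean--covariance argument for (i) fills in a step the paper leaves implicit; note that it needs $m\ge 2$ for a non-identity $\pi$ to exist.

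For (iii) you take a genuinely different route. The paper only asserts that, because the sorting permutation is locally constant, the matching problem is locally attained by comparing sorted representatives. You instead expand $\|\hat\bx_C-\pi\bx_C\|_2^2$ and apply the rearrangement inequality. This gives $\min_{\pi\in\mathfrak{S}_m}\|\hat\bx_C-\pi\bx_C\|_2^2=\|\hat\bx_C^\uparrow-\bx_C^\uparrow\|_2^2$ for every $\hat\bx_C$, with no locality or distinctness needed. Your version is stronger and fully rigorous, and it shows that the neighborhood restriction in the proposition is needed only for the chart and the Fisher-information transport, not for the loss identity. The paper's version instead ties the loss to the chart: on the neighborhood, $\hat z$ and $z$ come from $\hat\bx_C$ and $\bx_C$ through the same fixed $P^\top$. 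That is what makes $\|\hat z-z\|_2^2$ the loss that the sorted-chart CRB is benchmarking.

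One small correction to your last step: the condition $\|\hat z-z\|_\infty<\gamma/2$ on the sorted vectors does not place $\hat\bx_C$ in the chamber of $\bx_C$. For example, take $\bx_C=(0,1)$ and $\hat\bx_C=(1,0)$, so that $\hat z=z$. To get the same-chamber conclusion you need $\|\hat\bx_C-\bx_C\|_\infty<\gamma/2$. Since your rearrangement identity already holds globally, this does not affect the result.
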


The proof is provided in \appref{app:lever2}{app:sorted_chart}.

Away from collisions, the locally recoverable object is the sorted vector
$\theta=z=\bx_C^\uparrow\in\R^m$. In this static-membership quotient
problem, there are no nuisance coordinates inside the sorted chart:
coordinates outside $C$ do not enter the release law, so the Schur
complement of Section~\ref{subsec:fim_crb} reduces to
$\bI_{\theta\theta}=\bI_C(L)$ itself. Feature diversity breaks the
rank-one fully static factorization in
\eqref{eq:J_column_static}; if the features were also fixed, repeated
static releases would remain confined to one observation-space direction.
\subsection{A General Design Law for Static-Membership Releases}
\label{subsec:lever2_design_law}
For the remainder of this section, consider scalar-output releases
observed in sorted coordinates:
\begin{equation}
\label{eq:lever2_sorted_model}
y_\ell
=
q_\ell(z)+w_\ell,
\qquad
\ell=1,\dots,L,
\end{equation}
where $z=\bx_C^\uparrow\in\mathcal W$ and the noises are independent
Gaussians with variances
\[
w_\ell\sim\mathcal N(0,\sigma_\ell^2),
\qquad
\sigma_\ell^2=\kappa_L\,\Delta_{2,\ell}^2.
\]
The structural assumption is a common composition penalty $\kappa_L$ for
all $L$ releases. Uniform Basic Composition has this form, although the
factorization below does not require its specific growth law.

Define the sensitivity-normalized gradient
\begin{equation}
\label{eq:lever2_g_def}
g_\ell(z)
:=
\Delta_{2,\ell}^{-1}\nabla_z q_\ell(z)\in\R^m,
\end{equation}
the cumulative signal matrix
\begin{equation}
\label{eq:lever2_A_def}
\bA_L(z)
:=
\sum_{\ell=1}^L g_\ell(z)g_\ell(z)^\top,
\end{equation}
and note that the Fisher information matrix in sorted coordinates is\footnote{Unlike classical A-optimal experimental design, the noise scale $\kappa_L$ here depends on the design through privacy composition.}
\begin{equation}
\label{eq:lever2_fim_factorization}
\bI_C(L)
=
\frac{1}{\kappa_L}\,\bA_L(z).
\end{equation}
Whenever an identifiable stage exists, denote by
\[
L_{\mathrm{id}}
:=
\min\{L:\bA_L(z)\succ \bm 0\}
\]
the first stage at which the local unlabeled problem becomes full-rank.

\begin{theorem}
\label{thm:lever2_design_law}
Under the factorization \eqref{eq:lever2_fim_factorization},
\[
\rank\!\bigl(\bI_C(L)\bigr)=\rank\!\bigl(\bA_L(z)\bigr).
\]
Thus local unlabeled identifiability at stage $L$ is equivalent to
$\bA_L(z)\succ \bm 0$. Whenever $\bA_L(z)\succ \bm 0$,
\begin{equation}
\label{eq:lever2_crb_spectral}
\mathrm{CRB}_{\rm unlab}(L)
=
\kappa_L\,\tr\!\bigl(\bA_L(z)^{-1}\bigr).
\end{equation}
\end{theorem}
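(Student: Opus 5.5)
The plan is to treat Theorem~\ref{thm:lever2_design_law} as a direct consequence of the scalar factorization \eqref{eq:lever2_fim_factorization}, after first confirming that the factorization itself holds. To do that, I will start from the Gaussian model \eqref{eq:lever2_sorted_model} with independent noises, $\bSigma=\kappa_L\,\mathrm{diag}(\Delta_{2,1}^2,\dots,\Delta_{2,L}^2)$. The Fisher information in the sorted chart is then $\bJ^\top\bSigma^{-1}\bJ$, as in \eqref{eq:FIM}, where the $\ell$-th row of $\bJ$ is $\nabla_z q_\ell(z)^\top$. This gives
\[
\bI_C(L)=\sum_{\ell=1}^L \frac{\nabla_z q_\ell\,\nabla_z q_\ell^\top}{\kappa_L\Delta_{2,\ell}^2}
=\frac{1}{\kappa_L}\sum_{\ell=1}^L g_\ell g_\ell^\top=\frac{1}{\kappa_L}\bA_L(z),
\]
using \eqref{eq:lever2_g_def}. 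There are no nuisance coordinates inside the sorted chart, as noted after Proposition~\ref{prop:lever2_sorted_chart_main}, so no Schur complement is needed and $\bI_{\mathrm{eff}}=\bI_C(L)$.

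For the rank statement, I use that $\kappa_L>0$: it is a noise variance scale, since $\bSigma\succ\bm 0$. Multiplying a matrix by a nonzero scalar does not change its rank, so $\rank(\bI_C(L))=\rank(\bA_L(z))$. Since $\bA_L(z)$ is a sum of outer products, it is symmetric positive semidefinite. Hence full rank $m$ is equivalent to $\bA_L(z)\succ\bm 0$, and by the same scaling this is equivalent to $\bI_C(L)\succ\bm 0$. By the discussion after \eqref{eq:crb_scalar}, that condition is exactly local identifiability of $z$ with a finite benchmark.

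For the CRB formula, when $\bA_L(z)\succ\bm 0$ I invert the factorization to get $\bI_C(L)^{-1}=\kappa_L\bA_L(z)^{-1}$. Taking traces in \eqref{eq:crb_unlab_def} and using linearity of the trace then yields \eqref{eq:lever2_crb_spectral}.

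There is no substantive obstacle here. The only points to check carefully are that the Gaussian FIM formula applies in the sorted chart, where $q_\ell$ is $C^1$ away from collisions by Proposition~\ref{prop:lever2_sorted_chart_main}, and that the common penalty $\kappa_L$ factors out because every release uses the same $\kappa_L$.
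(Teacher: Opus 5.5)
Your proposal is correct and follows the paper's own argument. The scalar factorization $\bI_C(L)=\kappa_L^{-1}\bA_L(z)$ with $\kappa_L>0$ preserves rank, and because $\bA_L(z)\succeq \bm 0$, full rank is equivalent to $\bA_L(z)\succ\bm 0$; inverting the scalar multiple and taking traces then gives \eqref{eq:lever2_crb_spectral}. Your derivation of the factorization from the Gaussian FIM is a harmless extra, since the theorem takes the factorization as a hypothesis.
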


\begin{proof}[Proof sketch]
The factorization $\bI_C(L)=\kappa_L^{-1}\bA_L(z)$ preserves rank, and
inversion of this scalar multiple gives~\eqref{eq:lever2_crb_spectral}.
\end{proof}

\begin{corollary}
\label{cor:lever2_spectral_consequences}
Under the assumptions of Theorem~\ref{thm:lever2_design_law}, the
following hold:
\begin{enumerate}[label=(\roman*)]
\item Assume an identifiable stage exists, with first full-rank stage
$L_{\mathrm{id}}<\infty$. If $\lambda_{\min}(\bA_L(z))=o(\kappa_L)$,
then
\[
\mathrm{CRB}_{\rm unlab}(L)\to\infty
\qquad\text{as }L\to\infty.
\]
Combined with rank deficiency for $L<L_{\mathrm{id}}$, the local
reconstruction risk admits a finite minimizer
$L^\star\in[L_{\mathrm{id}},\infty)$.
\item
\[
\mathrm{CRB}_{\rm unlab}(L)\to 0
\qquad\Longleftrightarrow\qquad
\frac{\lambda_{\min}(\bA_L(z))}{\kappa_L}\to\infty.
\]
\end{enumerate}
\end{corollary}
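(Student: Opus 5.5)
The whole corollary will follow from the spectral form \eqref{eq:lever2_crb_spectral} of Theorem~\ref{thm:lever2_design_law}, combined with the elementary two-sided trace bound for a positive definite $m\times m$ matrix $\bM$:
\[
\lambda_{\min}(\bM)^{-1}\le\tr(\bM^{-1})\le m\,\lambda_{\min}(\bM)^{-1}.
\]
Applied to $\bM=\bA_L(z)$ at any stage $L\ge L_{\mathrm{id}}$, this gives
\[
\frac{\kappa_L}{\lambda_{\min}(\bA_L(z))}
\le
\mathrm{CRB}_{\rm unlab}(L)
\le
\frac{m\,\kappa_L}{\lambda_{\min}(\bA_L(z))}.
\]
Before using it, I will check that full rank persists for every $L\ge L_{\mathrm{id}}$. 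This holds because $\bA_{L+1}=\bA_L+g_{L+1}g_{L+1}^\top\succeq\bA_L$, so $\lambda_{\min}(\bA_L)$ is nondecreasing in $L$. Here I use that $g_\ell$ depends only on release $\ell$, and that $\kappa_L$ enters only as the common scalar in \eqref{eq:lever2_fim_factorization}.

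For (ii), the sandwich gives both directions at once. If $\lambda_{\min}/\kappa_L\to\infty$, then the upper bound tends to $0$. Conversely, if $\mathrm{CRB}_{\rm unlab}(L)\to0$, then the lower bound $\kappa_L/\lambda_{\min}$ also tends to $0$, so the ratio diverges. The constant $m$ is fixed, which is all this step needs.

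For (i), the lower bound gives $\mathrm{CRB}_{\rm unlab}(L)\ge \kappa_L/\lambda_{\min}(\bA_L(z))$, and the hypothesis $\lambda_{\min}=o(\kappa_L)$ makes the right-hand side diverge. For the finite minimizer, I adopt the convention already stated after \eqref{eq:crb_scalar}: for $L<L_{\mathrm{id}}$, the benchmark is $+\infty$ because $\bA_L$ is singular and the risk in the unidentifiable directions diverges. The sequence $\mathrm{CRB}_{\rm unlab}(L)$, $L\ge L_{\mathrm{id}}$, is then finite, with $\mathrm{CRB}_{\rm unlab}(L_{\mathrm{id}})<\infty$, and it tends to $+\infty$. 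Hence only finitely many $L$ have $\mathrm{CRB}_{\rm unlab}(L)\le\mathrm{CRB}_{\rm unlab}(L_{\mathrm{id}})$, and the minimum over this finite set is attained at some $L^\star\in[L_{\mathrm{id}},\infty)$.

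I expect the only real subtlety to be making this minimizer argument rigorous. It requires the extended-real convention for the rank-deficient stages and the observation that a sequence diverging to $+\infty$ has a finite sublevel set. It does not require monotonicity of the CRB itself, which can fail because $\kappa_L$ grows while $\lambda_{\min}$ also grows. All remaining steps are direct consequences of Theorem~\ref{thm:lever2_design_law}.
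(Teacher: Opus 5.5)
Your proposal is correct and follows essentially the same route as the paper: the spectral identity $\mathrm{CRB}_{\rm unlab}(L)=\kappa_L\,\tr(\bA_L(z)^{-1})$ is sandwiched as $\kappa_L/\lambda_{\min}\le\mathrm{CRB}_{\rm unlab}(L)\le m\,\kappa_L/\lambda_{\min}$, and rank deficiency below $L_{\mathrm{id}}$ together with divergence gives the finite minimizer through your sublevel-set argument. Your Loewner-monotonicity step is harmless under the nested definition of $\bA_L(z)$, but it is not needed and need not hold for $L$-dependent threshold grids; with the $+\infty$ convention at singular stages, both (i) and (ii) go through unchanged.
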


\begin{proof}[Proof sketch]
Apply Theorem~\ref{thm:lever2_design_law} and bound the trace of
$\bA_L(z)^{-1}$ above and below by the reciprocal extreme eigenvalues.
Rank deficiency before $L_{\mathrm{id}}$ and divergence along any
subcritical eigen-direction give the finite-minimizer claim.
\end{proof}

\begin{remark}[Mixed-case interpretation]
\label{rem:lever2_mixed_regime}
Write the positive eigenvalues of $\bA_L(z)$ as
$\lambda_1,\dots,\lambda_m$. Then
\[
\mathrm{CRB}_{\rm unlab}(L)
=
\kappa_L\sum_{k=1}^m \lambda_k^{-1}.
\]
A single subcritical direction with
$\lambda_j(\bA_L(z))/\kappa_L\to 0$ is enough to force
$\mathrm{CRB}_{\rm unlab}(L)\to\infty$, regardless of how quickly the
supercritical directions improve.
\end{remark}

The relevant comparison is spectral. The local unlabeled benchmark vanishes
only when every eigendirection of $\bA_L(z)$ outgrows $\kappa_L$. The
asymptotic rates below use uniform Basic Composition, but the design law is
not tied to it: replacing Basic Composition by a linear Gaussian zCDP/RDP
accountant simply replaces $\kappa_L=\Theta(L^2\log L)$ by
$\kappa_L=\Theta(L)$ in the same spectral test. Saturating feature families
can still fail; well-conditioned threshold grids can plateau instead of
worsen.

For the explicit asymptotic statements, take a total budget
$(\varepsilon_{\mathrm{tot}},\delta_{\mathrm{tot}})$. Recalling
\eqref{eq:kappaL_sigma_def},
\begin{equation}
\kappa_L
=
\frac{2L^2\ln(1.25L/\delta_{\mathrm{tot}})}
{\varepsilon_{\mathrm{tot}}^2}
=
\Theta(L^2\log L).
\end{equation}

\subsection{Polynomial Moments: Rank Restoration and Saturation}
\label{subsec:lever2_poly}

We first apply the spectral law in Theorem~\ref{thm:lever2_design_law} to
polynomial moments. Consider releasing the first $L$ moments of the
participating multiset,
\begin{equation}
\label{eq:poly_query_new}
q_\ell(z)
=
\frac{1}{m}\sum_{k=1}^m z_k^\ell,
\qquad
\ell=1,\dots,L,
\end{equation}
where $|z_k|\le R$ for every $k$. The Jacobian has entries
\[
(\bJ_C)_{\ell k}
=
\frac{\ell}{m}\,z_k^{\ell-1},
\]
and under fixed-cardinality swap adjacency the worst-case sensitivity
satisfies
\[
\Delta_{2,\ell}^{\rm swap}\le \frac{2R^\ell}{m}.
\]
Writing
\[
r_k:=\frac{z_k}{R},
\qquad
\bm\nu_\ell:=(r_1^{\ell-1},\dots,r_m^{\ell-1})^\top\in\R^m,
\]
the normalized gradients become
\[
g_\ell(z)
=
\Delta_{2,\ell}^{-1}\nabla_z q_\ell(z)
=
\frac{\ell}{2R}\,\bm\nu_\ell,
\]
This gives
\begin{equation}
\label{eq:I_poly_rankone_new}
\begin{aligned}
\bA_L(z)
&=
\frac{1}{4R^2}
\sum_{\ell=1}^L \ell^2\,\bm\nu_\ell\bm\nu_\ell^\top,\\
\bI_C^{\mathrm{poly}}(L)
&=
\frac{1}{4R^2\kappa_L}
\sum_{\ell=1}^L
\ell^2\,\bm{\nu}_\ell\bm{\nu}_\ell^\top.
\end{aligned}
\end{equation}

For each sorted coordinate, the cumulative signal takes the form
\begin{equation}
\label{eq:Tj_poly_def}
\begin{aligned}
e_j^\top \bA_L(z)e_j
&=
\frac{1}{4R^2}\,T_j(L),\\
T_j(L)&:=\sum_{\ell=1}^L \ell^2\,r_j^{2(\ell-1)},
\qquad
r_j:=\frac{z_j}{R}.
\end{aligned}
\end{equation}
If $|r_j|<1$, then
$T_j(\infty)=(1+r_j^2)/(1-r_j^2)^3<\infty$; if $|r_j|=1$, then
$T_j(L)=\Theta(L^3)$. This coordinate-wise dichotomy is the source of the
spectral trace bottleneck.

For distinct nodes, the weighted Vandermonde structure already shows
that
\[
\rank(\bJ_C)=\min\{L,m\}.
\]
In particular, $L\ge m$ releases suffice for local identifiability in
sorted coordinates.

The scalar case makes the saturation effect explicit. For
$m=1$ and $r:=z_1/R\in[-1,1]$,
\begin{equation}
\label{eq:m1_crb}
\mathrm{CRB}_{m=1}(L)
=
\frac{4\kappa_L R^2}{T_1(L)}.
\end{equation}
If $|r|<1$, then
\begin{equation}
\label{eq:m1_floor}
\mathrm{CRB}_{m=1}(L)
\;\ge\;
\frac{4\kappa_L R^2 (1-r^2)^3}{1+r^2}.
\end{equation}
In particular, the scalar CRB diverges as $L\to\infty$, so a finite
budget-dependent minimizer $L^\star$ must exist in the strict interior
case under the fixed-budget calibration.

The same formula gives a stopping-scale rule: with $q=r^2\in(0,1)$, the
tail $T_1(\infty)-T_1(L)$ is $q^L$ times a quadratic polynomial in $L$,
so useful moment orders scale as $1/\log(1/q)$ up to logarithmic factors.
For multisets, the operative scale is the slowest saturating,
weakest-conditioned eigendirection; the scalar stopping-scale calculation and
the general-$m$ determinant bound are collected in
\appref{app:query_families}{app:query_families}.

The scalar formula also yields a coordinate-wise lower bound for general
multisets via the Schur complement; see \appref{app:query_families}{app:m1_additional}.

\begin{proposition}
\label{prop:poly_interior_matrix}
Assume the points $\{z_k\}_{k=1}^m$ are pairwise distinct and satisfy
$|z_k|<R$ for every $k$. Then the matrix sequence $\bA_L(z)$ increases
monotonically to a finite positive-definite limit
\[
\bA_\infty(z)
=
\frac{1}{4R^2}
\sum_{\ell=1}^\infty \ell^2\,\bm\nu_\ell\bm\nu_\ell^\top
\succ \bm 0.
\]
Consequently, for all $L\ge m$, there exist constants
$0<c_-\le c_+<\infty$, depending on $z$ and $R$ but not on $L$, such
that
\[
c_-\,\kappa_L^{-1}\bI_m
\preceq
\bI_C^{\mathrm{poly}}(L)
\preceq
c_+\,\kappa_L^{-1}\bI_m.
\]
Consequently,
\[
\mathrm{CRB}_{\rm unlab}^{\mathrm{poly}}(L)=\Theta(\kappa_L).
\]
\end{proposition}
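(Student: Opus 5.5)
The plan is to prove convergence of $\bA_L(z)$ first, then positive definiteness of the limit, and finally transfer two-sided eigenvalue bounds to $\bI_C^{\mathrm{poly}}(L)$ and to the CRB via Theorem~\ref{thm:lever2_design_law}.

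\emph{Monotonicity and a finite limit.} By \eqref{eq:I_poly_rankone_new}, $\bA_{L+1}(z)-\bA_L(z)=\frac{(L+1)^2}{4R^2}\bm\nu_{L+1}\bm\nu_{L+1}^\top\succeq\bm 0$, so the sequence is monotone in the Loewner order. For a finite limit it suffices to show entrywise absolute convergence. The $(j,k)$ entry is $\frac{1}{4R^2}\sum_\ell \ell^2 (r_jr_k)^{\ell-1}$. Set $\rho:=\max_k|r_k|<1$; then $|r_jr_k|\le\rho^2<1$, so each series is dominated by $\sum_\ell\ell^2\rho^{2(\ell-1)}=T(\infty)<\infty$. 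This is the same computation that gives $T_j(\infty)$ in \eqref{eq:Tj_poly_def}. Hence $\bA_\infty(z)$ exists, is symmetric PSD, and satisfies $\bA_L(z)\preceq\bA_\infty(z)$ for all $L$.

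\emph{Positive definiteness and the lower bound.} The key observation is that for $L\ge m$, monotonicity gives $\bA_L(z)\succeq\bA_m(z)$. We have $\bA_m(z)=\frac{1}{4R^2}\bm V^\top\bm D^2\bm V$, where $\bm V$ is the $m\times m$ Vandermonde matrix with rows $\bm\nu_\ell^\top$ (entries $r_k^{\ell-1}$) and $\bm D=\mathrm{diag}(1,\dots,m)$. Since the $r_k$ are distinct, $\det\bm V\neq0$, consistent with the rank statement $\rank(\bJ_C)=\min\{L,m\}$ above, so $\bA_m(z)\succ\bm 0$. Set $c_-:=\lambda_{\min}(\bA_m(z))>0$ and $c_+:=\lambda_{\max}(\bA_\infty(z))<\infty$, where finiteness follows, for instance, from the trace bound $\tr\bA_\infty=\frac{1}{4R^2}\sum_j T_j(\infty)$. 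Both constants depend only on $z$ and $R$. This gives $c_-\bI_m\preceq\bA_L(z)\preceq c_+\bI_m$ for all $L\ge m$, and in particular $\bA_\infty(z)\succeq\bA_m(z)\succ\bm 0$.

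\emph{Transfer to the Fisher information and the CRB.} Dividing by $\kappa_L$ and using \eqref{eq:lever2_fim_factorization} yields the stated sandwich for $\bI_C^{\mathrm{poly}}(L)$. By \eqref{eq:lever2_crb_spectral}, $\mathrm{CRB}_{\rm unlab}^{\mathrm{poly}}(L)=\kappa_L\tr(\bA_L^{-1})$. Operator monotonicity of inversion then gives $m\kappa_L/c_+\le\mathrm{CRB}\le m\kappa_L/c_-$, which is $\Theta(\kappa_L)$.

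No step is a serious obstacle. The one point to get right is the lower bound: it must be uniform in $L$, and it comes from monotonicity together with invertibility of the weighted Vandermonde matrix at the fixed stage $L=m$, not from any analysis of the limit. Note also that the interior condition $|z_k|<R$ is used only for convergence, while distinctness is used only for the Vandermonde step.
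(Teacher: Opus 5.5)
Your proposal is correct and follows the paper's own argument. It has the same four steps: monotone rank-one PSD increments; a finite limit from the geometric decay of $\ell^2(r_jr_k)^{\ell-1}$ for strictly interior nodes; a uniform lower bound from invertibility of the weighted Vandermonde matrix at stage $L=m$; and then the spectral law of Theorem~\ref{thm:lever2_design_law} giving the $\Theta(\kappa_L)$ CRB. Your explicit constants $c_-=\lambda_{\min}(\bA_m(z))$ and $c_+=\lambda_{\max}(\bA_\infty(z))$ make precise the ``uniformly well conditioned for $L\ge m$'' step that the paper's sketch leaves implicit.
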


\begin{proof}[Proof sketch]
The positive semidefinite signal matrices increase to a finite limit
because all nodes are strict interior. Distinct nodes give full
weighted-Vandermonde rank by stage $m$, so the limit is positive
definite and uniformly well conditioned for $L\ge m$. The spectral law
then gives the stated $\Theta(\kappa_L)$ CRB scaling.
\end{proof}

For $|r_j|<1$, the signal $T_j(L)$ saturates, so
$\lambda_{\min}(\bA_L(z))$ stays bounded and $\mathrm{CRB}_{\rm unlab}$
eventually diverges with $\kappa_L$; a finite minimizer $L^\star$ exists.
For a boundary coordinate with $|r_j|=1$, the diagonal signal
$T_j(L)=\Theta(L^3)$ grows faster than $\kappa_L=\Theta(L^2\log L)$, so
the local Fisher benchmark suggests continued coordinate-wise improvement
in fully boundary configurations. For a full multiset, however, the trace
benchmark is still controlled by the slowest eigendirections of
$\bA_L(z)$, as described in Remark~\ref{rem:lever2_mixed_regime}.

\paragraph{Design implication}
For polynomial moments, the trace benchmark is governed by the slowest
identifiable eigendirections (Remark~\ref{rem:lever2_mixed_regime}). A finite
optimum appears when at least one identifiable direction fails to outgrow
$\kappa_L$; this criterion is accountant-parametric. Continued improvement is
predicted in cases where every identifiable eigendirection outgrows the
accountant factor, as in the fully boundary $m=2$ example. Finite-noise
constrained estimators near boundaries require a separate interpretation from
this local unbiased benchmark. Appendices~\ref{app:poly}--\ref{app:poly_general_m}
give the corresponding polynomial-moment derivations.

\subsection{Closed-form illustration: \texorpdfstring{$m=2$}{m=2}}
\label{subsec:lever2_m2}

The $m=2$ case gives a closed-form illustration of
Proposition~\ref{prop:poly_interior_matrix} and of the mixed
interior/boundary behavior captured by
Remark~\ref{rem:lever2_mixed_regime}.
Let $z_1<z_2$ with $|z_k|\le R$, and define
\[
\mu:=\frac{z_1+z_2}{2},
\qquad
\gamma:=\frac{z_2-z_1}{2}>0.
\]
Write
\begin{equation}
\label{eq:F_def}
F(s,L):=\sum_{\ell=1}^L \ell^2 s^{\ell-1},
\qquad |s|\le 1.
\end{equation}

For $m=2$, writing $r_k:=z_k/R$ and $\rho:=z_1z_2/R^2$, the Fisher
information matrix takes the closed form
\begin{equation}
\label{eq:m2_fim}
\bI_C(L)
=
\frac{1}{4\kappa_L R^2}
\begin{pmatrix}
F(r_1^2,L) & F(\rho,L)\\
F(\rho,L) & F(r_2^2,L)
\end{pmatrix}.
\end{equation}
For $L\ge 2$ and $z_1\neq z_2$, the determinant
\[
D(L):=F(r_1^2,L)F(r_2^2,L)-F(\rho,L)^2
\]
is strictly positive, and inverting the $2\times 2$ matrix gives
\begin{equation}
\label{eq:m2_crb}
\mathrm{CRB}_{\rm unlab}(L)
=
\frac{4\kappa_L R^2\big(F(r_1^2,L)+F(r_2^2,L)\big)}{D(L)}.
\end{equation}
If $|r_1|,|r_2|<1$, then the benchmark is infinite at $L=1$ and
diverges as $L\to\infty$, so a finite minimizer exists over $L\ge2$.
Consecutive release counts can be compared by
\begin{equation}
\label{eq:m2_stop}
\frac{\kappa_{L+1}}{\kappa_L}
\;\ge\;
\frac{G(L)}{G(L+1)},
\qquad
G(L):=\frac{F(r_1^2,L)+F(r_2^2,L)}{D(L)},
\end{equation}
This inequality is exactly the condition that release count $L+1$ does
not improve over $L$. We use this comparison, or direct evaluation
of~\eqref{eq:m2_crb}, to locate $L^\star$ in the numerical experiments.

For $L=2$, substituting $F(s,2)=1+4s$ into \eqref{eq:m2_crb} yields
\begin{equation}
\label{eq:m2_L2}
\mathrm{CRB}_{\rm unlab}(2)
=
\frac{\kappa_2 R^2(R^2+4\mu^2+4\gamma^2)}{2\gamma^2},
\end{equation}
which scales as $\gamma^{-2}$ and makes the role of record separation
explicit. Two features are already visible at this first identifiable
stage: the risk is finite once two distinct moments are available, but
it blows up as the two records approach a collision $(\gamma\to 0)$,
which is the same local-chart instability behind the sorted
parameterization. At $L=2$, the noiseless moment equations also admit an algebraic inversion,
which we use as a finite-noise benchmark in Section~\ref{sec:experiments};
\appref{app:query_families}{app:m2} gives the details of the $m=2$ case.

\subsection{Threshold Queries: Bounded Sensitivity and Saturation}
\label{subsec:lever2_threshold}

The same spectral accounting appears for threshold queries. In this family,
budget-dependent saturation comes from localization of the threshold
derivative, in contrast with the geometric decay of high-order moments.
Consider smooth threshold queries
\begin{equation}
\label{eq:threshold_query}
q_\ell(z)
=
\frac{1}{m}\sum_{k=1}^m h(z_k-t_\ell),
\qquad
\ell=1,\dots,L,
\end{equation}
where $h:\R\to[0,1]$ is a smooth approximation of a step function and
$t_1<\cdots<t_L$ are threshold locations.
Define
\[
\bm\psi_\ell(z)
:=
\bigl(h'(z_1-t_\ell),\dots,h'(z_m-t_\ell)\bigr)^\top.
\]
Unlike the polynomial-moment family above, which was calibrated under
fixed-cardinality swap adjacency for moment releases, the threshold
family is calibrated using a uniform replacement/swap-style sensitivity
bound. Since $h\in[0,1]$, a pure value replacement within a fixed
participant set changes the averaged threshold statistic by at most
$1/m$. In the conservative swap/add-remove convention used for this
release family, at most two bounded summands can change, so we use the
uniform sensitivity bound $\Delta_{2,\ell}\le 2/m$. This choice affects
only constants and leaves the spectral accounting conclusions unchanged.
Hence
  \[
  g_\ell(z)=\frac{1}{2}\bm\psi_\ell(z),\qquad
  \bA_L(z)=\frac{1}{4}\sum_{\ell=1}^L \bm\psi_\ell(z)\bm\psi_\ell(z)^\top.
  \]
This expression also gives a useful conditioning heuristic. For interior
points away from the grid endpoints, a Riemann-sum approximation with
spacing $\eta_L$ gives
\[
\scalebox{0.94}{$
(\bA_L)_{ij}\approx(4\eta_L)^{-1}K(z_i-z_j),\quad
K(d):=\int h'(u)h'(u+d)\,du .
$}
\]
Thus, for $m=2$ with separation $d=|z_2-z_1|$ and negligible boundary
truncation, the weakest eigenvalue is approximately proportional to
$\eta_L^{-1}\{K(0)-K(d)\}$. More generally, if records are interior and the
kernel Gram matrix $[K(z_i-z_j)]_{i,j}$ is nonsingular, then
$\bA_L(z)=\Theta(L)$ and the lower bound in
Proposition~\ref{prop:threshold_truncation} is tight. Close records and
boundary truncation degrade the conditioning and explain the data-dependent
$L^\star$ seen in Fig.~\ref{fig:threshold}.
\begin{figure*}[t]
\centering
\resizebox{0.75\textwidth}{!}{\input{fig/tikz/exp123_groupplot_3x1_snippet.tex}}
\caption{Labeled IN/OUT reconstruction under a fixed one-contrast geometry.
(Left) Closed-form $\mathrm{MSE}(x_j)$ under uniform Basic Composition for
three total privacy budgets. (Center) Same \textsf{IN/OUT} geometry with
three accountant plug-ins at the converted target
$(\varepsilon_{\rm tot},\delta_{\rm tot})=(1,10^{-5})$: Basic Composition,
zCDP, and fixed-order RDP. The geometry supplies the common $1/L$ averaging
factor in~\eqref{eq:one_contrast_kappa_over_L}; the accountant supplies
$\kappa_L$. (Right) Monte Carlo BLUE MSE versus the closed-form BLUE variance
under Basic Composition at $\varepsilon_{\rm tot}=1$.}
\label{fig:inout}
\label{fig:lever1_validation}
\end{figure*}
\begin{proposition}
\label{prop:threshold_truncation}
Suppose $h\in C^1$, $\operatorname{supp}(h')\subseteq [-b,b]$ for some $b>0$,
$\|h'\|_\infty\le M$, and the thresholds are placed on a regular grid
over $[-R,R]$ with spacing $\eta_L$ satisfying
\[
\frac{c_1}{L}\le \eta_L\le \frac{c_2}{L}
\]
for positive constants $c_1,c_2$ independent of $L$. Then:
\begin{enumerate}[label=(\roman*)]
\item The cumulative signal satisfies
\[
\tr\!\bigl(\bA_L(z)\bigr)=O(L).
\]

\item Assume an identifiable stage exists, with first full-rank stage
$L_{\mathrm{id}}<\infty$. Then for every $L\ge L_{\mathrm{id}}$,
\[
\mathrm{CRB}_{\rm unlab}(L)
=
\Omega\!\left(\frac{\kappa_L}{L}\right).
\]
Under uniform Basic Composition, this gives
\[
\mathrm{CRB}_{\rm unlab}(L)=\Omega(L\log L)\to\infty,
\]
and hence the local CRB benchmark has a finite budget-dependent optimum
under the fixed-budget Basic-Composition calibration. With linear zCDP/RDP
accounting, this generic lower bound is order-flat; any additional deterioration
must come from weak geometry, not from the accountant plug-in alone.
\end{enumerate}
\end{proposition}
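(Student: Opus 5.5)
The plan is to bound the trace of the cumulative signal matrix by counting how many thresholds can "see" each record. Then Theorem~\ref{thm:lever2_design_law} converts that trace bound into a lower bound on the trace of the inverse.

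For part (i), we use $\bA_L(z)=\tfrac14\sum_{\ell}\bm\psi_\ell\bm\psi_\ell^\top$, which gives
\[
\tr\bigl(\bA_L(z)\bigr)=\tfrac14\sum_{k=1}^m\sum_{\ell=1}^L h'(z_k-t_\ell)^2 .
\]
Because $\operatorname{supp}(h')\subseteq[-b,b]$, the term $h'(z_k-t_\ell)$ can be nonzero only if $t_\ell\in[z_k-b,z_k+b]$. The thresholds lie on a regular grid with spacing $\eta_L\ge c_1/L$. Hence the number of grid points in an interval of length $2b$ is at most $2b/\eta_L+1\le 2bL/c_1+1$. Each such term is at most $M^2$, so
\[
\tr(\bA_L)\le \tfrac14\, m M^2\,(2bL/c_1+1)=O(L).
\]
The constant is uniform in $z$ and depends only on $m$, $M$, $b$ and $c_1$. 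The upper spacing bound $c_2$ is not needed here; it only ensures that the grid actually covers $[-R,R]$ with about $L$ points.

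For part (ii), we fix $L\ge L_{\mathrm{id}}$. We first have to check that $\bA_L(z)\succ\bm 0$ at every such $L$, not just at $L_{\mathrm{id}}$. This is the step I expect to need the most care, because the grid changes with $L$: the thresholds are not nested, so monotonicity in $L$ is not automatic as it was for polynomial moments. I would read ``assume an identifiable stage exists'' as the assumption that $\bA_L(z)\succ\bm0$ for the $L$ under consideration. Alternatively, one can note that wherever $\bA_L$ is singular the benchmark is $+\infty$, and the bound holds trivially.

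With $\bA_L\succ\bm0$ and eigenvalues $\lambda_1,\dots,\lambda_m>0$, Cauchy--Schwarz (or convexity of $1/x$) gives
\[
\sum_k\lambda_k^{-1}\ge m^2\Big/\sum_k\lambda_k=m^2/\tr(\bA_L).
\]
Combining this with \eqref{eq:lever2_crb_spectral} and part (i),
\[
\mathrm{CRB}_{\rm unlab}(L)=\kappa_L\tr(\bA_L^{-1})\ge \frac{m^2\kappa_L}{\tr(\bA_L)}\ge \frac{m^2\kappa_L}{C L}=\Omega(\kappa_L/L).
\]

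Finally, we plug in the accountant. Under uniform Basic Composition, \eqref{eq:kappaL_sigma_def} gives $\kappa_L/L=\Theta(L\log L)$, which tends to infinity. The benchmark is infinite for $L<L_{\mathrm{id}}$, finite at $L_{\mathrm{id}}$, and diverges as $L\to\infty$. So the infimum over integers $L\ge L_{\mathrm{id}}$ is taken over a set whose sublevel sets are finite, and a finite minimizer exists, exactly as in Corollary~\ref{cor:lever2_spectral_consequences}(i). Under linear zCDP/RDP accounting, $\kappa_L=\Theta(L)$, so the lower bound becomes $\Omega(1)$. This is the stated order-flat conclusion: any further growth of the benchmark must come from small eigenvalues of $\bA_L$, not from the accountant.
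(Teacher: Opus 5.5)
Your proof is correct and follows the paper's argument. As in the paper, you count the $O(L)$ grid thresholds within distance $b$ of each record to get $\tr(\bA_L(z))=O(L)$, turn this into $\tr(\bA_L(z)^{-1})=\Omega(1/L)$, and then apply the spectral law $\mathrm{CRB}_{\rm unlab}(L)=\kappa_L\tr(\bA_L(z)^{-1})$.

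There are two small differences, both in your favor. First, the paper uses $\tr(\bA_L^{-1})\ge 1/\lambda_{\max}\ge 1/\tr(\bA_L)$, while your harmonic-mean bound $\tr(\bA_L^{-1})\ge m^2/\tr(\bA_L)$ gives the same order with a better constant. Second, you rightly note that $\bA_L(z)\succ\bm 0$ for every $L\ge L_{\mathrm{id}}$ is not automatic when the grid is not nested, which the paper simply asserts. Your fix, that the bound holds trivially when the benchmark is infinite, closes that gap.
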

\begin{proof}[Proof sketch]
For each coordinate, only thresholds within $b$ of $z_k$ contribute, and
the grid spacing $\eta_L=\Theta(1/L)$ gives $O(L)$ such thresholds.
Thus $\tr(\bA_L(z))=O(L)$; combining this with the spectral law gives
the $\Omega(\kappa_L/L)$ lower bound. Details are in
\appref{app:query_families}{app:threshold}.
\end{proof}
The optimal release count can depend strongly on where the records fall
relative to the threshold grid. Well-separated interior configurations usually
reach the stabilized kernel case earlier, whereas close or near-boundary
configurations can need more releases before the local inverse is well
conditioned. This data dependence is invisible to composition accounting alone
and is tested directly in Section~\ref{sec:experiments}.

The two sources of geometry can also be combined: feature-diverse releases can
anchor the participating multiset, while a small number of
membership-varying releases can resolve labels. A formal analysis of
optimal hybrid allocation is deferred to future work.

\section{Numerical Experiments}
\label{sec:experiments}

The numerical section is deliberately narrow. It checks the
geometry--accounting calculations for labeled \textsf{IN/OUT} releases and
for fixed-membership feature-diverse releases. We organize the experiments
around three predictions: one-contrast scaling for labeled \textsf{IN/OUT},
spectral saturation for feature-diverse static membership, and finite-noise
behavior near the local benchmark. The feature-diverse experiments use
polynomial moments, smooth thresholds, and normalized load profiles to test the
spectral conditioning predicted above. We report the loss associated with the
identifiable object in each case: single-coordinate MSE for $x_j$ in the linear--Gaussian \textsf{IN/OUT}
model, and sorted-coordinate error $\|\hat z-z\|_2^2$ for unlabeled multiset
reconstruction away from collisions. Unless otherwise noted, the numerical
panels instantiate the scalar accounting penalty $\kappa_L$ through the
Basic-Composition Gaussian calibration
in~\eqref{eq:gaussian_calibration}--\eqref{eq:kappaL_sigma_def}, with
$(\varepsilon_{\mathrm{tot}},\delta_{\mathrm{tot}})=(1,10^{-5})$.
This is a concrete calibration plug-in for the experiments, not an assumption
behind the Fisher-geometry conclusions: replacing it by a linear
zCDP/RDP-style Gaussian accountant changes the displayed rates through
$\kappa_L$, but leaves the profiled identifiable span unchanged.
Polynomial moments use fixed-cardinality swap sensitivity on $[-R,R]$ with
$R=1$, and threshold experiments use the bounded-sensitivity calibration of
Section~\ref{subsec:lever2_threshold}. To instantiate realistic operating
points, the real-data panels use electricity load-demand profiles from a
real-world grid-data corpus. Load profiles record power consumption over time
and are common inputs for power-system analysis and operational diagnostics;
fine-grained meter data can also expose consumer activity or
occupancy~\cite{Asghar2017SmartMeterPrivacy,Chen2013Occupancy,Ravi2022MeterDP}.
We use the profiles only as normalized scalar operating points for the local
Fisher experiments; they are clipped at the empirical 99th percentile and
normalized to $[-1,1]$ before pair and cohort sampling. Monte Carlo panels
average the independent noise trials stated in the captions.

\subsection{Labeled \textsf{IN/OUT} Reconstruction}
\label{subsec:experiments_inout}

For the labeled-target experiment, the setup follows
Section~\ref{subsec:lever1_inout}: $N=100$ records, background size
$n=N_{\mathrm{out}}=50$, record-level adjacency radius $\Delta=1$, and
$\delta_{\mathrm{tot}}=10^{-5}$. The scaling-law panels sweep
$L\in\{2,4,\dots,40\}$ under uniform Basic Composition for
$\varepsilon_{\mathrm{tot}}\in\{0.5,1,2\}$. The accountant-comparison and Monte
Carlo panels fix $\varepsilon_{\mathrm{tot}}=1$. For the zCDP curve, we choose
$\rho_{\rm tot}$ from the standard conversion to
$(\varepsilon_{\rm tot},\delta_{\rm tot})$-DP; for the fixed-order RDP curve,
we use order $\alpha=64$ and the corresponding converted RDP budget. The Monte
Carlo estimate uses $10^4$ Gaussian-noise trials at each tested $L$.

Figure~\ref{fig:inout} checks the solved one-contrast case. The left panel
matches the closed-form law \eqref{eq:dp_to_risk_fixed_total_closed_s4}: under
Basic Composition, the $\Theta(L^2\log L)$ noise penalty overwhelms the $1/L$
averaging gain. The center panel holds the \textsf{IN}/\textsf{OUT} geometry
fixed and changes only the accountant scalar in
\eqref{eq:one_contrast_kappa_over_L}. Basic Composition increases with $L$,
whereas the zCDP and fixed-order RDP plug-ins remain order-flat. Thus the
accountant changes the slope, not the target-identifying span: after the first
\textsf{IN}/\textsf{OUT} contrast, there is no new target direction. The right
panel shows that the empirical BLUE MSE tracks the attainable variance within
Monte Carlo error; in this linear--Gaussian model, this is an implementation
check, with no separate efficiency claim.

\FloatBarrier
\subsection{Static Membership with Feature Diversity}
\label{subsec:experiments_static}

We first test the feature-diversity prediction using polynomial moments.
Releasing the first $L$ moments makes $L$ both the number of releases and the
maximum moment order.
For the $m=2$ polynomial-moment benchmark~\eqref{eq:m2_crb}, we evaluate
$L\in\{2,\dots,40\}$ for three synthetic configurations: a strict-interior
pair $(-0.4,0.7)$, a mixed pair $(-0.6,1.0)$, and a fully boundary pair
$(-1.0,1.0)$.
\begin{figure}[t]
\centering
\resizebox{\columnwidth}{!}{
\definecolor{polyInterior}{HTML}{1E4D72}
\definecolor{polyMixed}{HTML}{B24A2F}
\definecolor{polyBoundary}{HTML}{2F6F4E}
\begin{tikzpicture}
\begin{groupplot}[
    group style={group name=figtwo, group size=2 by 1, horizontal sep=0.2cm},
    width=0.6\columnwidth,
    height=0.6\columnwidth,
    xmin=2, xmax=40,
    ymin=10, ymax=2e5,
    xlabel={Releases $L$},
    ylabel={$\mathrm{CRB}_{\mathrm{unlab}}(L)$},
    ylabel style={font=\small},
    ymode=log,
    grid=both,
    major grid style={draw=black!18},
    minor grid style={draw=black!8},
    tick align=outside,
    tick pos=left,
    legend style={
        draw=black!15,
        fill=white,
        fill opacity=0.78,
        text opacity=1,
        font=\small,
        cells={anchor=west},
        inner sep=0.7pt,
        row sep=0.3pt
    },
    label style={font=\small},
    xlabel style={font=\small, text width=0.30\columnwidth, align=center},
    tick label style={font=\small},
]

\nextgroupplot[
    legend style={at={(0.03,0.97)}, anchor=north west, legend columns=1},
]
\addplot+[
    smooth,
    no marks,
    color=polyInterior,
    line width=0.8pt
] coordinates {
  (2,378.012525656) (3,699.566542691) (4,1111.96723452)
  (5,1698.73414649) (6,2417.39896434) (7,3289.93001556)
  (8,4307.50635303) (9,5476.53961145) (10,6796.05952731)
  (11,8267.88475104) (12,9892.24303042) (13,11669.7818657)
  (14,13600.8688174) (15,15685.9501257) (16,17925.4500363)
  (17,20319.82116) (18,22869.5233806) (19,25575.0234035)
  (20,28436.7858074) (21,31455.2684676) (22,34630.9184918)
  (23,37964.1699786) (24,41455.442774) (25,45105.1420761)
  (26,48913.6585536) (27,52881.3687911) (28,57008.6359124)
  (29,61295.8102846) (30,65743.2302456) (31,70351.2228179)
  (32,75120.1043901) (33,80050.1813555) (34,85141.750707)
  (35,90395.100586) (36,95810.5107884) (37,101388.253231)
  (38,107128.592382) (39,113031.785656) (40,119098.08378)
};
\addlegendentry{interior}
\addplot+[
    only marks,
    forget plot,
    mark=star,
    mark size=2.3pt,
    color=polyInterior
] coordinates {(2,378.012525656)};

\addplot+[
    smooth,
    no marks,
    color=polyMixed,
    line width=0.8pt,
    dashed
] coordinates {
  (2,288.979276577) (3,345.406385249) (4,450.877752907)
  (5,613.929025798) (6,826.015711189) (7,1092.4228132)
  (8,1409.52174599) (9,1778.56667192) (10,2197.29590962)
  (11,2666.04802168) (12,3183.83778059) (13,3750.94419354)
  (14,4367.09194148) (15,5032.54515974) (16,5747.31541301)
  (17,6511.62694116) (18,7325.58703282) (19,8189.38247545)
  (20,9103.14582189) (21,10067.03712) (22,11081.1904909)
  (23,12145.7476673) (24,13260.8369447) (25,14426.5868431)
  (26,15643.1179648) (27,16910.5485802) (28,18228.9915792)
  (29,19598.5568055) (30,21019.3499934) (31,22491.4737956)
  (32,24015.0274785) (33,25590.1074201) (34,27216.807085)
  (35,28895.2173018) (36,30625.4263324) (37,32407.5200526)
  (38,34241.5820467) (39,36127.6937405) (40,38065.9344966)
};
\addlegendentry{mixed}
\addplot+[
    only marks,
    forget plot,
    mark=star,
    mark size=2.3pt,
    color=polyMixed
] coordinates {(2,288.979276577)};

\addplot+[
    smooth,
    no marks,
    color=polyBoundary,
    line width=0.8pt,
    densely dotted
] coordinates {
  (2,248.584323937) (3,161.716984442) (4,125.974688423)
  (5,104.85755444) (6,90.4431961603) (7,79.8115451311)
  (8,71.5775023194) (9,64.9798146684) (10,59.5579265242)
  (11,55.0137379252) (12,51.1443680298) (13,47.8062614208)
  (14,44.8946253572) (15,42.3309796392) (16,40.0552746977)
  (17,38.0207149316) (18,36.190252219) (19,34.5341473798)
  (20,33.028235291) (21,31.6526658094) (22,30.3909738198)
  (23,29.2293815528) (24,28.1562677741) (25,27.1617587927)
  (26,26.2374096938) (27,25.3759532799) (28,24.5711004324)
  (29,23.8173799557) (30,23.1100090447) (31,22.4447877266)
  (32,21.8180122351) (33,21.2264034538) (34,20.6670474476)
  (35,20.1373457542) (36,19.6349736127) (37,19.1578446819)
  (38,18.7040810998) (39,18.2719879606) (40,17.860031464)
};
\addlegendentry{boundary}

\nextgroupplot[
    ylabel={},
    yticklabel=\empty,
    legend style={at={(0.03,0.97)}, anchor=north west, legend columns=1},
]
\addplot+[
    smooth,
    no marks,
    color=polyInterior,
    line width=0.8pt
] coordinates {
  (2,371.5406620904) (3,838.4387842286) (4,1471.9092249599)
  (5,2331.2431214211) (6,3385.2496206161) (7,4657.8829917078)
  (8,6138.5119140905) (9,7834.7455031718) (10,9744.6934935344)
  (11,11871.0255321979) (12,14214.2987165783) (13,16775.9361813808)
  (14,19556.9091927188) (15,22558.2812983583) (16,25780.9880533004)
  (17,29225.9309490006) (18,32893.9486839286) (19,36785.8358558344)
  (20,40902.3429317626) (21,45244.1823823970) (22,49812.0317347170)
  (23,54606.5370390874) (24,59628.3155849715) (25,64877.9583788404)
  (26,70356.0322816435) (27,76063.0819081167) (28,81999.6313062877)
  (29,88166.1854551673) (30,94563.2316040536) (31,101191.2404756662)
  (32,108050.6673507436) (33,115141.9530492588) (34,122465.5248209634)
  (35,130021.7971561105) (36,137811.1725256158) (37,145834.0420586191)
  (38,154090.7861643117) (39,162581.7751039807) (40,171307.3695184472)
};
\addlegendentry{interior}
\addplot+[
    only marks,
    forget plot,
    mark=star,
    mark size=2.3pt,
    color=polyInterior
] coordinates {(2,371.5406620904)};
\node[anchor=west, text=polyInterior, font=\tiny] at (axis cs:2,371.5406620904) {$L^\star=2$};

\addplot+[
    smooth,
    no marks,
    color=polyMixed,
    line width=0.8pt,
    dashed
] coordinates {
  (2,2790.0678377428) (3,1980.4159241686) (4,821.0021857917)
  (5,1066.0779160369) (6,945.6165886577) (7,1203.0501946990)
  (8,1278.3951010175) (9,1573.2061820313) (10,1761.4429835437)
  (11,2106.0062960169) (12,2384.2785364944) (13,2784.7827639297)
  (14,3143.2180333710) (15,3603.0358130044) (16,4036.4211754383)
  (17,4557.5862206585) (18,5062.9800231853) (19,5646.8519953528)
  (20,6222.5715401900) (21,6870.1428559098) (22,7515.2340863079)
  (23,8227.2785335690) (24,8941.2173822414) (25,9718.3674579073)
  (26,10500.8879471794) (27,11343.6785024143) (28,12194.6724587422)
  (29,13103.5672810571) (30,14023.0248261297) (31,14998.4338901741)
  (32,15986.4073553206) (33,17028.6985897125) (34,18085.2800916720)
  (35,19194.7877219757) (36,20320.0948741318) (37,21497.1255230402)
  (38,22691.2921098165) (39,23936.1293716519) (40,25199.2991282683)
};
\addlegendentry{near-boundary}
\addplot+[
    only marks,
    forget plot,
    mark=star,
    mark size=2.3pt,
    color=polyMixed
] coordinates {(4,821.0021857917)};
\node[anchor=west, text=polyMixed, font=\tiny] at (axis cs:4,821.0021857917) {$L^\star=4$};
\end{groupplot}
\end{tikzpicture}}
\caption{Polynomial-moment CRB under uniform Basic Composition.
(Left) Synthetic $m=2$ pairs in strict-interior, mixed, and fully
boundary configurations.
(Right) Representative real-world load-profile cohorts. Stars
mark minimizing release counts $L^\star$ within the plotted
range.}
\label{fig:poly_synth}
\label{fig:lever2_dichotomy}
\end{figure}
The left panel of Figure~\ref{fig:poly_synth} shows the predicted
interior/boundary split. The strict-interior and mixed configurations have
finite minimizing release counts, whereas the fully boundary configuration
continues decreasing over the plotted range. This is the trace-level effect predicted by the theory: the weakest
identifiable eigendirection of $\bA_L(z)$ controls the benchmark, while a
coordinate-wise interior/boundary classification is insufficient.

We next examine where normalized records from the real-world load profiles fall on the
interior/boundary spectrum for polynomial moments. The real data are used as
a source of operating points, not as a model of a deployed release
procedure. After the same clipping and normalization protocol used in the
synthetic study, the load profiles give realistic pair separations,
near-boundary behavior, and moment-map conditioning.

\begin{figure*}[t]
\centering
\begin{minipage}[t]{0.465\textwidth}
\centering
\definecolor{polyInterior}{HTML}{1E4D72}
\definecolor{polyMixed}{HTML}{2E8B57}
\definecolor{polyNearBoundary}{HTML}{B24A2F}
\definecolor{polyBoundary}{HTML}{8E44AD}
\begin{tikzpicture}
\begin{axis}[
  width=0.96\linewidth, height=0.55\linewidth,
  xmode=log, ymode=log,
  xlabel={Pair separation $\gamma = (z_2 - z_1)/2$},
  ylabel={$\mathrm{CRB}_{\mathrm{unlab}}(L_{\mathrm{id}})$},
  xlabel near ticks,
  ylabel near ticks,
  legend style={font=\scriptsize, draw=none, fill=none, at={(0.02,0.98)}, anchor=north west, row sep=-2pt},
  tick label style={font=\scriptsize},
  label style={font=\small},
  grid=both, grid style={black!10, line width=0.3pt},
  axis lines=left,
  clip mode=individual,
]
\addplot+[only marks, mark=o, mark size=1.8pt, mark options={draw=black!70, fill=white, line width=0.45pt}] coordinates {
  (0.0176043,241644) (0.33655,429.878) (0.0113453,639528) (0.558651,238.98) (0.380552,335.206) (0.397425,325.542) (0.598408,233.584) (0.519873,245.94) (0.0186773,213972) (0.516166,245.692) (0.0167458,231893) (0.505425,247.523) (0.503877,248.025) (0.52387,244.347) (0.0106139,573052) (0.00687537,1.37865e+06) (0.335794,373.842) (0.288655,530.273) (0.281672,607.85) (0.524838,247.901) (0.220765,885.655) (0.470249,265.036) (0.416134,294.909) (0.526645,243.84)
};
\addlegendentry{interior}
\addplot+[only marks, mark=square*, mark size=1.8pt, mark options={draw=white, fill=polyMixed, line width=0.3pt}] coordinates {
  (0.209277,3049.18) (0.203095,3323.58) (0.947482,212.785) (0.553692,318.909) (0.725987,242.03) (0.74296,238.217) (0.737074,228.439) (0.50281,385.264) (0.513853,398.701) (0.552915,333.752) (0.648863,264.566) (0.590841,287.881) (0.716486,243.342) (0.748126,236.455) (0.560054,300.431) (0.710229,248.039) (0.62974,266.349) (0.488281,439.946) (0.46842,485.764) (0.489996,437.295) (0.669629,263.795) (0.484697,263.901) (0.620617,294.33) (0.770782,237.376)
};
\addlegendentry{mixed}
\addplot+[only marks, mark=triangle*, mark size=2.0pt, mark options={draw=white, fill=polyNearBoundary, line width=0.3pt}] coordinates {
  (0.0283246,231665) (0.515825,391.659) (0.674282,264.899) (8.60456e-05,2.67458e+10) (0.577359,326.244) (0.00535636,6.88104e+06) (0.194405,3543.44) (0.0198517,477957) (0.000164434,7.29117e+09) (0.00149252,8.82852e+07) (0.0310202,188530) (0.219491,2726.3) (0.0618541,45345) (0.00174002,6.33355e+07) (0.0602363,47451.6) (0.0025385,3.02598e+07) (0.000468199,8.9858e+08) (0.0308045,195413) (0.00312159,2.04513e+07) (0.00422509,1.10198e+07) (0.064838,41118.1) (0.0298993,208716) (0.0608049,46773.7) (0.0474542,79119.2)
};
\addlegendentry{near-boundary}
\addplot+[only marks, mark=diamond*, mark size=2.0pt, mark options={draw=white, fill=polyBoundary, line width=0.3pt}] coordinates {
  (0.00612494,4.81923e+06) (0.00279341,2.54924e+07) (0.0103504,1.79297e+06) (0.00174212,6.50151e+07) (0.00683711,4.19186e+06) (0.00147372,9.15032e+07) (0.00658443,4.49694e+06) (0.00135304,1.07226e+08) (0.00031168,2.02997e+09) (0.00290743,2.30667e+07) (0.00267238,2.69886e+07) (0.00117068,1.42772e+08) (0.00650551,4.54985e+06) (0.00844426,2.69219e+06) (0.00298458,2.18613e+07) (0.00998687,1.9077e+06) (0.00360178,1.51608e+07) (0.000615861,5.19629e+08) (0.00231188,3.69265e+07) (0.00119237,1.37546e+08) (0.0028913,2.35739e+07) (0.00341762,1.69825e+07) (0.00918178,1.97024e+06) (0.0024105,3.37832e+07)
};
\addlegendentry{boundary}
\addplot+[no marks, dashed, color=black, line width=0.8pt, forget plot] coordinates {
  (6.88365e-05,3.66103e+10) (7.59319e-05,3.00879e+10) (8.37586e-05,2.47276e+10) (9.23921e-05,2.03222e+10) (0.000101915,1.67017e+10) (0.000112421,1.37262e+10) (0.000124008,1.12808e+10) (0.000136791,9.27102e+09) (0.00015089,7.61933e+09) (0.000166444,6.2619e+09) (0.0001836,5.1463e+09) (0.000202525,4.22945e+09) (0.0002234,3.47595e+09) (0.000246427,2.85669e+09) (0.000271828,2.34775e+09) (0.000299847,1.92948e+09) (0.000330754,1.58573e+09) (0.000364847,1.30323e+09) (0.000402453,1.07105e+09) (0.000443937,8.80234e+08) (0.000489696,7.23415e+08) (0.000540172,5.94534e+08) (0.00059585,4.88614e+08) (0.000657268,4.01564e+08) (0.000725017,3.30023e+08) (0.000799748,2.71227e+08) (0.000882183,2.22906e+08) (0.000973115,1.83194e+08) (0.00107342,1.50557e+08) (0.00118406,1.23734e+08) (0.00130611,1.0169e+08) (0.00144074,8.35735e+07) (0.00158925,6.86844e+07) (0.00175306,5.64478e+07) (0.00193376,4.63913e+07) (0.00213308,3.81264e+07) (0.00235295,3.13339e+07) (0.00259548,2.57516e+07) (0.00286301,2.11638e+07) (0.00315812,1.73933e+07) (0.00348365,1.42946e+07) (0.00384273,1.17479e+07) (0.00423882,9.65495e+06) (0.00467574,7.93486e+06) (0.0051577,6.52122e+06) (0.00568933,5.35942e+06) (0.00627576,4.40461e+06) (0.00692264,3.6199e+06) (0.0076362,2.97499e+06) (0.00842331,2.44498e+06) (0.00929155,2.00939e+06) (0.0102493,1.6514e+06) (0.0113057,1.3572e+06) (0.0124711,1.1154e+06) (0.0137566,916687) (0.0151745,753373) (0.0167386,619155) (0.018464,508849) (0.0203672,418194) (0.0224666,343690) (0.0247823,282459) (0.0273368,232138) (0.0301545,190781) (0.0332627,156792) (0.0366913,128859) (0.0404733,105902) (0.0446451,87034.5) (0.049247,71528.8) (0.0543232,58785.5) (0.0599226,48312.5) (0.0660991,39705.3) (0.0729124,32631.5) (0.0804279,26818) (0.0887181,22040.2) (0.0978627,18113.6) (0.10795,14886.6) (0.119077,12234.4) (0.131351,10054.8) (0.14489,8263.46) (0.159825,6791.28) (0.176299,5581.37) (0.194471,4587.01) (0.214516,3769.81) (0.236628,3098.19) (0.261018,2546.23) (0.287923,2092.6) (0.317601,1719.79) (0.350338,1413.4) (0.38645,1161.59) (0.426283,954.649) (0.470223,784.572) (0.518691,644.796) (0.572156,529.921) (0.631131,435.512) (0.696186,357.923) (0.767946,294.157) (0.847102,241.751) (0.934418,198.681) (1.03073,163.285) (1.13698,134.195)
};
\node[
  anchor=west,
  font=\scriptsize,
  black,
  fill=white,
  fill opacity=0.78,
  text opacity=1,
  inner sep=1pt
]
  at (rel axis cs:0.56,0.82) {$\propto \gamma^{-2}$};
\end{axis}
\end{tikzpicture}
\end{minipage}
\hfill
\begin{minipage}[t]{0.465\textwidth}
\centering
\definecolor{regimeInterior}{HTML}{1E4D72}
\definecolor{regimeMixed}{HTML}{2E8B57}
\definecolor{regimeNearBdry}{HTML}{D69243}
\definecolor{regimeBoundary}{HTML}{B24A2F}
\begin{tikzpicture}
\begin{axis}[
  width=0.96\linewidth,
  height=0.55\linewidth,
  ylabel={Fraction of records (\%)},
  ylabel near ticks,
  xmin=-0.6,
  xmax=3.6,
  ymin=0,
  ymax=55,
  xtick={0,1,2,3},
  xticklabels={
    {interior},
    {mixed},
    {near-boundary},
    {boundary}
  },
  xticklabel style={align=center, font=\scriptsize},
  tick label style={font=\scriptsize},
  label style={font=\small},
  axis lines=left,
  grid=both,
  grid style={black!10, line width=0.3pt},
  ybar,
  bar width=0.55cm,
  nodes near coords={\pgfmathprintnumber[fixed,fixed zerofill,precision=1]{\pgfplotspointmeta}\%},
  every node near coord/.append style={font=\scriptsize},
  nodes near coords align={vertical},
]
\addplot+[ybar, fill=regimeInterior, draw=white, line width=0.3pt] coordinates {(0,22.304)};
\addplot+[ybar, fill=regimeMixed, draw=white, line width=0.3pt] coordinates {(1,30.147)};
\addplot+[ybar, fill=regimeNearBdry, draw=white, line width=0.3pt] coordinates {(2,46.814)};
\addplot+[ybar, fill=regimeBoundary, draw=white, line width=0.3pt] coordinates {(3,0.735)};
\end{axis}
\end{tikzpicture}
\end{minipage}
\caption{Empirical operating bins from real-world electricity load profiles after normalization
to $[-1,1]$.
(Left) First-identifiable polynomial-moment CRB versus pair
separation $\gamma=(z_2-z_1)/2$; the dashed line is the
$\gamma^{-2}$ reference, and colors follow the magnitude bins
in the right panel.
(Right) Distribution of normalized record magnitudes across these
bins.}
\label{fig:realdata}
\label{fig:realdata_summary}
\end{figure*}

\begin{figure}[t]
\centering
\resizebox{\columnwidth}{!}{\input{fig/tikz/exp4_threshold_queries_snippet.tex}}
\caption{Smooth-threshold CRB for representative $m=3$ triples under
uniform Basic Composition. Top: interior triples. Bottom: near-boundary
triples. Stars mark minimizing release counts
$L^\star$ within the plotted range.}
\label{fig:threshold}
\label{fig:threshold_queries}
\end{figure}

\begin{figure}[t]
\centering
\resizebox{\columnwidth}{!}{
\begin{tikzpicture}
\begin{groupplot}[
    group style={group name=expfive, group size=1 by 2, vertical sep=0.58cm},
    width=0.90\columnwidth,
    height=0.38\columnwidth,
    xmode=log,
    xmin=0.45, xmax=22,
    ymin=0.35, ymax=2200,
    ymode=log,
    grid=both,
    major grid style={draw=black!18},
    minor grid style={draw=black!8},
    tick align=outside,
    tick pos=left,
    xtick={0.5,1,2,5,10,20},
    xticklabels={$0.5$,$1$,$2$,$5$,$10$,$20$},
    label style={font=\small},
    tick label style={font=\footnotesize},
    legend style={
        draw=black!25,
        fill=white,
        fill opacity=0.78,
        text opacity=1,
        font=\scriptsize,
        cells={anchor=west},
        inner sep=1pt,
        row sep=1pt,
        column sep=2pt
    },
]

\nextgroupplot[
    ylabel={$\|\hat z - z\|_2^2$},
    xticklabels={},
    legend style={at={(0.97,0.97)}, anchor=north east, legend columns=3},
]
\addplot+[
    color=blue!70!black,
    mark=*,
    mark size=1.8pt,
    line width=0.9pt,
] coordinates {
    (0.5,1512.05010262) (1,378.012525656) (2,94.503131414) (5,15.1205010262) (10,3.78012525656) (20,0.94503131414)
};
\addlegendentry{Local CRB}
\addplot+[
    color=orange!85!black,
    dashed,
    mark=square*,
    mark size=1.8pt,
    line width=0.9pt,
    error bars/.cd, y dir=both, y explicit,
] coordinates {
    (0.5,758.098807191) +- (0,29.4232560201) (1,198.672405483) +- (0,7.61327994084) (2,52.3750213105) +- (0,2.04245381006) (5,8.49175377876) +- (0,0.296832882915) (10,2.495674019) +- (0,0.0793732216582) (20,0.841862227566) +- (0,0.0220364404881)
};
\addlegendentry{algebraic inversion}
\addplot+[
    color=green!50!black,
    dash dot,
    mark=triangle*,
    mark size=2.0pt,
    line width=0.9pt,
    error bars/.cd, y dir=both, y explicit,
] coordinates {
    (0.5,2.03999690294) +- (0,0.0271198380377) (1,1.98570454185) +- (0,0.0271135592351) (2,1.89587227926) +- (0,0.0273744816724) (5,1.62433406751) +- (0,0.0270958418823) (10,1.22700316613) +- (0,0.0247829049459) (20,0.694509561189) +- (0,0.0163443026913)
};
\addlegendentry{constrained MLE}

\nextgroupplot[
    ylabel={$\|\hat z - z\|_2^2$},
    xlabel={Total privacy budget $\varepsilon_{\mathrm{tot}}$},
    legend style={at={(0.97,0.97)}, anchor=north east, legend columns=3},
]
\addplot+[
    color=blue!70!black,
    mark=*,
    mark size=1.8pt,
    line width=0.9pt,
] coordinates {
    (0.5,1237.3975236) (1,309.349380899) (2,77.3373452248) (5,12.373975236) (10,3.09349380899) (20,0.773373452248)
};
\addplot+[
    color=orange!85!black,
    dashed,
    mark=square*,
    mark size=1.8pt,
    line width=0.9pt,
    error bars/.cd, y dir=both, y explicit,
] coordinates {
    (0.5,786.111180395) +- (0,30.405790936) (1,203.26639483) +- (0,8.13475990944) (2,51.8234374818) +- (0,1.91716322233) (5,8.86980681204) +- (0,0.308057555463) (10,2.67745549761) +- (0,0.0797258819451) (20,0.920847592235) +- (0,0.027095507463)
};
\addplot+[
    color=green!50!black,
    dash dot,
    mark=triangle*,
    mark size=2.0pt,
    line width=0.9pt,
    error bars/.cd, y dir=both, y explicit,
] coordinates {
    (0.5,2.61430157626) +- (0,0.0327276816567) (1,2.57852451799) +- (0,0.0331111438652) (2,2.48708843804) +- (0,0.0336331268468) (5,2.15192435002) +- (0,0.0345525223308) (10,1.68275615055) +- (0,0.0333859429097) (20,0.931871048787) +- (0,0.0247573997256)
};

\end{groupplot}
\end{tikzpicture}}
\caption{Finite-noise reconstruction for the $m=2$ polynomial-moment
release at $L=2$. The plots compare the local CRB, algebraic
moment inversion, and constrained Gaussian MLE as functions of
$\varepsilon_{\rm tot}$. Top: strict-interior pair. Bottom: boundary pair.}
\label{fig:mle}
\label{fig:lever2_monte_carlo_main}
\end{figure}
In Fig.~\ref{fig:realdata}, the left panel applies the $\gamma^{-2}$ scaling
in \eqref{eq:m2_L2} to real-data pairs. It plots the first-identifiable local
CRB at $L=L_{\mathrm{id}}=2$ against the pair separation
$\gamma=(z_2-z_1)/2$, with the dashed line giving the closed-form
$\gamma^{-2}$ reference. Each pair is assigned to the magnitude bin
containing $\max(|z_1|,|z_2|)$. The largest risks occur near collisions, as
expected from the $\gamma^{-2}$ factor in~\eqref{eq:m2_L2}. The right panel
summarizes the empirical distribution of normalized record magnitudes across
these descriptive bins; the thresholds are not theorem assumptions. Fully
boundary records are rare. Mixed and near-boundary records dominate the pool,
so finite data-dependent optima are the relevant case for these cohorts. The
near-boundary concentration should be interpreted relative to the
99th-percentile clipping and normalization to $[-1,1]$.

Representative real-data CRB-versus-$L$ curves for individual cohorts are
shown in the right panel of Fig.~\ref{fig:poly_synth}; they match the
synthetic mixed-case behavior in the left panel.

Smooth threshold releases give another feature-diverse fixed-membership
family with bounded sensitivity. We use logistic-smoothed releases
\[
q_\ell(\bx)
=
\frac{1}{m}\sum_{k\in C} h(x_k-t_\ell),
\]
where $h$ is a logistic smoothing of the step function with bandwidth
$\tau=0.12$ and the thresholds $t_\ell$ are placed on the midpoint grid
over $[-R,R]$. The exact local unlabeled CRB is evaluated for $m=3$,
$R=1$, and
$(\varepsilon_{\mathrm{tot}},\delta_{\mathrm{tot}})=(1,10^{-5})$.
Although Proposition~\ref{prop:threshold_truncation} is stated for
compactly supported $h'$, the logistic smoothing used here has localized
derivatives; a compact-support control is reported in
\appref{app:query_families}{app:logistic_smoothing}.

Figure~\ref{fig:threshold} shows finite minimizing release counts for smooth
threshold families. Additional thresholds initially improve local
identifiability; after the relevant eigendirections saturate, the composition
penalty dominates. Near-boundary triples have larger minimizing release counts
in the displayed cases, reflecting the data dependence of the threshold-grid
conditioning. These plots show the same finite-optimum behavior outside the
polynomial-moment family.

\subsection{Finite-Noise Reconstruction}
\label{subsec:experiments_mle}

The CRB used throughout is a local unbiased benchmark. In finite-noise cases,
especially with box constraints or boundary points, constrained MLEs can be
biased and may have MSE below the local unbiased CRB. We use MLE simulations as
an operational check of when the local benchmark predicts finite-noise
reconstruction.

We evaluate a constrained Gaussian MLE at the first identifiable stage
$L=2$ for the $m=2$ polynomial-moment release, using the strict-interior
pair $z=(-0.4,0.7)$ and the boundary pair $z=(-1.0,0.5)$. We sweep
$\varepsilon_{\mathrm{tot}}\in\{0.5,1,2,5,10,20\}$ with
$(\delta_{\mathrm{tot}},R)=(10^{-5},1)$, solve the box-constrained
Gaussian likelihood over $-1\le u_1<u_2\le1$, and report the
sorted-coordinate squared error $\|\widehat z-z\|_2^2$. Each point
averages $5{,}000$ Gaussian-noise trials.

Figure~\ref{fig:mle} shows that the constrained MLE follows the CRB trend at
moderate to large budgets, while small budgets expose bias and box-constraint
effects. The fixed-budget $L$-sweep in Fig.~\ref{fig:app_mle_vs_L} further
separates the local unbiased benchmark from constrained finite-noise risk in
these cases. In
summary, copies of one contrast do not buy more, while feature-diverse releases
stop helping once the weakest eigendirection lags $\kappa_L$.

\section{Conclusion}
\label{sec:conclusion}

This paper analyzed repeated DP aggregates through nuisance-profiled Fisher
information. The main conclusion is deliberately blunt: no new geometry, no
new target direction. Privacy accounting sets the slope of the risk curve, but
release geometry sets the identifiable span. Fixed-background \textsf{IN/OUT}
averages collapse to one profiled contrast with risk $\kappa_L/L$; Basic
Composition makes copied releases worse, while linear zCDP/RDP-style accounting
leaves them order-flat. Static permutation-invariant releases cannot identify labels,
but feature diversity can locally identify the sorted multiset until the
weakest eigendirections saturate relative to $\kappa_L$. The design rule is to
count identifiable directions, not transcripts.
\bibliographystyle{IEEEtran}
\bibliography{ref}


\clearpage


\appendices

\section{Linear-Algebra and Fisher-Information Identities}
\label{app:framework}

\subsection{Block Jacobians and Kronecker Lifting}
\label{app:block_jacobians}

For vector-valued records $\bx_k\in\R^p$, the Jacobian with respect to
record $\bx_k$ is the block
\[
\bJ_k
:=
\frac{\partial \bq(\bx_{1:N})}{\partial \bx_k^\top}
\in\R^{L\times p},
\qquad
\bJ=
\begin{bmatrix}
\bJ_1 & \cdots & \bJ_N
\end{bmatrix}.
\]
This is the block-Jacobian convention used in
Section~\ref{sec:framework} of the main text. For a coordinate-wise linear
lift of a scalar design $\bA\in\R^{L\times N}$, we stack the $p$
coordinates as separate scalar observations and write
\[
\operatorname{vec}(\by)
=
(\bA\otimes \bI_p)\operatorname{vec}(\bx_{1:N})+\bw.
\]
In the isotropic coordinate-wise noise model
$\bw\sim\mathcal N(\bm 0,\bSigma\otimes \bI_p)$, the lifted Jacobian,
whitened Jacobian, and Fisher matrix are
\[
\begin{aligned}
\bJ&=\bA\otimes \bI_p,\\
\widetilde{\bJ}&=(\bSigma^{-1/2}\bA)\otimes \bI_p,\\
\bI&=(\bA^\top\bSigma^{-1}\bA)\otimes \bI_p.
\end{aligned}
\]
More general separable coordinate covariances replace
$\bSigma\otimes \bI_p$ by $\bSigma\otimes \bSigma_0$ and lead to the
corresponding block lift. In this sense, the vector-valued labeled-target
setting of Section~\ref{sec:lever1} of the main text follows directly from
the scalar derivations. For repeated \textsf{IN/OUT} averages with isotropic
coordinate-wise noise, the trace bound is exactly $p$ times the scalar bound:
\[
\mathbb E\|\widehat{\bx}_j-\bx_j\|_2^2
\ge
p\left(
\frac{(n+1)^2\sigma_{\mathrm{in}}^2}{L_{\mathrm{in}}}
+
\frac{n^2\sigma_{\mathrm{out}}^2}{L_{\mathrm{out}}}
\right),
\]
with equality for the coordinate-wise BLUE.

\subsection{Laplace-Calibrated Location Noise}
\label{app:laplace_noise}

Independent Laplace location noise gives the same Fisher geometry after
replacing $\bSigma^{-1}$ by the diagonal Fisher weight matrix
$\bW_{\mathrm{Lap}}=\operatorname{diag}(b_1^{-2},\ldots,b_L^{-2})$.
For the standard Laplace mechanism,
$b_\ell=\Delta_{1,\ell}/\varepsilon_\ell$. Uniform budget splitting still
appends rows while reducing per-release precision. The geometry is
unchanged; only the accountant-dependent scalar weight differs.

\subsection{zCDP/RDP Scalar Calibration for Gaussian Releases}
\label{app:zcdp_rdp_calibration}

The main text writes the Gaussian noise variance in the normalized form
$\sigma_\ell^2=\kappa_L\Delta_{2,\ell}^2$. For zCDP, a scalar Gaussian
release with sensitivity $\Delta_{2,\ell}$ and variance $\sigma_\ell^2$ has
concentrated privacy parameter
\[
\rho_\ell=\frac{\Delta_{2,\ell}^2}{2\sigma_\ell^2}.
\]
Uniformly splitting a fixed total zCDP budget $\rho_{\rm tot}$ gives
$\rho_\ell=\rho_{\rm tot}/L$, and therefore
\[
\sigma_\ell^2=\frac{L}{2\rho_{\rm tot}}\Delta_{2,\ell}^2,
\qquad
\kappa_L^{\rm zCDP}=\frac{L}{2\rho_{\rm tot}}=\Theta(L).
\]
At a fixed R\'enyi order $\alpha>1$, the Gaussian release is
$(\alpha,\alpha\Delta_{2,\ell}^2/(2\sigma_\ell^2))$-RDP. If a fixed total
R\'enyi budget $\varepsilon_{\alpha,{\rm tot}}$ is split uniformly, then
\[
\sigma_\ell^2
=
\frac{\alpha L}{2\varepsilon_{\alpha,{\rm tot}}}
\Delta_{2,\ell}^2,
\qquad
\kappa_L^{\rm RDP}
=
\frac{\alpha L}{2\varepsilon_{\alpha,{\rm tot}}}
=
\Theta(L).
\]
Thus zCDP and fixed-order RDP change only the scalar accountant penalty in the
Fisher formulas. In the one-contrast \textsf{IN/OUT} case the risk is
$\kappa_L/L$, so these accountants give order-flat risk rather than the
$\Theta(L\log L)$ growth induced by Basic Composition. For feature-diverse
queries, the same substitution leaves the spectral criterion
$\lambda_{\min}(\bA_L(z))/\kappa_L$ unchanged in form.

\subsection{Projector Form and Variational Characterization of the Schur Complement}
\label{app:projector_variational}

\begin{proof}[Derivation of main Eq.~\eqref{eq:Ieff_projector}]
Since $\bI(\theta,\eta)=\widetilde{\bJ}^\top\widetilde{\bJ}$, partition
conformably as
\[
\widetilde{\bJ}
=
\begin{bmatrix}
\widetilde{\bJ}_\theta & \widetilde{\bJ}_\eta
\end{bmatrix}.
\]
Then
\[
\bI_{\theta\theta}
=
\widetilde{\bJ}_\theta^\top\widetilde{\bJ}_\theta,\qquad
\bI_{\theta\eta}
=
\widetilde{\bJ}_\theta^\top\widetilde{\bJ}_\eta,\qquad
\bI_{\eta\eta}
=
\widetilde{\bJ}_\eta^\top\widetilde{\bJ}_\eta.
\]
Substituting into main Eq.~\eqref{eq:Ieff_def} gives
\[
\bI_{\mathrm{eff}}^{(\theta)}
=
\widetilde{\bJ}_\theta^\top\widetilde{\bJ}_\theta
-
\widetilde{\bJ}_\theta^\top\widetilde{\bJ}_\eta
(\widetilde{\bJ}_\eta^\top\widetilde{\bJ}_\eta)^\dagger
\widetilde{\bJ}_\eta^\top\widetilde{\bJ}_\theta.
\]
Using the Moore--Penrose identity
$\bA(\bA^\top\bA)^\dagger\bA^\top=\bA\bA^\dagger$ with
$\bA=\widetilde{\bJ}_\eta$, we obtain
\[
\widetilde{\bJ}_\eta
(\widetilde{\bJ}_\eta^\top\widetilde{\bJ}_\eta)^\dagger
\widetilde{\bJ}_\eta^\top
=
\widetilde{\bJ}_\eta\widetilde{\bJ}_\eta^\dagger
=:\bP_\eta,
\]
which is the orthogonal projector onto
$\range(\widetilde{\bJ}_\eta)$.
Therefore,
\[
\bI_{\mathrm{eff}}^{(\theta)}
=
\widetilde{\bJ}_\theta^\top(\bI-\bP_\eta)\widetilde{\bJ}_\theta,
\]
which is main Eq.~\eqref{eq:Ieff_projector}.
\end{proof}

\paragraph*{Variational form.}
\label{prop:ieff_variational}
Write $\bA:=\bI_{\theta\theta}$, $\bC:=\bI_{\theta\eta}$, and $\bB:=\bI_{\eta\eta}$. Then
\begin{equation}
\label{eq:ieff_variational}
\bI_{\mathrm{eff}}^{(\theta)}
=
\min_{V}
\begin{bmatrix}\bI_r\\-V\end{bmatrix}^{\!\top}
\bI(\theta,\eta)
\begin{bmatrix}\bI_r\\-V\end{bmatrix}
=
\bA-\bC\bB^\dagger\bC^\top,
\end{equation}
where the minimum is in the Loewner order. Indeed,
\[
Q(V)=\bA-\bC V-V^\top\bC^\top+V^\top\bB V.
\]
Because $\bC^\top\in\range(\bB)$, $\bB\bB^\dagger\bC^\top=\bC^\top$. Completing the square with $V^\star=\bB^\dagger\bC^\top$ gives
\[
Q(V)=\bA-\bC\bB^\dagger\bC^\top+(V-V^\star)^\top\bB(V-V^\star),
\]
which proves \eqref{eq:ieff_variational}; if $\bB\succ0$, the minimizer is unique.

\subsection{Linear-Gaussian BLUE Identity}
\label{app:linear_blue_identity}

For the linear model in main Eq.~\eqref{eq:lever1_linear_model}, let
\[
\bG:=\bA^\top\bSigma^{-1}\bA.
\]
If the target functional $x_j=\be_j^\top\bx$ is identifiable, equivalently
$\be_j\in\range(\bG)$, the generalized least-squares estimator
\[
\widehat{x}_j
=
\be_j^\top \bG^\dagger \bA^\top\bSigma^{-1}\by
\]
is unbiased and has variance
\[
\Var(\widehat{x}_j)
=
\be_j^\top\bG^\dagger\be_j.
\]
Indeed, $\bG^\dagger\bG\be_j=\be_j$ for identifiable $\be_j$, and the
variance calculation follows from
$\bA^\top\bSigma^{-1}\bSigma\bSigma^{-1}\bA=\bG$ and the Moore--Penrose
identity $\bG^\dagger\bG\bG^\dagger=\bG^\dagger$. Since the Gaussian
linear experiment has Fisher matrix $\bG$, this variance equals the
Cram\'er--Rao bound for the identifiable target functional, giving main
Eq.~\eqref{eq:mse_optimal_s4}.

\subsection{Proof of Theorem~\ref{thm:spine} in the main text: Incremental Fisher Update}
\label{app:spine_rankone}

The proof has two parts: monotonicity for an added independent release,
followed by the exact rank-one update when the nuisance block is
invertible.

Let the stage-$L$ Fisher information matrix be partitioned as
\[
\bI^{(L)}
=
\begin{pmatrix}
\bA & \bC\\
\bC^\top & \bB
\end{pmatrix}.
\]
Here
\[
\bA:=\bI_{\theta\theta}^{(L)},\qquad
\bC:=\bI_{\theta\eta}^{(L)},\qquad
\bB:=\bI_{\eta\eta}^{(L)},
\]
and define
\[
Q_L(V)
:=
\begin{bmatrix}
\bI_r\\
-V
\end{bmatrix}^{\!\top}
\bI^{(L)}
\begin{bmatrix}
\bI_r\\
-V
\end{bmatrix}.
\]

Now append one more whitened observation row
\[
\begin{aligned}
\tilde{\bg}_{L+1}
&=
\begin{bmatrix}
a^\top & b^\top
\end{bmatrix},
\\
a:=a_{L+1}\in\R^r,
\qquad
b:=b_{L+1}\in\R^{Np-r}.
\end{aligned}
\]
Since Fisher information is additive for independent Gaussian
observations,
\[
\bI^{(L+1)}
=
\bI^{(L)}+\tilde{\bg}_{L+1}^\top\tilde{\bg}_{L+1}
=
\begin{pmatrix}
\bA+aa^\top & \bC+ab^\top\\
\bC^\top+ba^\top & \bB+bb^\top
\end{pmatrix}.
\]
For every $V\in\R^{(Np-r)\times r}$,
\[
\begin{aligned}
Q_{L+1}(V)
&:=
\begin{bmatrix}
\bI_r\\
-V
\end{bmatrix}^{\!\top}
\bI^{(L+1)}
\begin{bmatrix}
\bI_r\\
-V
\end{bmatrix}
\\
&=
Q_L(V)
+
(a-V^\top b)(a-V^\top b)^\top
\succeq
Q_L(V).
\end{aligned}
\]
By the variational identity~\eqref{eq:ieff_variational}, the effective Fisher
information is the minimum of $Q_L(V)$ in the Loewner order. Let
$V_L^\star$ and $V_{L+1}^\star$ be minimizers for the stage-$L$ and
stage-$(L+1)$ variational problems. Then
\[
\bI_{\mathrm{eff}}^{(\theta),L+1}
=
Q_{L+1}(V_{L+1}^\star)
\succeq
Q_L(V_{L+1}^\star)
\succeq
Q_L(V_L^\star)
=
\bI_{\mathrm{eff}}^{(\theta),L}.
\]
The second inequality uses the Loewner minimality of $V_L^\star$ for
the stage-$L$ problem. Thus
\[
\bI_{\mathrm{eff}}^{(\theta),L+1}
\succeq
\bI_{\mathrm{eff}}^{(\theta),L},
\]
which proves monotonicity without any invertibility assumption on
$\bB$.

We now turn to the exact rank-one update in the case $\bB\succ 0$,
where
\[
\bI_{\mathrm{eff}}^{(\theta),L}
=
\bA-\bC\bB^{-1}\bC^\top.
\]
Therefore
\begin{equation}
\label{eq:app_schur_Lplus1}
\bI_{\mathrm{eff}}^{(\theta),L+1}
=
(\bA+aa^\top)
-
(\bC+ab^\top)(\bB+bb^\top)^{-1}(\bC^\top+ba^\top).
\end{equation}

Applying the Sherman--Morrison identity gives
\[
(\bB+bb^\top)^{-1}
=
\bB^{-1}
-
\frac{\bB^{-1}bb^\top\bB^{-1}}{1+b^\top\bB^{-1}b}.
\]
Define
\[
\bu:=\bC\bB^{-1}b,
\qquad
t:=b^\top\bB^{-1}b.
\]
Then
\[
(\bC+ab^\top)\bB^{-1}(\bC^\top+ba^\top)
=
\bC\bB^{-1}\bC^\top+\bu a^\top+a\bu^\top+t\,aa^\top,
\]
and
\[
(\bC+ab^\top)\bB^{-1}bb^\top\bB^{-1}(\bC^\top+ba^\top)
=
(\bu+t a)(\bu+t a)^\top.
\]
Substituting these identities into~\eqref{eq:app_schur_Lplus1} yields
\begin{align*}
\bI_{\mathrm{eff}}^{(\theta),L+1}
&=
\bA-\bC\bB^{-1}\bC^\top
\\
&\quad
{}+aa^\top-\bu a^\top-a\bu^\top-t\,aa^\top
\nonumber\\
&\quad
{}+
\frac{(\bu+t a)(\bu+t a)^\top}{1+t}.
\end{align*}
Combining the last two terms over the common denominator $1+t$ gives
\[
\begin{aligned}
&aa^\top-\bu a^\top-a\bu^\top-t\,aa^\top\\
&\quad{}+
\frac{(\bu+t a)(\bu+t a)^\top}{1+t}\\
&=
\frac{(a-\bu)(a-\bu)^\top}{1+t}.
\end{aligned}
\]
Hence
\begin{equation}
\label{eq:app_rankone_update}
\bI_{\mathrm{eff}}^{(\theta),L+1}
=
\bI_{\mathrm{eff}}^{(\theta),L}
+
\frac{(a-\bC\bB^{-1}b)(a-\bC\bB^{-1}b)^\top}
{1+b^\top\bB^{-1}b}.
\end{equation}
Recalling that
\[
d_{L+1}
:=
a_{L+1}
-
\bI_{\theta\eta}^{(L)}
\bigl(\bI_{\eta\eta}^{(L)}\bigr)^{-1}
b_{L+1},
\]
we obtain exactly main Eq.~\eqref{eq:spine_update}. Since the denominator is
strictly positive, equality holds if and only if the rank-one
numerator vanishes, i.e., if and only if $d_{L+1}=\bm 0$, proving
main Eq.~\eqref{eq:spine_iff}.

\section{Labeled-Target Reconstruction}
\label{app:lever1}

\subsection{One-Contrast Gram Geometry}
\label{app:one_contrast_gram}

\paragraph{Release-level routes to nuisance permutation symmetry}
The Gram-level symmetry in the main text is obtained whenever the whitened
release Jacobians are invariant under relabeling the nuisance records. A
minimal sufficient condition is that, for every release row, all nuisance
coordinates in $B$ have the same derivative weight, and that any rows which
distinguish individual nuisance records are included together with their full
permutation orbit and equal noise weights. These conditions force the
restricted Gram matrix to have equal target--nuisance couplings, equal nuisance
diagonal entries, and equal nuisance off-diagonal entries, which is precisely
the block form in main Eq.~\eqref{eq:lever1_blocksym_gram}. If the orbit spans
both the nuisance-average direction and the nuisance-orthogonal directions,
then the nuisance block $\bB$ in the main lemma is nonsingular.

A degenerate but useful example is static membership $C_\ell\equiv C$ with a
common feature map $\phi_\ell\equiv\phi$ and a common operating-point derivative
$J_\phi(\bx_k)=J_\phi^\star$ for every $k\in C$. Then main
Eq.~\eqref{eq:deepsets_grad} reduces to
\[
\frac{\partial q_\ell(\bx_{1:N})}{\partial x_k}
=
J_{\rho_\ell}(s_\ell)\,J_\phi^\star,
\qquad k\in C.
\]
All restricted Jacobian columns indexed by $k\in C$ are identical, so the
Gram matrix is invariant under nuisance permutations. In this degenerate case
the nuisance block may be singular; it illustrates the symmetry mechanism but
not the positive-definite nuisance-block condition.

\paragraph{Permutation-invariant Gram form on the nuisance block}
Permutation invariance among the nuisance coordinates implies three equalities:
all entries of the target-to-nuisance block are equal, all diagonal entries
of the nuisance block are equal, and all off-diagonal entries of the
nuisance block are equal. These constraints yield the block form in main
Eq.~\eqref{eq:lever1_blocksym_gram}.

\begin{proof}[Proof of Lemma~\ref{thm:lever1_onecontrast} in the main text]
Write
\[
\bG_C
=
\begin{bmatrix}
a & c\,\mathbf 1^\top\\
c\,\mathbf 1 & \bB
\end{bmatrix},
\qquad
\bB=(d-e)\bI_{m-1}+e\,\mathbf 1\mathbf 1^\top.
\]
Since $\bB\succ 0$, the Schur complement gives
\[
\bI_{\mathrm{eff}}^{(j)}
=
a-c^2\,\mathbf 1^\top \bB^{-1}\mathbf 1.
\]
Using the rank-one inversion formula,
\[
\bB^{-1}
=
\frac{1}{d-e}\bI_{m-1}
-
\frac{e}{(d-e)(d+(m-2)e)}\mathbf 1\mathbf 1^\top,
\]
and therefore
\[
\mathbf 1^\top \bB^{-1}\mathbf 1
=
\frac{m-1}{d+(m-2)e}.
\]
Substituting this expression into the Schur complement yields main
Eq.~\eqref{eq:lever1_onecontrast_ieff}. The target-to-nuisance coupling is
proportional to $\mathbf 1$, so nuisance directions orthogonal to
$\mathbf 1$ do not enter the Schur complement.
\end{proof}

\subsection{Reduction to the Two-Parameter \textsf{IN/OUT} Experiment}
\label{app:lever1_fixed_background_limit}

\begin{proof}[Reduction to two parameters]
In the repeated \textsf{IN/OUT} family of Section~\ref{sec:lever1}, every
release belongs to one of two types. For an \textsf{IN} release,
\[
\begin{aligned}
y_\ell
&=
\frac{1}{n+1}\Big(x_j+\sum_{k\in B}x_k\Big)+w_\ell
\\
&=
\frac{1}{n+1}(\theta_1+\theta_2)+w_\ell,
\qquad \ell\in\mathcal I_{\mathrm{in}},
\end{aligned}
\]
whereas for an \textsf{OUT} release,
\[
\begin{aligned}
y_\ell
&=
\frac{1}{n}\sum_{k\in B}x_k+w_\ell
\\
&=
\frac{1}{n}\theta_2+w_\ell,
\qquad \ell\in\mathcal I_{\mathrm{out}}.
\end{aligned}
\]
Hence the full observation vector $\by$ depends on $\bx$ only through
$(\theta_1,\theta_2)$, which proves the first claim. Since
$\theta_1\equiv x_j$, target reconstruction in the full experiment is
exactly reconstruction of the first coordinate in this reduced
experiment. The two reduced mean rows are
$((n+1)^{-1},(n+1)^{-1})$ and $(0,n^{-1})$, whose determinant is
$1/(n(n+1))\neq0$. Thus the reduced experiment is locally identifiable
whenever both an \textsf{IN} and an \textsf{OUT} block are present. If
$L_{\mathrm{out}}=0$, the data contain only the combination
$\theta_1+\theta_2$; if $L_{\mathrm{in}}=0$, the target coordinate is
absent. These are the confounding cases excluded in the main text.

It remains to check that replacing each block by its empirical mean
preserves the Fisher information for estimating $\theta_1$ in the
presence of the nuisance parameter $\theta_2$.
Conditioned on $(\theta_1,\theta_2)$, the \textsf{IN} observations are independent
Gaussians with common mean
\[
\mu_{\mathrm{in}}(\theta_1,\theta_2)=\frac{1}{n+1}(\theta_1+\theta_2)
\]
and common variance $\sigma_{\mathrm{in}}^2$, while the \textsf{OUT} observations are
independent Gaussians with common mean
\[
\mu_{\mathrm{out}}(\theta_2)=\frac{1}{n}\theta_2
\]
and common variance $\sigma_{\mathrm{out}}^2$.
Let
$\varphi_\sigma(t):=(2\pi\sigma^2)^{-1/2}\exp\{-t^2/(2\sigma^2)\}$.
The joint likelihood therefore factors as
\[
\begin{aligned}
p(\by\mid \theta_1,\theta_2)
&=
\prod_{\ell\in\mathcal I_{\mathrm{in}}}
\varphi_{\sigma_{\mathrm{in}}}\!\left(y_\ell-\mu_{\mathrm{in}}\right)
\\
&\quad{}\times
\prod_{\ell\in\mathcal I_{\mathrm{out}}}
\varphi_{\sigma_{\mathrm{out}}}\!\left(y_\ell-\mu_{\mathrm{out}}\right).
\end{aligned}
\]
Using the identity
\[
\sum_{\ell\in\mathcal I_{\mathrm{in}}}(y_\ell-\mu)^2
=
\sum_{\ell\in\mathcal I_{\mathrm{in}}}(y_\ell-\bar y_{\mathrm{in}})^2
+
L_{\mathrm{in}}(\bar y_{\mathrm{in}}-\mu)^2,
\]
and the analogous decomposition for the \textsf{OUT} block, we obtain the
factorization
\[
p(\by\mid \theta_1,\theta_2)
=
h(\by)\,
g(\bar y_{\mathrm{in}},\bar y_{\mathrm{out}};\theta_1,\theta_2),
\]
where $h(\by)$ is free of $(\theta_1,\theta_2)$.
By the Fisher--Neyman factorization theorem,
$(\bar y_{\mathrm{in}},\bar y_{\mathrm{out}})$ is sufficient for
$(\theta_1,\theta_2)$.

Moreover, since the two blocks are independent Gaussians, the compressed pair
satisfies
\[
\bar y_{\mathrm{in}}
=
\frac{1}{n+1}(\theta_1+\theta_2)+\bar w_{\mathrm{in}},
\qquad
\bar w_{\mathrm{in}}\sim
\mathcal N\!\Big(0,\frac{\sigma_{\mathrm{in}}^2}{L_{\mathrm{in}}}\Big),
\]
and
\[
\bar y_{\mathrm{out}}
=
\frac{1}{n}\theta_2+\bar w_{\mathrm{out}},
\qquad
\bar w_{\mathrm{out}}\sim
\mathcal N\!\Big(0,\frac{\sigma_{\mathrm{out}}^2}{L_{\mathrm{out}}}\Big),
\]
with $\bar w_{\mathrm{in}}$ and $\bar w_{\mathrm{out}}$ independent. The
reduced $2\times2$ design matrix has rank two exactly when
$L_{\mathrm{in}},L_{\mathrm{out}}>0$; otherwise $\theta_1=x_j$ is confounded
with the background sum $\theta_2$. This is the identifiability condition
used in Proposition~\ref{prop:fixed_background_crb} of the main text.
The Fisher information matrix of the full sample is the sum of the
per-sample Fisher information matrices:
\[
\bI_{\mathrm{full}}
=
\frac{L_{\mathrm{in}}}{\sigma_{\mathrm{in}}^2}
\begin{bmatrix}
\frac{1}{(n+1)^2} & \frac{1}{(n+1)^2}\\[3pt]
\frac{1}{(n+1)^2} & \frac{1}{(n+1)^2}
\end{bmatrix}
+
\frac{L_{\mathrm{out}}}{\sigma_{\mathrm{out}}^2}
\begin{bmatrix}
0 & 0\\[3pt]
0 & \frac{1}{n^2}
\end{bmatrix}.
\]
The Fisher information matrix of the compressed model
$(\bar y_{\mathrm{in}},\bar y_{\mathrm{out}})$ is exactly the same, because
compressing $L_{\mathrm{in}}$ i.i.d.\ Gaussian samples with common mean into
$\bar y_{\mathrm{in}}$ simply replaces variance $\sigma_{\mathrm{in}}^2$ by
$\sigma_{\mathrm{in}}^2/L_{\mathrm{in}}$, and likewise for the \textsf{OUT} block.
Hence the Fisher information for estimating $\theta_1$ in the presence of
nuisance $\theta_2$ is preserved by the block means.
\end{proof}

\subsection{Exact CRB and BLUE Attainability}
\label{app:fixed_background_crb_blue}

\begin{proof}[Proof of Proposition~\ref{prop:fixed_background_crb} in the main text]
Consider the reduced model
\[
\bar y_{\mathrm{in}}
=
\frac{1}{n+1}(\theta_1+\theta_2)+\bar w_{\mathrm{in}},
\qquad
\bar y_{\mathrm{out}}
=
\frac{1}{n}\theta_2+\bar w_{\mathrm{out}},
\]
where
\[
\bar w_{\mathrm{in}}\sim
\mathcal N\!\Big(0,\frac{\sigma_{\mathrm{in}}^2}{L_{\mathrm{in}}}\Big),
\qquad
\bar w_{\mathrm{out}}\sim
\mathcal N\!\Big(0,\frac{\sigma_{\mathrm{out}}^2}{L_{\mathrm{out}}}\Big),
\]
and the two noises are independent.
Writing
\[
\bar{\by}
:=
\begin{bmatrix}
\bar y_{\mathrm{in}}\\
\bar y_{\mathrm{out}}
\end{bmatrix},
\qquad
\btheta
:=
\begin{bmatrix}
\theta_1\\
\theta_2
\end{bmatrix},
\]
the model is
\[
\begin{aligned}
\bar{\by}&=\bM \btheta + \bar{\bw},\\
\bar{\bw}&\sim \mathcal N(0,\bSigma_{\rm red}),
\end{aligned}
\]
where
\[
\bM
=
\begin{bmatrix}
\frac{1}{n+1} & \frac{1}{n+1}\\[4pt]
0 & \frac{1}{n}
\end{bmatrix},
\]
with
\[
\bSigma_{\rm red}
=
\begin{bmatrix}
\sigma_{\mathrm{in}}^2/L_{\mathrm{in}} & 0\\[3pt]
0 & \sigma_{\mathrm{out}}^2/L_{\mathrm{out}}
\end{bmatrix}.
\]

The Fisher information matrix for $(\theta_1,\theta_2)$ is therefore
\[
\bI_{\rm red}
=
\bM^\top \bSigma_{\rm red}^{-1}\bM.
\]
Define
\[
a:=\frac{L_{\mathrm{in}}}{(n+1)^2\sigma_{\mathrm{in}}^2},
\qquad
b:=\frac{L_{\mathrm{out}}}{n^2\sigma_{\mathrm{out}}^2}.
\]
A direct calculation gives
\[
\bI_{\rm red}
=
\begin{bmatrix}
a & a\\
a & a+b
\end{bmatrix}.
\]

To estimate the target parameter $\theta_1$ in the presence of nuisance $\theta_2$,
the effective Fisher information is the Schur complement of the $(2,2)$ entry:
\[
I_{\mathrm{eff}}^{(\theta_1)}
=
a-\frac{a^2}{a+b}
=
\frac{ab}{a+b}.
\]
Hence the nuisance-aware Cram\'er--Rao bound is
\[
\mathrm{MSE}^*(\theta_1)
=
\frac{1}{I_{\mathrm{eff}}^{(\theta_1)}}
=
\frac{a+b}{ab}
=
\frac{1}{a}+\frac{1}{b}.
\]
Substituting the definitions of $a$ and $b$ yields
\[
\mathrm{MSE}^*(x_j)
=
\frac{(n+1)^2\sigma_{\mathrm{in}}^2}{L_{\mathrm{in}}}
+
\frac{n^2\sigma_{\mathrm{out}}^2}{L_{\mathrm{out}}},
\]
which proves main Eq.~\eqref{eq:crb_fixed_background}.

It remains to show attainability and identify the BLUE.
Consider an estimator of the form
\[
\hat\theta_1=\alpha \bar y_{\mathrm{in}}+\beta \bar y_{\mathrm{out}}.
\]
For unbiasedness for all $(\theta_1,\theta_2)$, we require
\[
\mathbb E[\hat\theta_1]
=
\alpha\frac{1}{n+1}(\theta_1+\theta_2)+\beta\frac{1}{n}\theta_2
=
\theta_1.
\]
Matching coefficients of $\theta_1$ and $\theta_2$ gives
\[
\frac{\alpha}{n+1}=1,
\qquad
\frac{\alpha}{n+1}+\frac{\beta}{n}=0,
\]
hence
\[
\alpha=n+1,
\qquad
\beta=-n.
\]
Therefore the unique unbiased linear estimator is
\[
\hat x_j
=
(n+1)\bar y_{\mathrm{in}}-n\bar y_{\mathrm{out}}.
\]
Its variance is
\[
\begin{aligned}
\Var(\hat x_j)
&=
(n+1)^2\Var(\bar y_{\mathrm{in}})
+n^2\Var(\bar y_{\mathrm{out}})
\\
&=
\frac{(n+1)^2\sigma_{\mathrm{in}}^2}{L_{\mathrm{in}}}
+
\frac{n^2\sigma_{\mathrm{out}}^2}{L_{\mathrm{out}}},
\end{aligned}
\]
since the two block means are independent.
This variance equals the CRB, so the estimator is BLUE and attains the
minimum achievable MSE.

For the continuous relaxation with
\[
L_{\mathrm{in}},L_{\mathrm{out}}>0,
\qquad
L_{\mathrm{in}}+L_{\mathrm{out}}=L,
\]
write
\[
\alpha:=(n+1)^2\sigma_{\mathrm{in}}^2,
\qquad
\beta:=n^2\sigma_{\mathrm{out}}^2.
\]
Then
\[
\mathrm{MSE}^*(x_j)
=
\frac{\alpha}{L_{\mathrm{in}}}
+
\frac{\beta}{L_{\mathrm{out}}}.
\]
By the Cauchy--Schwarz inequality,
\[
\left(
\frac{\alpha}{L_{\mathrm{in}}}
+
\frac{\beta}{L_{\mathrm{out}}}
\right)
(L_{\mathrm{in}}+L_{\mathrm{out}})
\ge
(\sqrt{\alpha}+\sqrt{\beta})^2,
\]
with equality if and only if
\[
\frac{L_{\mathrm{in}}}{L_{\mathrm{out}}}
=
\frac{\sqrt{\alpha}}{\sqrt{\beta}}
=
\frac{n+1}{n}\,\frac{\sigma_{\mathrm{in}}}{\sigma_{\mathrm{out}}}.
\]
Since $L_{\mathrm{in}}+L_{\mathrm{out}}=L$, this proves
main Eq.~\eqref{eq:opt_split_s4} and yields
\[
\begin{aligned}
\min_{\substack{L_{\mathrm{in}},L_{\mathrm{out}}>0\\
L_{\mathrm{in}}+L_{\mathrm{out}}=L}}
\mathrm{MSE}^*(x_j)
&=
\frac{(\sqrt{\alpha}+\sqrt{\beta})^2}{L}
\\
&=
\frac{1}{L}\Bigl((n+1)\sigma_{\mathrm{in}}+n\sigma_{\mathrm{out}}\Bigr)^2,
\end{aligned}
\]
which is main Eq.~\eqref{eq:mse_inout_opt_s4}. Because the objective is strictly
convex in $L_{\mathrm{in}}\in(0,L)$, the optimal integer split is
obtained by the nearest feasible rounding around this continuous
optimum.
\end{proof}

\subsection{Non-Uniform Allocation}
\label{app:lever1_nonuniform}

\begin{proof}[Proof of Proposition~\ref{prop:lever1_nonuniform} in the main text]
For subset averages,
\[
\Delta_{2,\mathrm{in}}=\frac{\Delta}{n+1},
\qquad
\Delta_{2,\mathrm{out}}=\frac{\Delta}{n}.
\]
Substituting the Gaussian calibration into main Eq.~\eqref{eq:crb_fixed_background}
gives main Eq.~\eqref{eq:mse_inout_eps_delta}; the factors $(n+1)^2$ and $n^2$ from the
BLUE coefficients cancel the corresponding sensitivities.

For fixed $(L_{\mathrm{in}},L_{\mathrm{out}})$ and fixed $(\delta_{\mathrm{in}},\delta_{\mathrm{out}})$, set
$c_{\mathrm{in}}:=\ln(1.25/\delta_{\mathrm{in}})$,
$c_{\mathrm{out}}:=\ln(1.25/\delta_{\mathrm{out}})$,
$u:=L_{\mathrm{in}}\varepsilon_{\mathrm{in}}$, and $v:=L_{\mathrm{out}}\varepsilon_{\mathrm{out}}$.
Then $u+v=\varepsilon_{\mathrm{tot}}$ and
\begin{equation}
\label{eq:mse_inout_opt_eps}
\min_{\varepsilon}\mathrm{MSE}^*(x_j)
=
\frac{2\Delta^2}{\varepsilon_{\mathrm{tot}}^{\,2}}
\Big(
c_{\mathrm{in}}^{1/3}L_{\mathrm{in}}^{1/3}
+
c_{\mathrm{out}}^{1/3}L_{\mathrm{out}}^{1/3}
\Big)^3.
\end{equation}
Indeed,
\[
\mathrm{MSE}^*(x_j)
=
2\Delta^2\left(\frac{c_{\mathrm{in}}L_{\mathrm{in}}}{u^2}+\frac{c_{\mathrm{out}}L_{\mathrm{out}}}{v^2}\right),
\]
and the KKT condition
\[
-\frac{2c_{\mathrm{in}}L_{\mathrm{in}}}{u^3}
=
-\frac{2c_{\mathrm{out}}L_{\mathrm{out}}}{v^3}
\]
gives the cube-root rule in main Eq.~\eqref{eq:opt_u_v_ratio}.
Under symmetric calibration $c_{\mathrm{in}}=c_{\mathrm{out}}=:c$, \eqref{eq:mse_inout_opt_eps} reduces to
main Eq.~\eqref{eq:mse_inout_opt_eps_symmetric}. For comparing release counts
under a fixed total failure budget $\delta_{\mathrm{tot}}$, take the
feasible symmetric allocation
\[
\delta_{\mathrm{in}}=\delta_{\mathrm{out}}
=\frac{\delta_{\mathrm{tot}}}{L_{\mathrm{in}}+L_{\mathrm{out}}},
\]
so
\[
c=c(L_{\mathrm{in}},L_{\mathrm{out}})
=
\ln\!\left(
\frac{1.25(L_{\mathrm{in}}+L_{\mathrm{out}})}{\delta_{\mathrm{tot}}}
\right).
\]
Both $c(L_{\mathrm{in}},L_{\mathrm{out}})$ and
$(L_{\mathrm{in}}^{1/3}+L_{\mathrm{out}}^{1/3})^3$ are nondecreasing in
each release count for $L_{\mathrm{in}},L_{\mathrm{out}}\ge 1$.
Therefore the fixed-$\delta_{\mathrm{tot}}$ comparison is minimized at
$L_{\mathrm{in}}=L_{\mathrm{out}}=1$, which proves the final claim of
Proposition~\ref{prop:lever1_nonuniform} in the main text.
\end{proof}

\paragraph{Cube-root allocation calculation}
\label{app:aux_opt}
The same rule follows from the two-variable Lagrangian
\[
\mathcal L(u,v,\lambda)
=
\frac{c_{\mathrm{in}}L_{\mathrm{in}}}{u^2}
+
\frac{c_{\mathrm{out}}L_{\mathrm{out}}}{v^2}
+
\lambda(u+v-\varepsilon_{\mathrm{tot}}).
\]
The first-order conditions give
$c_{\mathrm{in}}L_{\mathrm{in}}/u^3=c_{\mathrm{out}}L_{\mathrm{out}}/v^3$,
which is exactly the cube-root ratio in main Eq.~\eqref{eq:opt_u_v_ratio}; the
constraint $u+v=\varepsilon_{\mathrm{tot}}$ then determines the unique split.

\section{Unlabeled Quotient Geometry}
\label{app:lever2}

\subsection{Proof of the Local Quotient Chart}
\label{app:sorted_chart}

\begin{proof}[Proof of Proposition~\ref{prop:lever2_sorted_chart_main} in the main text]
Assume the points $\{x_k\}_{k\in C}$ are pairwise distinct, and let
$\bx_C^\uparrow:=(x_{(1)},\dots,x_{(m)})$ denote the sorted
representative. On any neighborhood that avoids the collision hyperplanes
$x_a=x_b$, the permutation that sorts the coordinates is locally constant. Hence the sorting map is locally linear and equal to
$x\mapsto P^\top x$ for some fixed permutation matrix $P$ on that
neighborhood, which proves the existence of the local quotient chart.
The local unlabeled matching problem is therefore attained by comparing
sorted representatives; this gives the local equivalence of the two loss
functions. Under the reparameterization
$\bx_C^\uparrow=P^\top\bx_C$, the Jacobian transforms as
$\bJ_{\rm sym}=\bJ_C P$, and hence
\[
\begin{aligned}
\bI_{\rm sym}(L)
&=
\bJ_{\rm sym}^\top \bSigma_L^{-1}\bJ_{\rm sym}
\\
&=
P^\top \bJ_C^\top \bSigma_L^{-1}\bJ_C P
\\
&=
P^\top \bI_C(L)P.
\end{aligned}
\]
Since trace is invariant under permutation similarity, the benchmark
definition in main Eq.~\eqref{eq:crb_unlab_def} is well posed in sorted
coordinates.
\end{proof}

\subsection{Non-Identifiability under Static-Membership Inclusion}
\label{app:static_membership}

\paragraph*{Static-membership consequence.}
\label{cor:membership_necessity_s3}
Partition $\widetilde{\bJ}=[\widetilde{\bJ}_j\ \widetilde{\bJ}_{-j}]$. If $\mathrm{range}(\widetilde{\bJ}_j)\subseteq \mathrm{range}(\widetilde{\bJ}_{-j})$, then $\bI_{\mathrm{eff}}^{(j)}=\bm0$ and the corresponding CRB diverges. To see this, let $\bP_{-j}:=\widetilde{\bJ}_{-j}\widetilde{\bJ}_{-j}^\dagger$. The range inclusion implies $(\bI-\bP_{-j})\widetilde{\bJ}_j=\bm0$, and main Eq.~\eqref{eq:Ieff_projector} gives the claim.

\section{Query Families}
\label{app:query_families}

\subsection{Polynomial moments: sensitivity and weighted Vandermonde rank}
\label{app:poly}

This appendix records the intermediate calculations suppressed in
Section~\ref{subsec:lever2_poly} of the main text. For
\[
q_\ell(z)=\frac{1}{m}\sum_{k=1}^m z_k^\ell,
\qquad \ell=1,\dots,L,
\]
the Jacobian entries are
\begin{equation}
\label{eq:app_poly_jacobian}
(\bJ_C)_{\ell k}
=
\frac{\ell}{m}z_k^{\ell-1}.
\end{equation}
Under fixed-cardinality swap adjacency with $|z_k|\le R$, changing one record
can alter the released average by at most
\begin{equation}
\label{eq:app_poly_sensitivity}
\Delta_{2,\ell}^{\rm swap}\le \frac{2R^\ell}{m}.
\end{equation}

\paragraph{Weighted Vandermonde rank fact}
Write $\bJ_C=\frac{1}{m}D\,V$, where $D=\mathrm{diag}(1,\dots,L)$ and
$V_{\ell k}=z_k^{\ell-1}$ is a Vandermonde matrix. Distinct nodes imply
$\mathrm{rank}(V)=\min\{L,m\}$, hence
$\mathrm{rank}(\bJ_C)=\min\{L,m\}$. Whitening by $\bSigma_L^{-1/2}$
preserves rank because it is invertible.

Under uniform Basic Composition,
\begin{equation}
\label{eq:app_poly_sigma}
\sigma_\ell^2
=
\kappa_L\left(\Delta_{2,\ell}^{\rm swap}\right)^2
=
\kappa_L\,\frac{4R^{2\ell}}{m^2}.
\end{equation}

Writing $r_k:=z_k/R$ and $\bm{\nu}_\ell:=(r_k^{\ell-1})_{k=1}^m$, we obtain
\begin{align}
\bI_C^{\mathrm{poly}}(L)
&=
\sum_{\ell=1}^L \frac{1}{\sigma_\ell^2}\,\bJ_{\ell,C}^\top\bJ_{\ell,C}
\nonumber\\
&=
\sum_{\ell=1}^L
\frac{m^2}{4\kappa_L R^{2\ell}}
\left(\frac{\ell}{m}z_k^{\ell-1}\right)_{k\in C}
\left(\frac{\ell}{m}z_k^{\ell-1}\right)_{k\in C}^{\!\top}
\nonumber\\
&=
\frac{1}{4R^2\kappa_L}
\sum_{\ell=1}^L \ell^2\bm{\nu}_\ell\bm{\nu}_\ell^\top,
\label{eq:app_poly_rankone}
\end{align}
which matches main Eq.~\eqref{eq:I_poly_rankone_new}. The averaging factor $1/m$ cancels
exactly against the sensitivity scaling, so the remaining competition is purely
between the signal sum and the common composition factor $\kappa_L$.

For a sorted coordinate $z_j$, write $r_j:=z_j/R$. Then the normalized
cumulative signal satisfies
\[
e_j^\top \bA_L(z)e_j
=
\frac{1}{4R^2}\sum_{\ell=1}^L \ell^2 r_j^{2\ell-2}
=
\frac{1}{4R^2}T_j(L),
\]
which is the coordinate-level form used in
Section~\ref{subsec:lever2_poly} of the main text. The corresponding asymptotic implications for
interior and boundary points are developed in \appref{app:query_families}{app:m1_additional}.

\subsection{Polynomial moments: coordinate-wise bounds and stopping rule}
\label{app:m1_additional}
For a sorted coordinate $z_j$ define
\begin{equation}
\label{eq:T_def}
r:=z_j/R,
\qquad
T(L):=\sum_{\ell=1}^L \ell^2 r^{2(\ell-1)} .
\end{equation}
Then $(\bI_C(L))_{jj}=(4\kappa_LR^2)^{-1}T(L)$. If $\bI_C(L)$ is nonsingular, the Schur-complement formula for the $(j,j)$ entry of $\bI_C(L)^{-1}$ gives
\begin{equation}
\label{eq:unlab_crlb_kappa_over_T}
\mathrm{CRB}_{\rm unlab}(L)
\ge (\bI_C(L)^{-1})_{jj}
\ge 4\kappa_LR^2/T(L).
\end{equation}
For $|r|<1$,
\begin{equation}
\label{eq:T_limit}
T(L)\le T(\infty)=\frac{1+r^2}{(1-r^2)^3}<\infty,
\end{equation}
so \eqref{eq:unlab_crlb_kappa_over_T} diverges with $\kappa_L$. If $|r|=1$, then $T(L)=\sum_{\ell=1}^L\ell^2=\Theta(L^3)$ and the coordinate information scales as $T(L)/\kappa_L=\Theta(L/\log L)$ under Basic Composition. In the scalar case, the exact consecutive-risk comparison is
\begin{equation}
\label{eq:m1_stop_appendix}
\frac{\mathrm{CRB}_{m=1}(L+1)}{\mathrm{CRB}_{m=1}(L)}
=
\frac{\kappa_{L+1}}{\kappa_L}
\frac{T(L)}{T(L)+(L+1)^2r^{2L}},
\end{equation}
which gives the stopping rule used to locate the finite interior optimum. For
$q=r^2\in(0,1)$, the exact tail is
\[
T(\infty)-T(L)
=
\frac{q^L\{(L+1)^2-(2L^2+2L-1)q+L^2q^2\}}{(1-q)^3},
\]
so the marginal gain from additional moments is geometrically damped. The useful moment orders are exhausted once $L\log(1/q)$ dominates the
polynomial factor in $L$ and the slowly varying logarithm in $\kappa_L$, giving
the $1/\log(1/q)$ stopping scale quoted in the main text.

\subsection{Polynomial moments: closed-form \texorpdfstring{$m=2$}{m=2} case}
\label{app:m2}

This appendix supplies the short calculations underlying the closed-form
$m=2$ illustration in Section~\ref{subsec:lever2_m2} of the main text.

\paragraph{Degeneracy at $L=1$}
From main Eq.~\eqref{eq:m2_fim}, the determinant is proportional to
\[
D(L):=F(r_1^2,L)F(r_2^2,L)-F(\rho,L)^2.
\]
Since $F(s,1)=1$ for every $s$, we have $D(1)=0$, so the one-moment Fisher
matrix is singular and the unlabeled CRB diverges.

\paragraph{Two-moment simplification}
At $L=2$, the identity $F(s,2)=1+4s$ gives
\begin{align*}
D(2)
&=
\left(1+\frac{4x_1^2}{R^2}\right)
\left(1+\frac{4x_2^2}{R^2}\right)
-\left(1+\frac{4x_1x_2}{R^2}\right)^2\\
&=
\frac{4(x_2-x_1)^2}{R^2}
=
\frac{16\gamma^2}{R^2},
\end{align*}
which is the denominator used to obtain main Eq.~\eqref{eq:m2_L2}. The same calculation
also makes the separation penalty explicit: the CRB blows up as
$\gamma\downarrow 0$.

\paragraph{Canonical estimator}
For $L=2$, the noiseless moment equations
\[
y_1=\frac{x_1+x_2}{2},
\qquad
y_2=\frac{x_1^2+x_2^2}{2}
\]
imply
\[
x_1x_2=2y_1^2-y_2,
\qquad
(x_2-x_1)^2=4(y_2-y_1^2),
\]
which leads directly to the algebraic inversion formula
\begin{equation}
\label{eq:m2_estimator}
\hat x_{(1)}=y_1-\sqrt{(y_2-y_1^2)_+},
\qquad
\hat x_{(2)}=y_1+\sqrt{(y_2-y_1^2)_+}.
\end{equation}
For general $L$, the nonlinear least-squares objective
used in Section~\ref{sec:experiments} of the main text coincides with maximum likelihood under
Gaussian noise, so the same estimator family is the natural numerical benchmark
for testing CRB tightness.

\subsection{Polynomial moments: general-\texorpdfstring{$m$}{m} determinant bound}
\label{app:poly_general_m}

\paragraph*{Determinant/AM--GM bound.}
\label{prop:lever2_det_amgm}
Define $\bW\in\R^{L\times m}$ by $W_{\ell k}:=\ell z_k^{\ell-1}/R^\ell$. Then $\bI_C(L)=(4\kappa_L)^{-1}\bW^\top\bW$. If $L\ge m$ and the points are pairwise distinct, Hadamard's inequality gives
\begin{equation}
\label{eq:det_bound_general_m}
\det(\bI_C(L))
\le
\frac{1}{(4\kappa_L R^2)^m}
\prod_{k=1}^m F(r_k^2,L),
\qquad
r_k:=z_k/R,
\end{equation}
where $F(s,L):=\sum_{\ell=1}^L\ell^2s^{\ell-1}$. If $\lambda_i>0$ are the eigenvalues of $\bI_C(L)$, AM--GM applied to $\lambda_i^{-1}$ yields
\begin{equation}
\label{eq:crb_det_amgm_bound}
\begin{aligned}
\mathrm{CRB}_{\rm unlab}(L)
&=\sum_{i=1}^m\lambda_i^{-1}
\ge
\frac{m}{\det(\bI_C(L))^{1/m}}\\
&\ge
\frac{4m\kappa_LR^2}
{\bigl(\prod_{k=1}^m F(r_k^2,L)\bigr)^{1/m}}.
\end{aligned}
\end{equation}
\paragraph*{General-$m$ interior consequence.}
\label{cor:lever2_generalm_interior}
If $|r_k|<1$ for every $k$, then $F(r_k^2,L)\le(1+r_k^2)/(1-r_k^2)^3<\infty$, while $\kappa_L=\Theta(L^2\log L)$ under Basic Composition. Thus \eqref{eq:crb_det_amgm_bound} diverges as $L\to\infty$; combined with rank deficiency for $L<m$, the interior local risk has a finite minimizer.

\paragraph*{Scalar stopping scale.}
For $m=1$, write $s=r^2<1$. The closed form for the tail of
$T_1(L)=\sum_{\ell=1}^L\ell^2s^{\ell-1}$ is
\[
T_1(\infty)-T_1(L)
=
\frac{s^L\{(L+1)^2-(2L^2+2L-1)s+L^2s^2\}}{(1-s)^3}.
\]
The one-step condition for decreasing local risk is
$T_1(L+1)/T_1(L)>\kappa_{L+1}/\kappa_L$. Because the increment
$T_1(L+1)-T_1(L)=(L+1)^2s^L$ is geometrically damped, the useful moment
orders are exhausted once $L\log(1/s)$ dominates the polynomial factors in
$L$ and the slow variation of the logarithm in $\kappa_L$. This gives the
$1/\log(1/s)$ stopping scale, up to logarithmic factors, stated in the main
text.

\subsection{Threshold releases: counting argument}
\label{app:threshold}

We prove the divergence statement in Proposition~\ref{prop:threshold_truncation}(ii)
of the main text, and then record a matching dense-grid upper bound under
an explicit kernel-conditioning assumption.
Part~(i) gives
\[
\tr(\bA_L(z))=O(L).
\]
Whenever $L\ge L_{\mathrm{id}}$, we have $\bA_L(z)\succ\bm 0$, so
\[
\lambda_{\max}(\bA_L(z))
\le
\tr(\bA_L(z))
=
O(L),
\]
and therefore
\[
\tr\!\bigl(\bA_L(z)^{-1}\bigr)
\ge
\frac{1}{\lambda_{\max}(\bA_L(z))}
=
\Omega\!\left(\frac{1}{L}\right).
\]
The spectral representation in Theorem~\ref{thm:lever2_design_law} of the main
text yields
\[
\mathrm{CRB}_{\rm unlab}(L)
=
\kappa_L\,\tr\!\bigl(\bA_L(z)^{-1}\bigr)
=
\Omega\!\left(\frac{\kappa_L}{L}\right).
\]
Under uniform Basic Composition with Gaussian calibration,
$\kappa_L=\Theta(L^2\log L)$ by main Eq.~\eqref{eq:kappaL_sigma_def}, and hence
\[
\mathrm{CRB}_{\rm unlab}(L)=\Omega(L\log L)\to\infty.
\]
Since the benchmark is infinite for $L<L_{\mathrm{id}}$, finite at the first
identifiable stage, and diverges as $L\to\infty$, the minimizing release count
is finite under the fixed-budget Basic-Composition calibration.

\paragraph*{Dense-grid matching rate.}
The lower bound above is tight under the nondegenerate dense-grid conditions
used in the main-text heuristic. Suppose the threshold grid has spacing
$\eta_L\asymp L^{-1}$, every $z_k$ remains a fixed positive distance from the
grid endpoints, and the continuous kernel Gram matrix
$\bm K_z=[K(z_i-z_j)]_{i,j}$ with $K(d)=\int h'(u)h'(u+d)\,du$ is positive
definite. Standard Riemann-sum convergence gives
$\eta_L\bA_L(z)\to \frac14\bm K_z$ entrywise, hence
$c_-L\bI\preceq\bA_L(z)\preceq c_+L\bI$ for all sufficiently large $L$.
Consequently
\[
\mathrm{CRB}_{\rm unlab}(L)=\Theta(\kappa_L/L)
\]
in this well-conditioned interior case. Close records and boundary truncation
violate the conditioning constants; the accounting calculation is unchanged.

\begin{figure*}[t]
\centering
\includegraphics[width=0.60\textwidth]{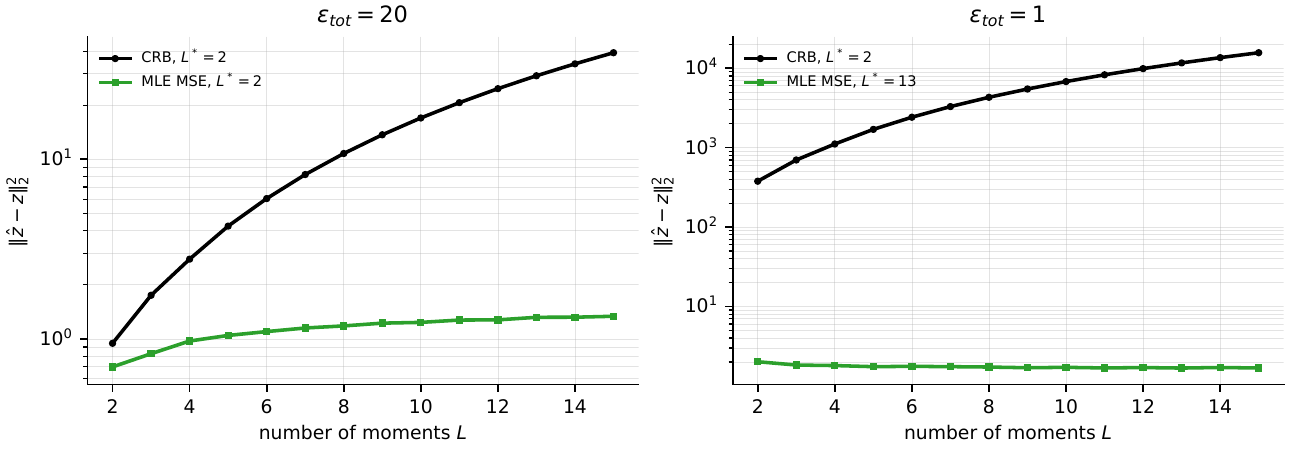}
\caption{Finite-noise fixed-budget release-count sweep in the appendix for the
strict-interior $m=2$ polynomial-moment pair $z=(-0.4,0.7)$, averaged over
$5{,}000$ Gaussian-noise trials per point. At high privacy budgets, the
constrained MLE and local CRB select the same stopping point. At low budgets,
constraints dominate the finite-noise risk, illustrating why the CRB is used
only as a local unbiased benchmark.}
\label{fig:app_mle_vs_L}
\end{figure*}

\phantomsection\label{app:logistic_smoothing}
\paragraph*{Logistic smoothing and numerical details.}
The compact-support assumption in Proposition~\ref{prop:threshold_truncation} can be relaxed for the logistic smoothing used in Section~\ref{sec:experiments}: $h_\tau'(u)=(4\tau)^{-1}\operatorname{sech}^2(u/(2\tau))$ is bounded and has exponential tails, so the same grid/Riemann-sum argument gives $\sum_\ell h_\tau'(z_k-t_\ell)^2=O(L)$ and hence the same $\Omega(\kappa_L/L)$ lower bound. The finite-noise polynomial estimator used in the $m=2$ MLE panels minimizes $(\by-\bp(\bu))^\top\bSigma_L^{-1}(\by-\bp(\bu))$ over $\bu\in[-R,R]^m$, where $\bp(\bu)=(m^{-1}\sum_k u_k^\ell)_{\ell=1}^L$, and then sorts the solution. Runs use the algebraic inversion in \eqref{eq:m2_estimator}, a fixed coarse grid of feasible starts, projected Adam for $60$ iterations with stepsize $0.035$, and the feasible iterate with the best objective value.

\phantomsection\label{app:finite_noise_l_sweep}
\paragraph*{Finite-noise release-count sweep.}
We also ran finite-noise sweeps over $L$ for the $m=2$ polynomial
family. Figure~\ref{fig:app_mle_vs_L} fixes the strict-interior
pair $z=(-0.4,0.7)$ and compares the local CRB with constrained Gaussian
MLE MSE as the number of released moments varies. At
$\varepsilon_{\mathrm{tot}}=20$, where the local approximation is
informative, both curves select $L=2$ in this example. At
$\varepsilon_{\mathrm{tot}}=1$, the constrained estimator is dominated by
box effects and finite-noise bias, so its risk lies far below the local
unbiased CRB and does not validate a CRB stopping rule. This is the
operational boundary emphasized in the main text.

The real-data panels use electricity load-profile values from a real-world grid-data corpus, clipped at the empirical 99th percentile and normalized to $[-1,1]$ before pair/cohort sampling. Additional diagnostics, not shown here, include boundary finite-noise sweeps, finite-noise $L$-sweeps, spectral-bottleneck curves, compact-support threshold controls, accountant/calibration checks, and general-$m$ lower-bound validations.

\end{document}